\journal{Science of Computer Programming}
\newcommand{\ol}{\overline}
\newcommand{\comp}{\mathop{\circ}}
\newcommand{\may}{\diamond}
\newcommand{\must}{\Box}
\newcommand{\tran}[1]{\stackrel{#1}{\rightarrow}}
\newcommand{\lra}{\longrightarrow}
\newcommand{\darrow}{{\downarrow}}
\newcommand{\ccsim}{\lesssim_{cc}}
\newcommand{\pbsim}{\lesssim_{B}}
\newcommand{\cv}{\mathrm{cv}}
\newcommand{\ct}{\mathrm{ct}}
\newcommand{\cc}{\mathit{cc}}
\newcommand{\mts}{\mathit{mts}}
\newcommand{\calI}{\mathcal{I}}
\newcommand{\Sign}{\mathbf{Sign}}
\newcommand{\sen}{\mathit{sen}}
\newcommand{\Mod}{\mathbf{Mod}}
\newcommand{\Set}{\mathbf{Set}}
\newcommand{\Cat}{\mathbf{Cat}}
\newcommand{\calC}{\mathcal{C}}
\newcommand{\calM}{\mathcal{M}}
\newcommand{\calMe}{\mathcal{MC}}
\newcommand{\calN}{\mathcal{N}}
\newcommand{\calT}{\mathcal{T}}
\newcommand{\mtsp}{\mathcal{T}_M}
\newcommand{\ltsp}{\mathcal{T}_L}
\newcommand{\<}{\sqsubseteq}
\begin{document}
\newtheorem{theorem}{Theorem}
\newtheorem{lemma}[theorem]{Lemma}
\newtheorem{proposition}[theorem]{Proposition}
\newtheorem{corollary}[theorem]{Corollary}
\newdefinition{definition}{Definition}
\newdefinition{remark}{Remark}
\newdefinition{example}{Example}
\newproof{proof}{Proof}

\begin{frontmatter}

\title{On the specification of modal systems: a comparison of three frameworks\tnoteref{t1}}


\tnotetext[t1]{Research supported by Spanish
 projects DESAFIOS10 TIN2009-14599-C03-01, TESIS TIN2009-14321-C02-01
 and PROMETIDOS S2009/TIC-1465, and the NILS Mobility
 Project (Abel Extraordinary Chair programme). Luca Aceto and Anna Ing\'olfsd\'ottir have been partially
 supported by the project `Processes and Modal Logics' (project
 nr.~100048021) of the Icelandic Research Fund. }


\author[ice]{Luca Aceto}
\author[ucm]{Ignacio F\'abregas}
\author[ucm]{David de Frutos-Escrig}
\author[ice]{Anna Ing\'olfsd\'ottir}
\author[ucm]{Miguel Palomino}
\address[ice]{ICE-TCS, School of Computer Science, Reykjavik University, Iceland}
\address[ucm]{Departamento de Sistemas Inform\'aticos y Computaci\'on, Universidad Complutense de Madrid, Spain}

\begin{abstract}
This paper studies the relationships between three notions of
 behavioural preorder that have been proposed in the literature:
 refinement over modal transition systems, and the
 covariant-contravariant simulation and the partial bisimulation
 preorders over labelled transition systems. It is shown that there
 are mutual translations between modal transition systems and labelled
 transition systems that preserve, and reflect, refinement and the
 covariant-contravariant simulation preorder. The translations are
 also shown to preserve the modal properties that can be expressed in
 the logics that characterize those preorders. A translation from
 labelled transition systems modulo the partial bisimulation preorder
 into the same model modulo the covariant-contravariant simulation
 preorder is also offered, together with some evidence that the former
 model is less expressive than the latter. In order to gain more
 insight into the relationships between modal transition systems
 modulo refinement and labelled transition systems modulo the
 covariant-contravariant simulation preorder, their connections are
 also phrased and studied in the context of institutions.
\end{abstract}
\end{frontmatter}

\section{Introduction}\label{Sect:intro}

{\em Modal transition systems} (MTSs) have been proposed in,
e.g.,~\cite{La89,LaTh88} as a model of reactive computation based on
states and transitions that naturally supports a notion of {\em
refinement} that is akin to the notion of implication in logical
specification languages. (See the paper~\cite{BoudolL1992} for a
thorough analysis of the connections between specifications given in
terms of MTSs and logical specifications in the setting of a modal
logic that characterizes refinement.) In an MTS, transitions come in
two flavours: the {\em may} transitions and the {\em must}
transitions, with the requirement that each must transition is also a
may transition. The idea behind the notion of refinement over MTSs is
that, in order to implement correctly a specification, an
implementation should exhibit all the transitions that are
required by the specification (these are the must transitions in the
MTS that describes the specification) and may provide the transitions
that are allowed by the specification (these are the may transitions
in the MTS that describes the specification).

The formalism of modal transition systems is intuitive, has several
variants with varying degrees of expressive power and
complexity---see, e.g., the survey paper~\cite{MTSBEATCS}---and
has recently been used as a model for the specification of
service-oriented applications. In particular, results on the
supervisory control (in the sense of Ramadge and Wonham~\cite{RW87})
of systems whose specification is given in that formalism have been
presented in, e.g.,~\cite{DaroundeauDM2010,FP2007}.

The very recent development of the notion of {\em partial
bisimulation} in the setting of labelled transition systems (LTSs)
presented in~\cite{Baetenetal,Baetenetal1} has been explicitly motivated by the
desire to develop a process-algebraic model within which one can study
topics in the field of {supervisory control}. A partial bisimulation
is a variation on the classic notion of bisimulation~\cite{Mi89,Pa81}
in which two LTSs are only required to fulfil the bisimulation
conditions on a subset $B$ of the collection of actions; transitions
labelled by actions not in $B$ are treated as in the standard
simulation preorder. Intuitively, one may think of the actions in $B$
as corresponding to the uncontrollable
events---see~\cite[page~4]{Baetenetal}. The aforementioned paper
offers a thorough development of the basic theory of partial
bisimulation.

Another recent proposal for a simulation-based behavioural relation
over LTSs, called the {\em covariant-contravariant simulation
preorder}, has been put forward in~\cite{FabregasEtAl09-orders}, and
its theory has been investigated further
in~\cite{FabregasEtAl10-logics}. This notion of simulation between
LTSs is based on considering a partition of their set of actions into
three sets: the collection of covariant actions, that of contravariant
actions and the set of bivariant actions. Intuitively, one may think
of the covariant actions as being under the control of the
specification LTS, and transitions with such actions as their label
should be simulated by any correct implementation of the
specification. On the other hand, the contravariant actions may be
considered as being under the control of the implementation (or of the
environment) and transitions with such actions as their label should
be simulated by the specification. The bivariant actions are treated
as in the classic notion of bisimulation. 

It is natural to wonder whether there are any relations among these
three formalisms. In particular, one may ask oneself whether it is
possible to offer mutual translations between specifications given in
those state-transition-based models that preserve, and reflect, the
appropriate notions of behavioural preorder as well as properties
expressed in the modal logics that accompany them---see,
e.g.,~\cite{Baetenetal,BoudolL1992,FabregasEtAl10-logics}. The aim of
this study is to offer an answer to this question. 

In this paper, we study the relationships between refinement over
 modal transition systems, and the covariant-contravariant simulation
 and the partial bisimulation preorders over labelled transition
 systems. We offer mutual translations between modal transition
 systems and labelled transition systems that preserve, and reflect,
 refinement and the covariant-contravariant simulation preorder, as
 well as the modal properties that can be expressed in the logics
 that characterize those preorders. We also give a translation from
 labelled transition systems modulo the partial bisimulation preorder
 into the same model modulo the covariant-contravariant simulation
 preorder, together with some evidence that the former model is less
 expressive than the latter. Finally, in order to gain more insight
 into the relationships between modal transition systems modulo
 refinement and labelled transition systems modulo the
 covariant-contravariant simulation preorder, we phrase and study
 their connections in the context of institutions~\cite{Inst}.

The developments in this paper indicate that the formalism of MTSs may
be seen as a common ground within which one can embed LTSs modulo the
covariant-contravariant simulation preorder or partial
bisimilarity. Moreover, there are some interesting, and non-obvious,
corollaries that one may infer from the translations we provide. See
Section~\ref{Sect:discussion}, where we use our translations to show,
e.g., that checking whether two states in an LTS are related by the
covariant-contravariant simulation preorder can always be reduced to
an equivalent check in a setting without bivariant actions, and
provide a more detailed analysis of the translations. The study of the
relative expressive power of different formalisms is, however, an art
as well as a science, and may yield different answers depending on the
conceptual framework that one adopts for the comparison. For instance,
at the level of institutions~\cite{Inst}, we provide an institution
morphism from the institution corresponding to the theory of MTSs
modulo refinement into the institution corresponding to the theory of
LTSs modulo the covariant-contravariant simulation preorder. However,
we conjecture that there is no institution morphism in the other
direction. The work presented in the study opens several interesting
avenues for future research, and settling the above conjecture is one
of a wealth of research questions we survey in
Section~\ref{Sect:future}.

The remainder of the paper is organized as
follows. Section~\ref{Sect:prelim} is devoted to preliminaries. In
particular, in that section, we provide all the necessary background
on modal and labelled transition systems, modal refinement and the
covariant-contravariant simulation preorder, and the modal logics that
characterize those preorders. In Section~\ref{cc-to-modal:sec}, we
show how one can translate LTSs modulo the covariant-contravariant
simulation preorder into MTSs modulo
refinement. Section~\ref{Sect:R2CC} presents the converse
translation. We discuss the mutual translations between LTSs and MTSs
in Section~\ref{Sect:discussion}.  As described in
Section~\ref{Sect:charforms}, the translation from MTSs and their
modal logic to the realm of LTSs modulo the covariant-contravariant
simulation preorder can be used to transfer the characteristic-formula
result from~\cite{BoudolL1992} to one for LTSs modulo the
covariant-contravariant simulation preorder.  Section~\ref{Sect:PB}
offers a translation from LTSs modulo partial bisimilarity into LTSs
modulo the covariant-contravariant simulation preorder. In
Section~\ref{Sect:institutions}, we study the relationships between
modal transition systems modulo refinement and labelled transition
systems modulo the covariant-contravariant simulation preorder in the
context of institutions. Section~\ref{Sect:future} concludes the paper
and offers a number of directions for future research that we plan to
pursue.

%

Compared to the preliminary version of this study presented
in~\cite{AcetoEtAl10}, besides including the full collection of proofs
of technical results, we have also added all the material on the
relationship between characteristic formulae for both MTSs and
covariant-contravariant LTSs in Section~\ref{Sect:charforms}, as well
as some additional results completing the study of the transformations
between MTSs and LTSs in the setting of institutions in
Section~\ref{Sect:institutions}.

\section{Preliminaries}\label{Sect:prelim}

We begin by introducing modal transition systems, with their associated notion
of (modal) refinement, and labelled transition systems modulo the
covariant-contravariant simulation preorder. For MTS and refinement we refer
the reader to, e.g.,~\cite{BoudolL1992,La89,LaTh88} for motivation and
examples, whereas 
\cite{FabregasEtAl09-orders,FabregasEtAl10-logics} can be consulted for more
information regarding covariant-contravariant simulations.

\subsection{Modal transition systems and refinement}\label{Sect:MTS}

\begin{definition}
For a set of actions $A$, a \emph{modal transition} system (MTS) is a
triple $M=(P,\tran{}_\may,\tran{}_\must)$, where $P$ is a set of states
and ${\tran{}_\may,\tran{}_\must}\subseteq {P\times A\times P}$ are
transition relations such that ${\tran{}_\must} \subseteq
{\tran{}_\may}$. 

An MTS is {\em image finite} iff the set $\{ p' \mid p \tran{a}_\may
p'\}$ is finite for each $p\in P$ and $a\in A$.
\end{definition}
The transitions in $\tran{}_\must$ are called the {\em must
transitions} and those in $\tran{}_\may$ are the {\em may
transitions}. In an MTS, each must transition is also a may
transition, which intuitively means that any required transition is
also allowed.

In what follows, we often identify an MTS, or a transition system of
any of the types that we consider in this paper, with its set of
states. In case we wish to make clear the `ambient' transition system in
which a state $p$ lives, we write $(P,p)$ to indicate that $p$ is to
be viewed as a state in $P$.

The notion of (modal) refinement $\sqsubseteq$ over MTSs is based on the idea that if $p \sqsubseteq q$
then $q$ is a `refinement' of the specification $p$. In that case,
intuitively, $q$ may be obtained from $p$ by possibly 
\begin{itemize}
\item removing some of its may transitions and/or 
\item turning some of
its may transitions into must transitions.
\end{itemize}

\begin{definition}\label{Def:modref}
A relation $R\subseteq P\times Q$ is a {\em refinement relation}
between the two modal transition systems $P$ and $Q$ if, whenever $p \mathrel{R} q$:
\begin{itemize}
\item $p\tran{a}_\must p'$ implies that there exists some $q'$ such that 
 $q\tran{a}_\must q'$ and $p' \mathrel{R} q'$;
\item $q\tran{a}_\may q'$ implies that there exists some $p'$ such that
 $p\tran{a}_\may p'$  and $p' \mathrel{R} q'$.
\end{itemize}
We write $(P,p)\< (Q,q)$ if there exists a refinement $R$ such that
$p\mathrel{R}q$. When the  MTSs are clear from the context we may
simply write $p\< q$.  

\end{definition}

\begin{example}\label{Ex:U}
Consider the MTS $U$ over the set of actions $A$ with $u$ as its only
state, and transitions $u\tran{a}_\may u$ for each $a\in A$. It is well
known, and not hard to see, that $u \sqsubseteq p$ holds for each
state $p$ in any MTS over action set $A$.  The state $u$ is often referred
to as the {\em loosest (or universal) specification}.
\end{example}

\begin{definition}\label{Def:MTSlogic}
Given a set of actions $A$, the collection of\/ {\em Boudol-Larsen's
modal formulae~\cite{BoudolL1992}} is given by the following grammar:
\[
\varphi ::= \bot \mid \top \mid \varphi\land\varphi \mid \varphi\lor\varphi\mid
            [a]\varphi \mid \langle a\rangle\varphi \qquad (a\in A).
\]  
The semantics of these formulae with respect to an MTS $P$ and a state
$p\in P$ is defined by means of the satisfaction relation $\models$,
which is the least relation satisfying the following clauses:
\begin{itemize}
\item [] $(P,p)\models \top$.
\item [] $(P,p)\models \varphi_1\land \varphi_2$ if $(P,p)\models\varphi_1$ and 
 $(P,p)\models\varphi_2$.
\item [] $(P,p) \models\varphi_1\lor \varphi_2$ if $(P,p)\models\varphi_1$ or
 $(P,p)\models\varphi_2$.
\item [] $(P,p)\models [a]\varphi$ if $(P,p')\models \varphi$ for all
 $p\tran{a}_\may p'$.
\item [] $(P,p)\models \langle a\rangle\varphi$ if $(P,p')\models \varphi$ for 
 some $p\tran{a}_\must p'$.
\end{itemize}
We say that a formula is {\em existential} if it does not contain
occurrences of $[a]$-operators, $a\in A$.
\end{definition}
For example, the state $u$ in the MTS $U$ from Example~\ref{Ex:U}
satisfies neither the formula $\langle a\rangle \top$ nor the formula
$[a]\bot$. Indeed, it is not hard to see that $(U,u)$ satisfies a
formula $\varphi$ if, and only if, $\varphi$ is a tautology.

The following result stems from~\cite{BoudolL1992}. 
\begin{proposition}\label{Prop:BLmodchar}
Let $p,q$ be states in image-finite MTSs over the set of actions
$A$. Then $p \sqsubseteq q$ iff the collection of Boudol-Larsen's
modal formulae satisfied by $p$ is included in the collection of
formulae satisfied by $q$.
\end{proposition}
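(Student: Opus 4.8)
The plan is to prove both implications, writing $\mathcal{L}(x)$ for the set of Boudol--Larsen formulae satisfied by a state $x$. The forward direction is a soundness statement, and the converse is where image finiteness is needed.

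For the forward direction I would fix a refinement $R$ with $p\mathrel{R}q$ and prove, by structural induction on $\varphi$, the slightly stronger statement: for all states $x,y$ with $x\sqsubseteq y$, if $x\models\varphi$ then $y\models\varphi$. The constant and Boolean cases are immediate from the semantics and the induction hypothesis. The two modal cases are where the definition of refinement enters, and each fits exactly one clause. For $\langle a\rangle\varphi$, from $x\models\langle a\rangle\varphi$ we obtain a must transition $x\tran{a}_\must x'$ with $x'\models\varphi$; the must clause of refinement supplies $y\tran{a}_\must y'$ with $x'\sqsubseteq y'$, and the induction hypothesis gives $y'\models\varphi$, whence $y\models\langle a\rangle\varphi$. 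Dually, for $[a]\varphi$ I take an arbitrary may transition $y\tran{a}_\may y'$; the may clause of refinement, which runs from $y$ back to $x$, provides $x\tran{a}_\may x'$ with $x'\sqsubseteq y'$, and since $x\models[a]\varphi$ forces $x'\models\varphi$, the induction hypothesis yields $y'\models\varphi$. The covariance of $\langle a\rangle$ with must and the contravariance of $[a]$ with may are exactly aligned with the two clauses of refinement, which is what makes the induction go through.

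For the converse I would show that $R=\{(x,y)\mid\mathcal{L}(x)\subseteq\mathcal{L}(y)\}$ is itself a refinement relation; since $\mathcal{L}(p)\subseteq\mathcal{L}(q)$ gives $p\mathrel{R}q$, this yields $p\sqsubseteq q$. Image finiteness is used here: because $\tran{}_\must\subseteq\tran{}_\may$, each state has only finitely many $a$-successors of either kind. Suppose $x\mathrel{R}y$ and $x\tran{a}_\must x'$. Enumerate the $a$-must successors $y_1,\dots,y_n$ of $y$ and assume, toward a contradiction, that $\mathcal{L}(x')\not\subseteq\mathcal{L}(y_i)$ for every $i$, choosing $\psi_i$ with $x'\models\psi_i$ and $y_i\not\models\psi_i$. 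Then $x\models\langle a\rangle\bigwedge_i\psi_i$, so $\mathcal{L}(x)\subseteq\mathcal{L}(y)$ forces some must successor $y_j$ with $y_j\models\bigwedge_i\psi_i$, contradicting $y_j\not\models\psi_j$ (the empty-conjunction case $n=0$ being covered by $\langle a\rangle\top$). Hence some $y'$ witnesses the must clause.

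The may clause is handled dually, and this is the step I expect to require the most care, because the logic has no negation, so the usual Hennessy--Milner construction of a single distinguishing formula is not directly available and must be replaced by a carefully shaped positive formula. Suppose $x\mathrel{R}y$ and $y\tran{a}_\may y'$; I must produce an $a$-may successor $x'$ of $x$ with $\mathcal{L}(x')\subseteq\mathcal{L}(y')$. Enumerate the $a$-may successors $x_1,\dots,x_m$ of $x$; if none works, pick $\chi_j$ with $x_j\models\chi_j$ and $y'\not\models\chi_j$. The right combinator is now a disjunction under a box: each $x_i\models\bigvee_j\chi_j$, so $x\models[a]\bigvee_j\chi_j$, whereas $y'\not\models\bigvee_j\chi_j$ witnesses $y\not\models[a]\bigvee_j\chi_j$, contradicting $\mathcal{L}(x)\subseteq\mathcal{L}(y)$ (with $m=0$ the empty disjunction $\bot$ and the formula $[a]\bot$ give the contradiction directly). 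Thus the negation-free fragment suffices precisely because $\land$ pairs with the existential/must modality and $\lor$ with the universal/may modality, mirroring the two clauses of refinement.
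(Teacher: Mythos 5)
Your proof is correct: the forward induction correctly pairs $\langle a\rangle$ with the must clause and $[a]$ with the (contravariant) may clause of refinement, and the converse direction correctly replaces the usual negation-based Hennessy--Milner argument by a conjunction of witnesses under $\langle a\rangle$ for the must case and a disjunction under $[a]$ for the may case, with the empty cases handled by $\langle a\rangle\top$ and $[a]\bot$. The paper itself gives no proof---it cites Boudol and Larsen---and your argument is exactly the standard one that citation points to, so there is nothing to add.
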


\begin{remark}\label{rem:image-fin}
As is customary in the literature on modal characterizations of
bisimulation-like relations, the implication from left to right in
Proposition~\ref{Prop:BLmodchar} holds for arbitrary MTSs. On the
other hand, the implication from right to left requires the assumption
that the MTSs be image finite. See, for instance,~\cite{AcetoILS2007}
for a textbook presentation.
\end{remark}

\subsection{Labelled transition systems and covariant-contravariant simulation}\label{Sect:LTS}

A labelled transition system (LTS) is just an MTS with $\tran{}_\may =
\tran{}_\must$. In what follows, we write $\rightarrow$ for the
transition relation in an LTS.

\begin{definition}\label{Def:CCsim}
Let $P$ and $Q$ be two LTSs over the set of actions $A$, and let
$\{A^r,A^l, A^{\mathit{bi}}\}$ be a partition of $A$, where the sets $A^r$, $A^l$ and $A^{\mathit{bi}}$ may be
empty\footnote{Our use of the word `partition' is therefore non-standard.}. An
{\em $(A^r,A^l)$-simulation} (or just a {\em covariant-contravariant
simulation} when the partition of the set of actions $A$ is understood
from the context) between $P$ and $Q$ is a relation $R\subseteq
P\times Q$ such that, whenever $p \mathrel{R} q$, we have:
\begin{itemize}
\item For all $a\in A^r\cup A^{\mathit{bi}}$ and all $p\tran{a} p'$, there
 exists some $q\tran{a}q'$ with $p' \mathrel{R} q'$.

\item For all $a\in A^l\cup A^{\mathit{bi}}$ and all $q\tran{a}q'$, there
 exists some $p\tran{a}p'$ with $p' \mathrel{R} q'$.
\end{itemize}
We will write $(P,p)\ccsim (Q,q)$ if there exists a
covariant-contravariant simulation $R$ such that $p \mathrel{R}
q$. Again, when the `ambient' LTSs $P$ and $Q$ are clear from the
context we may simply write $p\ccsim q$.
\end{definition}
The actions in the set $A^r$ are sometimes called {\em covariant},
those in $A^l$ are {\em contravariant} and the ones in
$A^{\mathit{bi}}$ are {\em bivariant}. When working with
covariant-contravariant simulations, we shall sometimes refer to the
triple $(A^r,A^l, A^{\mathit{bi}})$ as the {\em signature} of the
corresponding LTS, and we will say that such a system is a
covariant-contravariant LTS.

\begin{example}\label{Ex:CCex}
Assume that $a\in A^r$ and $b\in A^l$. Consider the LTS with states
$p,q,r, s$ and transitions $p \tran{a} s$, $p \tran{b} s$, $q \tran{a}
s$ and $r \tran{b} s$.
Then $r \ccsim p \ccsim q$, but none of the converse relations holds.
\end{example}

\begin{definition}\label{Def:formulaeCC}
{\em Covariant-contravariant modal logic} has almost the same syntax as the
one for modal refinement:
\[
\varphi ::= \bot \mid \top \mid \varphi\land\varphi \mid \varphi\lor\varphi\mid
            [b]\varphi \mid \langle a\rangle\varphi \qquad 
            (a\in A^r\cup A^\mathit{bi}, b\in A^l\cup A^\mathit{bi}).
\]  

However, the semantics differs for the modal operators, since we interpret
formulae over ordinary LTSs:
\begin{itemize}
\item [] $(P,p)\models [b]\varphi$ if $(P,p')\models \varphi$ for all
 $p\tran{b} p'$.
\item [] $(P,p)\models \langle a\rangle\varphi$ if $(P,p')\models \varphi$ for 
 some $p\tran{a} p'$.
\end{itemize}
\end{definition}
For example, both $p$ and $q$ from Example~\ref{Ex:CCex} satisfy the
formula $\langle a\rangle\top$, while $r$ does not. On the other hand,
$q$ satisfies the formula $[b]\bot$, but neither $p$ nor $r$ do.

The following result stems from~\cite{FabregasEtAl10-logics}. 
\begin{proposition}\label{Prop:modcharcc}
Let $p,q$ be states in image-finite LTSs with the same signature. Then
$p \ccsim q$ iff the collection of covariant-contravariant modal
formulae satisfied by $p$ is included in the collection of
covariant-contravariant modal formulae satisfied by $q$.
\end{proposition}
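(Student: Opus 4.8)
The plan is to prove the two implications separately, following the template of the classical Hennessy--Milner theorem but paying close attention to the asymmetry built into both the simulation conditions and the syntax of the logic. The left-to-right implication (soundness) should hold for arbitrary LTSs, with image-finiteness needed only for the converse. For the right-to-left implication (completeness) I would exhibit the relation that pairs states whose logical theories are suitably related and verify directly that it is a covariant-contravariant simulation.

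For soundness, I would fix a covariant-contravariant simulation $R$ and prove, by structural induction on $\varphi$, that $x\models\varphi$ implies $y\models\varphi$ for every pair $x\mathrel{R}y$. The Boolean and constant cases are immediate. For $\varphi=\langle a\rangle\psi$ with $a\in A^r\cup A^{\mathit{bi}}$, a witnessing transition $x\tran{a}x'$ with $x'\models\psi$ is matched, by the covariant clause of Definition~\ref{Def:CCsim}, by some $y\tran{a}y'$ with $x'\mathrel{R}y'$; the induction hypothesis applied to $(x',y')$ then yields $y'\models\psi$. For $\varphi=[b]\psi$ with $b\in A^l\cup A^{\mathit{bi}}$, I would take an arbitrary $y\tran{b}y'$ and use the contravariant clause to obtain $x\tran{b}x'$ with $x'\mathrel{R}y'$; since $x\models[b]\psi$ gives $x'\models\psi$, the induction hypothesis delivers $y'\models\psi$. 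The logic only allows diamonds over covariant/bivariant actions and boxes over contravariant/bivariant actions, which is exactly what makes these two cases match the two clauses of the simulation.

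For completeness, I would define $R=\{(p,q)\mid \text{every formula satisfied by } p \text{ is satisfied by } q\}$ and check both clauses of Definition~\ref{Def:CCsim} by contradiction, exploiting image-finiteness to form finitary distinguishing formulae. For the covariant clause, given $a\in A^r\cup A^{\mathit{bi}}$ and $p\tran{a}p'$, image-finiteness of $q$ makes the set of $a$-successors of $q$ finite; if none were $R$-related to $p'$, I would pick for each one a formula satisfied by $p'$ but not by that successor, take their conjunction $\psi$, and observe that $p\models\langle a\rangle\psi$ forces $q\models\langle a\rangle\psi$, contradicting the choice of the conjuncts. For the contravariant clause, given $b\in A^l\cup A^{\mathit{bi}}$ and $q\tran{b}q'$, image-finiteness of $p$ makes the set of $b$-successors of $p$ finite; if none were $R$-related to $q'$, I would pick for each one a formula it satisfies but $q'$ does not, take the disjunction $\psi$, and derive $p\models[b]\psi$, hence $q\models[b]\psi$, so $q'\models\psi$, again a contradiction. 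The empty-successor cases are absorbed by reading the empty conjunction as $\top$ and the empty disjunction as $\bot$.

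The main obstacle I anticipate is the bookkeeping in the completeness direction: the relation $R$ is asymmetric, so one must be careful that the covariant clause uses image-finiteness of the target $q$ together with a conjunction of diamond-detecting formulae, whereas the contravariant clause uses image-finiteness of the source $p$ together with a disjunction of box-detecting formulae. The fact that the two systems share the same signature guarantees that they are interpreted over the same logic, so every formula used in the argument is meaningful for both states.
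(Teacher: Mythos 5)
Your proposal is correct: the soundness induction correctly exploits the fact that the syntax only permits $\langle a\rangle$ for $a\in A^r\cup A^{\mathit{bi}}$ and $[b]$ for $b\in A^l\cup A^{\mathit{bi}}$, so each modal case lines up with exactly one clause of Definition~\ref{Def:CCsim}, and the completeness direction (conjunction of distinguishing formulae under a diamond for the covariant clause, disjunction under a box for the contravariant clause, with the empty cases read as $\top$ and $\bot$) is the standard Hennessy--Milner argument adapted to the asymmetric relation $\mathrm{Th}(p)\subseteq\mathrm{Th}(q)$. The paper itself states this result without proof, citing~\cite{FabregasEtAl10-logics}, and your argument is precisely the expected one.
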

The image-finiteness requirement in the above result is there for the
same reason explained in Remark~\ref{rem:image-fin}.

\section{From covariant-contravariant simulations to modal refinements}
\label{cc-to-modal:sec}

We start our study of the connections between MTSs modulo refinement
and LTSs modulo the covariant-contravariant simulation preorder, by
showing that LTSs modulo $\ccsim$ may be translated into MTSs modulo
$\sqsubseteq$. Such a translation preserves, and reflects, those
preorders and the satisfaction of modal formulae. This result is, at
first, a bit surprising, since covariant-contravariant systems look
more expressive than modal systems because they contain three
different kinds of actions, which moreover are totally independent
from each other, while modal systems only contain two kinds of
transitions, which besides are strongly related, since any must
transition is also a may one.

The key idea behind the translation from LTSs modulo $\ccsim$ into
MTSs modulo $\sqsubseteq$ presented below is that all the transitions
in the source LTS become may transitions in its MTS
representation. Moreover, transitions in the source LTS that are labelled
with actions in $A^r \cup A^{bi}$ yield must transitions in its MTS
representation. However, a difficulty arises from this natural
construction when proving that two states $p$ and $q$ in the original
LTS are related by $\ccsim$ only if they are related by $\sqsubseteq$
in its MTS representation. Indeed, the definition of $\sqsubseteq$
requires that each may transition of $q$ be matched by an
equally-labelled may transition of $p$, whereas the definition of
$\ccsim$ only ensures this matching for transitions that have some
label in $A^l \cup A^{bi}$. We resolve this difficulty by adding
transitions of the form $r\tran{a}_\may u$ ($a\in A^r$) to each state
$r$ of the MTS representation of an LTS. Recall that, as mentioned in
Example~\ref{Ex:U}, $u$ is the only state of the loosest (or
universal) specification in the setting of MTSs modulo refinement. We
discuss this issue in more technical terms in Remark~\ref{Rem:techex}
to follow.

\begin{definition}\label{Def:cc-to-modal}
Let $P$ be a covariant-contravariant LTS with signature $\{A^r,A^l,A^\mathit{bi}\}$. 
Then the associated MTS $\calM(P)$ is constructed as follows:
\begin{itemize}
\item The set of actions of $\calM(P)$ is $A = A^r \cup A^l \cup A^{bi}$.

\item The set of states of $\calM(P)$ is that of $P$ plus a new
 state $u$.

\item For each transition $p\tran{a} p'$ in $P$, add a may transition 
 $p \tran{a}_\may p'$ in $\calM(P)$.

\item For each transition $p \tran{a} p'$ in $P$ with $a\in A^r \cup A^{bi}$,
 add a must transition $p\tran{a}_\must p'$ in $\calM(P)$. 

\item For each $a$ in $A^r$ and state $p$, add the transition 
 $p\tran{a}_\may u$ to $\calM(P)$, as well as transitions 
 $u \tran{a}_\may u$ for each action $a\in A$.   
\item There are no other transitions in $\calM(P)$.
\end{itemize}
\end{definition}
The following proposition essentially states that the translation
$\calM$ is correct.

\begin{proposition}\label{cc-to-modal}
A relation $R$ is a covariant-contravariant simulation between LTSs
$P$ and $Q$ iff $\calM(R)$ is a refinement between $\calM(P)$ and
$\calM(Q)$, where $\calM(R) = R \cup \{(u,q) \mid \textrm{$q$ a state
of $\calM(Q)$}\}$.
\end{proposition}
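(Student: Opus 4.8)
The plan is to prove the two implications separately, in each case by a short case analysis on the membership of pairs and on the labels of transitions. Throughout I will use the fact that, since every pair added to $R$ to form $\calM(R)$ has the fresh state $u$ as its first component and $u$ is not a state of $P$, we have $\calM(R)\cap(P\times Q)=R$; hence from $(p',q')\in\calM(R)$ with $p'$ a state of $P$ one may immediately conclude $p'\mathrel{R}q'$. I will also record once and for all the shape of the transitions of $\calM(P)$: the must transitions out of an original state $p$ are exactly the $p\tran{a}_\must p'$ with $a\in A^r\cup A^{bi}$ arising from $p\tran{a}p'$ in $P$; the may transitions out of $p$ are the original ones $p\tran{a}_\may p'$ together with $p\tran{a}_\may u$ for $a\in A^r$; and $u$ carries only the may loops $u\tran{a}_\may u$ for $a\in A$ (in particular $u$ has no must transitions).

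For the forward implication, assume $R$ is a covariant-contravariant simulation and take a pair of $\calM(R)$. If the pair is $(u,q)$ the refinement clauses hold trivially: the must clause is vacuous because $u$ has no must transitions, and for the may clause any $q\tran{a}_\may q'$ is matched by $u\tran{a}_\may u$, with $(u,q')\in\calM(R)$ since $q'$ is again a state of $\calM(Q)$. If the pair is $(p,q)\in R$, the must clause follows from the covariant condition: a must transition $p\tran{a}_\must p'$ comes from $p\tran{a}p'$ with $a\in A^r\cup A^{bi}$, so $R$ yields $q\tran{a}q'$ with $p'\mathrel{R}q'$, and this is the required must transition $q\tran{a}_\must q'$ in $\calM(Q)$.

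The delicate point is the may clause for a pair $(p,q)\in R$, which is exactly the mismatch flagged in the text: for labels in $A^l\cup A^{bi}$ the contravariant condition of $R$ supplies a matching original transition $p\tran{a}p'$, hence a may transition of $\calM(P)$ with $p'\mathrel{R}q'$; but for a may transition $q\tran{a}_\may q'$ with $a\in A^r$ the simulation $R$ guarantees nothing. Here I exploit the auxiliary transitions to $u$: since $a\in A^r$, the transition $p\tran{a}_\may u$ is present in $\calM(P)$, and taking $p'=u$ works because $(u,q')\in\calM(R)$ for every target $q'$ (whether $q'$ is an original state or $u$ itself). This single observation covers all $A^r$-labelled may transitions of $q$ uniformly and is the crux of the whole construction.

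For the converse, assume $\calM(R)$ is a refinement and take $(p,q)\in R$. For the covariant condition, a transition $p\tran{a}p'$ with $a\in A^r\cup A^{bi}$ is a must transition of $\calM(P)$, so refinement gives $q\tran{a}_\must q'$ with $(p',q')\in\calM(R)$; must transitions of $\calM(Q)$ are only the original ones with original targets, so $q\tran{a}q'$ in $Q$ and, $p'$ being original, $p'\mathrel{R}q'$. For the contravariant condition, a transition $q\tran{a}q'$ with $a\in A^l\cup A^{bi}$ is a may transition of $\calM(Q)$, so refinement gives a may transition $p\tran{a}_\may p'$ of $\calM(P)$ with $(p',q')\in\calM(R)$; since the only may transitions out of an original state that reach $u$ carry labels in $A^r$ and $a\notin A^r$, this must be an original transition $p\tran{a}p'$ with $p'$ original, whence $p'\mathrel{R}q'$. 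No difficulty arises in this direction, as the structure of $\calM$ lets each refinement move be read back as the intended simulation move.
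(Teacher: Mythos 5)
Your proof is correct and follows essentially the same route as the paper's: must transitions are matched via the covariant clause, may transitions with labels in $A^l\cup A^{bi}$ via the contravariant clause, $A^r$-labelled may transitions via the auxiliary edges to $u$, and the converse reads each refinement move back as a simulation move using the disjointness of the label classes. The only cosmetic difference is that you verify the refinement clauses for the pairs $(u,q)$ directly instead of citing the universal-specification property of $u$ from Example~\ref{Ex:U}.
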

\begin{proof} 
We prove the two implications separately. 

$(\Rightarrow)$ Assume that $R$ is a covariant-contravariant
simulation. We shall prove that $\calM(R)$ is a refinement.  

Suppose that $p \mathrel{R} q$ and $q\tran{a}_\may q'$ in
$\calM(Q)$. By the definition of $\calM(Q)$, the transition $q\tran{a}
q'$ is in $Q$.  If $a\in A^l \cup A^{bi}$, since $p \mathrel{R} q$ and
$R$ is a covariant-contravariant simulation, we have that $p\tran{a}
p'$ in $P$ for some $p'$ such that $p' \mathrel{R} q'$. By the
construction of $\calM(P)$, it holds that $p\tran{a}_\may p'$ and we
are done.  If $a\in A^r$, then $p\tran{a}_\may u$ and $u
\mathrel{\calM(R)} q'$, as required.

Assume now that $p \mathrel{R} q$ and $p\tran{a}_\must p'$ in
$\calM(P)$.  Then $p \tran{a} p'$ in $P$ with $a\in A^r \cup
A^{bi}$. As $R$ is a covariant-contravariant simulation, it follows
that $q\tran{a} q'$ in $Q$ for some $q'$ such that $p' \mathrel{R}
q'$.  Since $a\in A^r \cup A^{bi}$, there is a must transition
$q\tran{a}_\must q'$ in $\calM(Q)$, and we are done.  To finish the
proof of this implication, recall that, as shown
in~Example~\ref{Ex:U}, each state $q$ is a refinement of $u$.

$(\Leftarrow)$ Assume that $\calM(R)$ is a refinement.  We shall prove
that $R$ is a covariant-contravariant simulation.

Suppose that
$p \mathrel{R} q$ and $q\tran{a} q'$ in $Q$ with $a\in A^l \cup
A^{bi}$.  Then $q\tran{a}_\may q'$ in $\calM(Q)$.  Since $\calM(R)$ is
a refinement, in $\calM(P)$ we have that $p \tran{a}_\may p'$ for some
$p'$ (different from $u$, because $a\notin A^r$) such that $p'
\mathrel{R} q'$.  By the construction of $\calM(P)$, it follows that
$p\tran{a}p'$ in $P$ and we are done.  

Suppose now that $p \mathrel{R}
q$ and $p\tran{a} p'$ in $P$ with $a\in A^r\cup A^{bi}$.  Then
$p\tran{a}_\must p'$ in $\calM(P)$.  Since $\calM(R)$ is a refinement,
there is some $q'$ (again, different from $u$) such that
$q\tran{a}_\must q'$ in $\calM(Q)$ and $p' \mathrel{R} q'$.  By the 
construction of $\calM(Q)$, it follows that $q \tran{a} q'$ in $Q$ and
we are done.  \qed
\end{proof}

\begin{remark}\label{Rem:techex}
As witnessed by the proof of the above proposition, the role of the
 transitions $p\tran{a}_\may u$ in $\calM(P)$ with $a\in A^r$, where
 $u$ is the loosest specification from Example~\ref{Ex:U}, is to
 satisfy `for free' the proof obligations that are generated, in the
 setting of modal refinement, by representing $A^r$-labelled
 transitions in an LTS $P$ by means of must transitions in $\calM(P)$.
 This is in the spirit of the developments in~\cite{HughesJ04}, where
 the standard simulation preorder is cast in a coalgebraic framework
 by phrasing it in the setting of bisimilarity. The coalgebraic
 recasting of simulation as a bisimulation is done in such a way that
 the added proof obligations that are present in the
 definition of bisimilarity are automatically satisfied.
\end{remark}

\begin{corollary}\label{Cor:CC2R-correct}
Let $P$ and $Q$ be two LTSs with the same signature, and let $p\in P$
and $q\in Q$. Then $(P,p) \ccsim (Q,q)$ iff $(\calM(P), p) \sqsubseteq
(\calM(Q), q)$. 
\end{corollary}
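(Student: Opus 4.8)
The plan is to derive Corollary~\ref{Cor:CC2R-correct} directly from Proposition~\ref{cc-to-modal}, which has already established the correctness of the translation $\calM$ at the level of relations. The corollary is the pointwise consequence of that relational statement, so the proof should be short and should amount to checking that the quantifier ``there exists a relation'' behaves correctly under the map $R \mapsto \calM(R)$ in both directions.

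First I would prove the forward implication. Assume $(P,p) \ccsim (Q,q)$. By Definition~\ref{Def:CCsim} there is a covariant-contravariant simulation $R$ with $p \mathrel{R} q$. By the $(\Rightarrow)$ direction of Proposition~\ref{cc-to-modal}, $\calM(R)$ is a refinement between $\calM(P)$ and $\calM(Q)$. Since $R \subseteq \calM(R)$ and $p \mathrel{R} q$, we have $p \mathrel{\calM(R)} q$, so $\calM(R)$ is a refinement witnessing $(\calM(P),p) \sqsubseteq (\calM(Q),q)$ by Definition~\ref{Def:modref}.

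For the converse, I would assume $(\calM(P),p) \sqsubseteq (\calM(Q),q)$, so there is a refinement $S$ between $\calM(P)$ and $\calM(Q)$ with $p \mathrel{S} q$. Here the mild subtlety is that Proposition~\ref{cc-to-modal} is stated for refinements of the specific shape $\calM(R)$, whereas an arbitrary witnessing refinement $S$ need not literally be of that form. The clean way to handle this is to set $R = S \cap (P \times Q)$, i.e.\ the restriction of $S$ to the original states (discarding any pairs involving the fresh state $u$). One then checks that $S \subseteq \calM(R) = R \cup \{(u,q') \mid q' \textrm{ a state of } \calM(Q)\}$: indeed any pair in $S$ either has first component a state of $P$, in which case it lies in $R$, or has first component $u$, in which case it lies in the added set. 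A quick observation is that $\calM(R)$ is itself a refinement whenever $S$ is, because enlarging a refinement by pairs $(u,q')$ preserves the refinement conditions (the state $u$ of $\calM(P)$ has may transitions $u \tran{a}_\may u$ for every $a$ and no must transitions, so it refines every state, exactly as recorded in Example~\ref{Ex:U}). Since $p$ is a state of $P$ and $p \mathrel{S} q$, we get $p \mathrel{R} q$, and by the $(\Leftarrow)$ direction of Proposition~\ref{cc-to-modal} applied to $\calM(R)$, the relation $R$ is a covariant-contravariant simulation with $p \mathrel{R} q$, hence $(P,p) \ccsim (Q,q)$.

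The only real obstacle is this bookkeeping in the converse direction: matching an arbitrary witnessing refinement against the restricted-form refinements to which Proposition~\ref{cc-to-modal} literally applies. Everything else is a direct instantiation of the already-proved relational statement, so I expect the argument to occupy just a few lines once the restriction $R = S \cap (P \times Q)$ is introduced and the inclusion $S \subseteq \calM(R)$ is noted.
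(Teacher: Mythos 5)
Your overall strategy is the right one, and you are correct to flag the converse direction as the place where Proposition~\ref{cc-to-modal} does not literally apply: an arbitrary refinement $S$ witnessing $(\calM(P),p)\sqsubseteq(\calM(Q),q)$ need not have the shape $\calM(R)$. (The paper treats the corollary as immediate and does not address this point.) However, your bridge contains a genuine flaw: the inclusion $S\subseteq\calM(R)$ for $R=S\cap(P\times Q)$ is false in general. Writing $u_P$ and $u_Q$ for the fresh states of $\calM(P)$ and $\calM(Q)$, a refinement $S$ may perfectly well contain a pair $(p',u_Q)$ with $p'$ a state of $P$; such a pair lies neither in $R$ (since $u_Q\notin Q$) nor in the added set $\{(u_P,q')\mid q'\ \textrm{a state of}\ \calM(Q)\}$. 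Concretely, take $A=A^r=\{a\}$ and let $P$ and $Q$ each consist of a single transitionless state $p$, resp.\ $q$. Then $S=\{(p,q),(p,u_Q),(u_P,u_Q)\}$ is a refinement between $\calM(P)$ and $\calM(Q)$ --- the only proof obligations come from may transitions into $u_Q$, all labelled by $a\in A^r$, and they are discharged by $p\tran{a}_\may u_P$ --- yet $(p,u_Q)\notin\calM(R)$. Consequently your stated reason for $\calM(R)$ being a refinement (``enlarging a refinement by $u$-pairs'') does not apply as given: $\calM(R)$ is an enlargement of $R$, not of $S$, and $R$ by itself need not be a refinement, since restricting $S$ to $P\times Q$ can destroy the closure conditions.

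The good news is that the fact you need is true and the repair is short. Either check directly that $\calM(R)$ is a refinement: must transitions of $\calM(P)$ and $\calM(Q)$ never involve the fresh states, so the must clause for a pair in $R$ is inherited from $S$; a may transition $q_0\tran{a}_\may u_Q$ from $q_0\in Q$ exists only for $a\in A^r$ and is then matched by $p_0\tran{a}_\may u_P$ together with $(u_P,u_Q)\in\calM(R)$; and a may transition into a state of $Q$ is matched as in $S$, the resulting pair lying in $R$ if the chosen target is in $P$ and in the added set if it is $u_P$. Alternatively, observe that the $(\Leftarrow)$ half of the proof of Proposition~\ref{cc-to-modal} never uses the specific shape of $\calM(R)$ beyond the fact that related pairs whose components avoid the fresh states lie in $R$; the same argument applied verbatim to $S$ shows directly that $S\cap(P\times Q)$ is a covariant-contravariant simulation containing $(p,q)$. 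With either repair the corollary follows.
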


\begin{definition}
Let us extend $\calM$ to translate formulae over the modal logic that
characterizes the covariant-contravariant simulation preorder to the
modal logic for modal transition systems by simply defining
$\calM(\varphi) = \varphi$.
\end{definition}

\begin{proposition}\label{cc-to-modal:logic}
If $P$ is an LTS and $\varphi$ is a formula of the logic that characterizes
covariant-contravariant simulation, then for each $p\in P$:
\[
(P,p)\models \varphi \iff (\calM(P),p)\models \calM(\varphi).
\]  
\end{proposition}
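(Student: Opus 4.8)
The plan is to prove this by induction on the structure of the formula $\varphi$. The two logics have essentially the same syntax—the only difference being the semantics of the modal operators—and since $\calM(\varphi)=\varphi$, the real content is showing that the modal clauses agree under the translation $\calM$. The base cases $\bot$ and $\top$ are immediate, and the inductive cases for $\land$ and $\lor$ follow routinely from the induction hypothesis, since the satisfaction clauses for these connectives are identical in both logics.

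The interesting cases are the two modal operators, and the key observation that makes them work is the precise correspondence between transitions in $P$ and transitions in $\calM(P)$ established by Definition~\ref{Def:cc-to-modal}. First consider $\langle a\rangle\psi$, where $a\in A^r\cup A^{\mathit{bi}}$. In the covariant-contravariant logic, $(P,p)\models\langle a\rangle\psi$ holds iff there is some $p\tran{a}p'$ with $(P,p')\models\psi$. In the MTS logic, $(\calM(P),p)\models\langle a\rangle\psi$ holds iff there is some must transition $p\tran{a}_\must p'$ with $(\calM(P),p')\models\psi$. By the construction of $\calM(P)$, the must transitions labelled by $a\in A^r\cup A^{\mathit{bi}}$ are in exact bijective correspondence with the $a$-transitions of $P$ (no must transition of this kind ever leads to the new state $u$), so the two existential quantifications range over the same set of targets $p'$, and the induction hypothesis $(P,p')\models\psi\iff(\calM(P),p')\models\psi$ closes the case.

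Next consider $[b]\psi$, where $b\in A^l\cup A^{\mathit{bi}}$. Here $(P,p)\models[b]\psi$ holds iff $(P,p')\models\psi$ for all $p\tran{b}p'$, while $(\calM(P),p)\models[b]\psi$ holds iff $(\calM(P),p')\models\psi$ for all may transitions $p\tran{b}_\may p'$. The point to check is that the may transitions labelled by $b$ out of a state $p\neq u$ correspond exactly to the $b$-transitions of $P$: every transition $p\tran{b}p'$ in $P$ becomes a may transition, and the only additional may transitions introduced by the construction are those of the form $p\tran{a}_\may u$ with $a\in A^r$, together with the self-loops at $u$. Since $b\in A^l\cup A^{\mathit{bi}}$ is disjoint from $A^r$, none of these extra transitions is labelled by $b$, so the universal quantifications again range over the same targets, and the induction hypothesis finishes the argument.

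The main obstacle, such as it is, lies entirely in being careful about the extra state $u$ and the auxiliary may transitions $p\tran{a}_\may u$ introduced for $a\in A^r$. One must verify that these never interfere with the semantics of the operators in the logic: the $\langle a\rangle$ operator only inspects must transitions, and the fresh $A^r$-labelled transitions to $u$ are may transitions, not must; the $[b]$ operator only inspects $b$-labelled may transitions with $b\in A^l\cup A^{\mathit{bi}}$, whereas the fresh transitions carry labels in $A^r$. Thus the grammar of Definition~\ref{Def:formulaeCC}, which restricts $\langle a\rangle$ to $a\in A^r\cup A^{\mathit{bi}}$ and $[b]$ to $b\in A^l\cup A^{\mathit{bi}}$, is exactly what guarantees that the translation preserves and reflects satisfaction. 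Once this disjointness is noted, the induction goes through without further difficulty.
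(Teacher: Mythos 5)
Your proof is correct and follows essentially the same route as the paper's: structural induction on $\varphi$, with the modal cases resolved by observing that must transitions labelled in $A^r\cup A^{\mathit{bi}}$ correspond exactly to the transitions of $P$, and that the auxiliary may transitions into $u$ all carry labels in $A^r$ and hence are invisible to $[b]$ with $b\in A^l\cup A^{\mathit{bi}}$. The paper records the same key observation (``$p\tran{a}_\may u$ only for $a\in A^r$'') more tersely, so nothing further is needed.
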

\begin{proof}
By structural induction on $\varphi$.  The only non-trivial cases are
the ones corresponding to the modal operators, which we detail
below. (In all the following proofs, the steps labelled `IH' are those
that use the induction hypothesis.)
\begin{itemize}
\item $\langle a\rangle\varphi$, with $a\in A^r\cup A^{bi}$.
\begin{eqnarray*}
(P,p)\models\langle a\rangle\varphi 
&\iff &\textrm{there is $p\tran{a} p'$ in $P$ with $(P,p')\models\varphi$}\\
&\stackrel{\mathrm{IH}}{\iff} &\textrm{there is $p\tran{a}_\must p'$ in $\calM(P)$
 with $(\calM(P),p')\models \calM(\varphi)$}\\
&\iff &(\calM(P),p) \models \langle a\rangle \calM(\varphi)\\
&\iff &(\calM(P),p) \models \calM(\langle a\rangle  \varphi)
\end{eqnarray*}

\item $[a]\varphi$, with $a\in A^l\cup A^{bi}$.
\begin{eqnarray*}
(P,p)\models [a]\varphi 
&\iff &\textrm{$(P,p')\models\varphi$ for all $p\tran{a} p'$ in $P$}\\
&\stackrel{\mathrm{IH}}{\iff} &\textrm{$(\calM(P),p')\models \calM(\varphi)$ 
 for all $p\tran{a}_\may p'$ in $\calM(P)$}\\
&&\textrm{(note that $p\tran{a}_\may u$ only for $a\in A^r$)}\\
&\iff &(\calM(P),p) \models [a]\calM(\varphi)\\
&\iff &(\calM(P),p) \models \calM([a]\varphi) 
\end{eqnarray*}
\end{itemize}
\phantom{This completes the proof.} \qed
\end{proof}

\label{Add1-start}
It is natural to wonder whether it is possible to provide a version of
Proposition~\ref{cc-to-modal:logic} for formulae in Boudol-Larsen
modal logic. In particular, it would be interesting to characterize
the collections of formulae in Boudol-Larsen modal logic whose
satisfaction is preserved by $\calM$, in a suitable technical
sense. In order to address this question, let $\{A^r, A^l,
A^\mathit{bi}\}$ be the signature of some LTS $P$ and let $A=A^r\cup
A^l\cup A^\mathit{bi}$.

Define the transformation $\calMe$ from Boudol-Larsen formulae over
$A$ to covariant-contravariant formulae over the signature $\{A^r,
A^l, A^\mathit{bi}\}$ as follows:

\begin{definition}
$\calMe$ is the unique homomorphism that acts like the identity
function over $\bot$ and $\top$, and that satisfies:
\begin{itemize}
\item $\calMe(\langle a\rangle\varphi)=\left\{ 
        \begin{array}{ll}
		\langle a\rangle\calMe(\varphi) & \mathrm{if\ } a\in A^r\cup A^\mathit{bi} \\
		 \bot     & \mathrm{otherwise}
	    \end{array}
	   \right.$
	   
\item $\calMe([a]\varphi)=\left\{ 
        \begin{array}{ll}
		 [a]\calMe(\varphi) & \mathrm{if\ } a\in A^l\cup A^\mathit{bi} \\
		 \top     & \mathrm{otherwise}
	    \end{array}
	   \right.$

  
\end{itemize}
\end{definition}
Note that $\calMe(\varphi)= \varphi$ when $\varphi$ does not contain
any modal operator.

The interplay between the transformation function $\calM$ between LTSs and MTSs, and the function $\calMe$ operating on Boudol-Larsen formulae is fully described by the following results.

\begin{proposition}\label{Prop:Add1}
Let $P$ an LTS over signature $\{A^r, A^l, A^\mathit{bi}\}$ and let $p\in P$. Suppose that $\varphi$ is a formula in Boudol-Larsen modal logic over $A=A^r\cup A^l\cup A^\mathit{bi}$. Then the following statements hold:
\begin{enumerate}
\item  If $(\calM(P),p)\models\varphi$ then $(P,p)\models\calMe(\varphi)$.

\item  If $(P,p)\models\calMe(\varphi)$ and (either $\varphi$ is existential or $A^r=\emptyset$) then $(\calM(P),p)\models\varphi$.
\end{enumerate}
\end{proposition}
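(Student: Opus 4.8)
The plan is to prove both statements by structural induction on $\varphi$, carried out independently (each part uses only its own induction hypothesis). The constant cases $\bot,\top$ and the boolean cases $\varphi_1\land\varphi_2$, $\varphi_1\lor\varphi_2$ are immediate, since $\calMe$ fixes $\bot,\top$ and commutes with $\land,\lor$, and the satisfaction clauses for these connectives agree in both logics. The content therefore lies entirely in the modal cases, which I would organise according to the variance class of the action labelling the outermost operator. The fact I would isolate at the outset and reuse throughout is the following reading of $\calM(P)$: for $a\in A^r\cup A^\mathit{bi}$ the must transitions $p\tran{a}_\must p'$ of $\calM(P)$ correspond exactly to the transitions $p\tran{a}p'$ of $P$; for $a\in A^l\cup A^\mathit{bi}$ the may transitions $p\tran{a}_\may p'$ of $\calM(P)$ likewise correspond exactly to the transitions $p\tran{a}p'$ of $P$; and the only transitions from a state of $P$ reaching the fresh state $u$ are the may transitions $p\tran{a}_\may u$ with $a\in A^r$. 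In particular every must transition, and every may transition with a non-$A^r$ label, stays within the states of $P$, so the induction hypothesis is always applied at genuine states of $P$ and never at $u$.

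For statement (1) I would treat the four relevant modal shapes as follows. If $\varphi=\langle a\rangle\psi$ with $a\in A^r\cup A^\mathit{bi}$, then $(\calM(P),p)\models\varphi$ yields a must transition $p\tran{a}_\must p'$, hence a transition $p\tran{a}p'$ in $P$ with $(\calM(P),p')\models\psi$; the induction hypothesis gives $(P,p')\models\calMe(\psi)$, whence $(P,p)\models\langle a\rangle\calMe(\psi)=\calMe(\varphi)$. If $\varphi=[a]\psi$ with $a\in A^l\cup A^\mathit{bi}$, the may $a$-transitions of $\calM(P)$ from $p$ are exactly the $P$-transitions $p\tran{a}p'$, so the induction hypothesis transfers the universally quantified property from $\calM(P)$ to $P$. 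The two remaining shapes are precisely where $\calMe$ discards information, and both are trivial for statement (1): if $\varphi=\langle a\rangle\psi$ with $a\in A^l$ then $\calMe(\varphi)=\bot$, but the premise $(\calM(P),p)\models\varphi$ is false since $\calM(P)$ has no $A^l$-labelled must transitions, so the implication holds vacuously; and if $\varphi=[a]\psi$ with $a\in A^r$ then $\calMe(\varphi)=\top$ and the conclusion is automatic.

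For statement (2) I would run the symmetric argument, but now the side condition does genuine work, so I would first record that it is inherited by subformulae: if $A^r=\emptyset$ this is trivial, and otherwise $\varphi$ is existential, hence box-free, hence so are its subformulae, which are therefore existential too. Granting this, the diamond cases $\langle a\rangle\psi$ with $a\in A^r\cup A^\mathit{bi}$ and the box case $[a]\psi$ with $a\in A^l\cup A^\mathit{bi}$ go through by reversing the transition correspondences above and invoking the induction hypothesis, whose side condition is supplied by the inheritance observation; the diamond case $a\in A^l$ is again vacuous because $\calMe(\varphi)=\bot$. The crux---and the step I expect to require the most care---is the covariant box case $\varphi=[a]\psi$ with $a\in A^r$. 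Here $\calMe(\varphi)=\top$, so the premise $(P,p)\models\calMe(\varphi)$ carries no information, while $(\calM(P),p)\models[a]\psi$ genuinely constrains the successor $u$ reached via $p\tran{a}_\may u$; since $u$ has no must transitions, this is exactly where the two sides could diverge. The side condition rescues the statement: $[a]\psi$ contains a box and is thus not existential, which forces $A^r=\emptyset$, contradicting $a\in A^r$; hence the premise of statement (2) is unsatisfiable in this case and the implication holds vacuously. This argument also makes transparent why the hypothesis ``$\varphi$ existential or $A^r=\emptyset$'' cannot be dropped, the obstruction being concentrated in the universal state $u$ introduced by $\calM$.
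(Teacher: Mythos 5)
Your proposal is correct and follows essentially the same route as the paper's own proof: structural induction with a case analysis on the outermost modal operator, using the correspondence between transitions of $P$ and the must/may transitions of $\calM(P)$, dismissing the $\langle a\rangle$-with-$a\in A^l$ and $[a]$-with-$a\in A^r$ cases via $\calMe$ collapsing them to $\bot$ and $\top$, and letting the side condition force $A^r=\emptyset$ in the box case of statement~2. Your explicit remarks that the side condition is inherited by subformulae and that the obstruction is concentrated at the universal state $u$ are only presentational refinements of what the paper does implicitly.
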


\begin{proof}
We prove the two statements separately.

\begin{enumerate}
\item We proceed by induction on the structure of $\varphi$ and focus on the cases involving the modal operators.

\begin{itemize}
\item Case $\varphi=\langle a\rangle\varphi'$. Assume that $(\calM(P),p)\models\langle a\rangle\varphi'$. This means that there is some $p'$ such that $p\tran{a}_\must p'$ in $\calM(P)$ and $(\calM(P),p')\models\varphi'$. By the definition of $\calM$, $a\in A^r\cup A^\mathit{bi}$ and $p\tran{a}p'$. Moreover, by the inductive hypothesis, $(P,p)\models\calMe(\varphi')$. Therefore, $(P,p)\models\langle a\rangle\calMe(\varphi')$, and since $a\in A^r\cup A^\mathit{bi}$,  $(P,p)\models\calMe(\langle a\rangle\varphi')$.

\item Case $\varphi=[a]\varphi'$. Assume that $(\calM(P),p)\models[a]\varphi'$. If $a\in A^r$ then there is nothing to prove, since $\calMe(\varphi)=\top$. Assume therefore that $a\in A^l\cup A^\mathit{bi}$. We will prove that $(p,p)\models [a]\calMe(\varphi')$. To this end, suppose that $p\tran{a} p'$ in $P$ with $a\in A^l\cup A^\mathit{bi}$. By the definition of $\calM$ we have that $p\tran{a}_\may p'$ in $\calM(P)$. Since $(\calM(P),p)\models [a]\varphi'$, it follows that $(\calM(P),p')\models\varphi'$. The inductive hypothesis yields $(P,p')\models\calMe(\varphi')$, which was to be shown.
\end{itemize}

\item Assume that $(P,p)\models\calMe(\varphi)$ and that either $\varphi$ is existential or $A^r=\emptyset$. We show that $(\calM(P),p)\models\varphi$ by induction on the structure of $\varphi$. Again, the only interesting cases are those dealing with the modal operators.

\begin{itemize}
\item Case $\varphi=\langle a\rangle\varphi'$. Since $(P,p)\models\calMe(\varphi)$, we have that $a\in A^r\cup A^\mathit{bi}$ and that $p\tran{a}p'$ for some $p'$ such that $(P,p')\models\calMe(\varphi')$. Since $\varphi'$ is either existential or $A^r=\emptyset$, we may apply the inductive hypothesis to infer that $(\calM(P),p')\models\varphi'$. By the definition of $\calM$, we have that $p\tran{a}_\must p'$. Therefore $(\calM(P),p)\models\langle a\rangle\varphi'$, which was to be shown.

\item Case $\varphi=[a]\varphi'$. Since $\varphi$ is not existential, we have that $A^r=\emptyset$. So $\calMe([a]\varphi')=[a]\calMe(\varphi')$ and $(P,p)\models [a]\calMe(\varphi')$ by assumption. 

Let $p\tran{a}_\may p'$ in $\calM(P)$. As $a\in A^r\cup A^\mathit{bi}$, it follows that $p\tran{a}p'$ in $P$. Therefore $(P,p')\models\calMe(\varphi')$. By induction, $(\calM(P),p')\models\varphi'$. Since $p\tran{a}_\may p'$ was chosen arbitrarily, it follows that $(\calM(P),p)\models [a]\varphi'$, and we are done.\qed
\end{itemize}
\end{enumerate}
\end{proof}

\begin{remark}
The proviso that $\varphi$ is existential or $A^r=\emptyset$ is
necessary in statement 2 of the above proposition. To see this, assume
that $a\in A^r$ and consider the Boudol-Larsen formula $[a]\bot$. Then
the LTS with $0$ as its only state and no transitions satisfies
$\top=\calMe([a]\bot)$. On the other hand, $0\tran{a}_\may u$ holds in
$\calM(0)$, and therefore $(\calM(0),0)\not\models [a]\bot$. This
point is related to some observations we shall present in
Section~\ref{Sect:institutions}.
\end{remark}

From the addition of the sink state $u$ when defining $\calM$ it follows that
the may transitions corresponding to the covariant actions $a\in A^r$ play no
role when comparing the transformation of two covariant-contravariant
LTSs. Unfortunately, in this way the transformation $\calMe$ does not preserve
the satisfaction of formulae in all the cases, as shown by the counterexample
above. 

\begin{remark}[Open question] 
Is there a (compositional) translation $\mathcal{T}$ from Boudol-Larsen logic to
covariant-contravariant modal logic such that
\[(P,p)\models\mathcal{T}(\varphi)\;\textrm{implies}\; (\calM(P),p)\models\varphi
\]
for all LTSs $P$, states $p\in P$ and formulae $\varphi$? 
\end{remark}

\label{Add1-end}

\section{From modal refinements to covariant-contravariant 
simulations}\label{Sect:R2CC}

We next show that MTSs modulo $\sqsubseteq$ may be translated into
LTSs modulo $\ccsim$. As the one studied in the previous section, our
translation preserves, and reflects, those preorders and the
satisfaction of modal formulae. This is, to our mind, a less
surprising result than the one presented in the previous section, even
if in order to obtain it we have to introduce two ``copies'' of each
action $a\in A$: one covariant $cv(a)\in A^r$ to represent must
transitions, and another contravariant $ct(a)\in A^l$ to represent may
transitions. As a matter of fact, we do not need the additional
generality that is offered by the possibility of also having bivariant
actions in the signature to adequately represent any MTS.

\begin{definition}\label{Def:R2CC}
Let $M$ be an MTS with set of actions $A$.  The LTS $\calC(M)$, with
signature $A^r = \{ \cv(a) \mid a\in A\}$, $A^l =\{ \ct(a) \mid
a\in A\}$ and $A^\mathit{bi}=\emptyset$, is constructed as follows:
\begin{itemize}
\item The set of states of $\calC(M)$ is the same as that of $M$.
\item For each transition $p\tran{a}_\may p'$ in $M$, add  
 $p\tran{\ct(a)} p'$ to $\calC(M)$.
\item For each transition $p\tran{a}_\must p'$ in $M$, add
 $p\tran{\cv(a)} p'$ to $\calC(M)$.
\item There are no other transitions in $\calC(M)$.
\end{itemize}
\end{definition}
Observe that the LTSs obtained as a translation of an MTS have the
following properties: 
\begin{enumerate}
\item $A^\mathit{bi}=\emptyset$ and 
\item there is a
bijection $h: A^r \rightarrow A^l$ such that if $p\tran{a}p'$ with
$a\in A^r$ then $p\tran{h(a)} p'$.
\end{enumerate}
The latter requirement corresponds to the fact that each must
transition in an MTS is also a may transition. 

The following proposition states that the translation
$\calC$ is correct.

\begin{proposition}\label{modal-to-cc}
A relation $R$ is a refinement between $P$ and $Q$ iff $R$ is a
covariant-contravariant simulation between $\calC(P)$ and $\calC(Q)$.
\end{proposition}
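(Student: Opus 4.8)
The plan is to prove the biconditional directly by unfolding the two definitions and observing that, under the translation $\calC$, the refinement conditions and the covariant-contravariant simulation conditions become literally the same pair of clauses. The key point to exploit is the bijective correspondence between the action set $A$ of $M$ and the signature of $\calC(M)$: each $a \in A$ gives rise to a covariant action $\cv(a) \in A^r$ and a contravariant action $\ct(a) \in A^l$, with $A^\mathit{bi} = \emptyset$, so the two ``directions'' of the simulation conditions match the two clauses of refinement one-for-one. Since $\calC$ leaves the set of states unchanged, a relation $R \subseteq P \times Q$ is simultaneously a candidate refinement and a candidate covariant-contravariant simulation, and I only need to check that the defining closure conditions coincide.

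Concretely, I would fix $p \mathrel{R} q$ and examine each clause. For the covariant direction of $\ccsim$: because $A^\mathit{bi} = \emptyset$, the covariant clause concerns exactly the actions in $A^r$, i.e.\ actions of the form $\cv(a)$. By Definition~\ref{Def:R2CC}, a transition $p \tran{\cv(a)} p'$ in $\calC(P)$ exists if and only if $p \tran{a}_\must p'$ in $P$, and similarly for $q$ in $\calC(Q)$ versus $Q$. Hence ``for all $\cv(a)$ and all $p \tran{\cv(a)} p'$ there is $q \tran{\cv(a)} q'$ with $p' \mathrel{R} q'$'' is equivalent to ``$p \tran{a}_\must p'$ implies there is $q \tran{a}_\must q'$ with $p' \mathrel{R} q'$,'' which is precisely the first clause of Definition~\ref{Def:modref}. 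Dually, for the contravariant direction of $\ccsim$ the relevant actions are those in $A^l$, i.e.\ of the form $\ct(a)$, and since $\ct(a)$-transitions in $\calC(M)$ correspond exactly to may transitions in $M$, the contravariant clause ``for all $\ct(a)$ and all $q \tran{\ct(a)} q'$ there is $p \tran{\ct(a)} p'$ with $p' \mathrel{R} q'$'' unfolds to ``$q \tran{a}_\may q'$ implies there is $p \tran{a}_\may p'$ with $p' \mathrel{R} q'$,'' which is the second refinement clause.

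Having established that each refinement clause for $R$ on $P,Q$ is logically equivalent to one of the two simulation clauses for $R$ on $\calC(P),\calC(Q)$, the proposition follows immediately in both directions: $R$ satisfies both refinement clauses iff $R$ satisfies both simulation clauses. The argument is completely symmetric in the $(\Rightarrow)$ and $(\Leftarrow)$ directions precisely because the translation sets up a bijection rather than a lossy encoding, so no separate auxiliary construction (unlike the sink state $u$ needed for $\calM$) is required here.

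I do not anticipate a genuine obstacle, since the correctness is really a bookkeeping verification that the translation is defined so as to make the clauses coincide. The one place demanding a little care is making explicit the use of $A^\mathit{bi} = \emptyset$: this guarantees that the covariant clause of $\ccsim$ quantifies over exactly the $A^r$-actions (the $\cv$-images) and the contravariant clause over exactly the $A^l$-actions (the $\ct$-images), with no action being subject to both clauses simultaneously. Were $A^\mathit{bi}$ nonempty, a bivariant action would impose two matching obligations, which has no counterpart in the asymmetric refinement definition; I would therefore note that the emptiness of $A^\mathit{bi}$ in the signature produced by $\calC$ is exactly what keeps the correspondence one-to-one.
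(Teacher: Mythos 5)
Your proposal is correct and follows essentially the same route as the paper's proof: the paper also just unfolds Definition~\ref{Def:R2CC} to match $\cv(a)$-transitions with must transitions and $\ct(a)$-transitions with may transitions, checking the two clauses in each direction, which is exactly your clause-by-clause equivalence presented as two separate implications. Your explicit remark on why $A^{\mathit{bi}}=\emptyset$ matters is a nice touch but does not change the substance of the argument.
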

\begin{proof}
We prove the two implications separately. 

$(\Rightarrow)$ Assume that $p \mathrel{R} q$.  If $p\tran{\cv(a)} p'$
in $\calC(P)$ then, by construction, $p\tran{a}_\must p'$ in $P$.
Since $R$ is a refinement, there is some $q'$ in $Q$ with
$q\tran{a}_\must q'$ and $p' \mathrel{R} q'$. Since $q\tran{\cv(a)}
q'$ is in $\calC(Q)$ by construction, we are done.  Now, assume that
$q\tran{\ct(a)} q'$ in $\calC(Q)$.  Then $q\tran{a}_\may q'$ in $Q$
and, since $R$ is a refinement, $p\tran{a}_\may p'$ in $P$ for some $p'$ 
with $p' \mathrel{R} q'$.  By construction, $p\tran{\ct(a)} p'$ is 
in $\calC(P)$ and we are done.

$(\Leftarrow)$ Assume that $p \mathrel{R} q$.  If $q\tran{a}_\may q'$
in $Q$ then $q\tran{\ct(a)} q'$ in $\calC(Q)$ and, since $R$ is a
covariant-contravariant simulation, $p\tran{\ct(a)} p'$ for some $p'$
in $\calC(P)$ such that $p' R q'$; hence $p\tran{a}_\may p'$ in $P$ as
required.  Now, if $p\tran{a}_\must p'$ in $P$ then $p\tran{\cv(a)}
p'$ in $\calC(P)$.  Since $R$ is a covariant-contravariant simulation,
there is some $q'$ in $\calC(Q)$ with $q\tran{\cv(a)} q'$ and $p'
\mathrel{R} q'$, and therefore $q\tran{a}_\must q'$ in $Q$.  \qed
\end{proof}

\begin{corollary}\label{Cor:R2CC-correct}
Let $P$ and $Q$ be two MTSs with the same action set, and let $p\in P$
and $q\in Q$. Then $(P,p) \sqsubseteq (Q,q)$ iff $(\calC(P), p) \ccsim
(\calC(Q), q)$. 
\end{corollary}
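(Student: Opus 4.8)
The plan is to derive the corollary directly from Proposition~\ref{modal-to-cc} by unfolding the definitions of the two preorders, each of which is existentially quantified over a witnessing relation. The crucial structural fact I would lean on is that, by Definition~\ref{Def:R2CC}, the translation $\calC$ leaves the carrier untouched: $\calC(M)$ has exactly the same set of states as $M$. Consequently a relation $R$ between the states of $P$ and $Q$ is literally the same object as a relation between the states of $\calC(P)$ and $\calC(Q)$, and there is no need to transport or re-index $R$ across the translation. This observation is what makes the corollary an immediate consequence of the proposition rather than requiring any fresh combinatorics on transitions.

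For the forward implication I would assume $(P,p)\sqsubseteq(Q,q)$, which by Definition~\ref{Def:modref} supplies a refinement $R$ between $P$ and $Q$ with $p\mathrel{R}q$. The ``only if'' direction of Proposition~\ref{modal-to-cc} then tells me that this very same $R$ is a covariant-contravariant simulation between $\calC(P)$ and $\calC(Q)$; since $p\mathrel{R}q$ still holds, Definition~\ref{Def:CCsim} yields $(\calC(P),p)\ccsim(\calC(Q),q)$. For the converse I would run the argument backwards: assuming $(\calC(P),p)\ccsim(\calC(Q),q)$, I pick a covariant-contravariant simulation $R$ with $p\mathrel{R}q$, apply the ``if'' direction of Proposition~\ref{modal-to-cc} to conclude that $R$ is a refinement between $P$ and $Q$, and read off $(P,p)\sqsubseteq(Q,q)$.

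I expect no genuine obstacle here, since all the content of the equivalence already resides in Proposition~\ref{modal-to-cc}, and the corollary merely promotes the statement ``there exists a relation that is a refinement (respectively, a covariant-contravariant simulation)'' to the level of the two preorders. The single point that must be made explicit is that $\calC$ preserves the set of states, which is precisely what licenses reusing $R$ unchanged as the witness on both sides; without recording that fact one might worry about having to translate the relation itself, whereas here the identity on states does all the work.
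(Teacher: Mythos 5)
Your proof is correct and matches the paper's intent exactly: the corollary is stated without a separate proof precisely because it follows from Proposition~\ref{modal-to-cc} by unfolding the existential quantification over witnessing relations, using the fact that $\calC$ preserves the set of states so the same relation $R$ serves on both sides. Your explicit remark about not needing to transport $R$ across the translation is a sensible clarification of the step the paper leaves implicit.
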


\begin{remark}\label{Rem:injC}
It is easy to see that the mapping $\calC$ is injective. Therefore,
given an LTS $P$ that is in the range of $\calC$, we may write
$\calC^{-1}(P)$ for the unique MTS whose $\calC$-image is $P$.
\end{remark}

Again, we can also extend the translation $\calC$ to also translate
modal formulae. However, in this case, the change of alphabet requires
a simple, but non-trivial, definition of the extension.

\begin{definition}
Let us extend $\calC$ to translate formulae over the modal logic for
modal transition systems with set of actions $A$ to the modal logic
that characterizes covariant-contravariant simulation with signature
$A^r = \{ \cv(a) \mid a\in A\}$, $A^l =\{ \ct(a) \mid a\in A\}$ and
$A^\mathit{bi}=\emptyset$.
\begin{itemize}
\item $\calC(\bot) = \bot$.
\item $\calC(\top) = \top$.
\item $\calC(\varphi\land\psi) = \calC(\varphi)\land \calC(\psi)$.
\item $\calC(\varphi\lor\psi) = \calC(\varphi)\lor \calC(\psi)$.
\item $\calC(\langle a\rangle\varphi) = \langle \cv(a)\rangle \calC(\varphi)$.
\item $\calC([a]\varphi) = [\ct(a)]\calC(\varphi)$.
\end{itemize}
\end{definition}

\begin{proposition}\label{modal-to-cc:logic}
If $P$ is an MTS and $\varphi$ is a Boudol-Larsen modal formula, then
for each $p\in P$:
\[
(P,p)\models \varphi \iff (\calC(P),p)\models \calC(\varphi).
\]  
\end{proposition}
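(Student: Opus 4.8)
The plan is to prove the biconditional by structural induction on the Boudol-Larsen formula $\varphi$, exactly as in the proof of Proposition~\ref{cc-to-modal:logic}. The base cases $\varphi=\bot$ and $\varphi=\top$ are immediate from $\calC(\bot)=\bot$ and $\calC(\top)=\top$, and the cases for $\land$ and $\lor$ follow directly from the induction hypothesis together with the homomorphic clauses $\calC(\varphi\land\psi)=\calC(\varphi)\land\calC(\psi)$ and $\calC(\varphi\lor\psi)=\calC(\varphi)\lor\calC(\psi)$. So the only cases requiring attention are the two modal operators.

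The conceptual engine of the whole argument is the fact that the definition of $\calC$ sets up an exact, label-respecting correspondence between the two transition modalities of $M$ and the two halves of the signature of $\calC(M)$: by construction, $p\tran{a}_\must p'$ holds in $M$ if and only if $p\tran{\cv(a)}p'$ holds in $\calC(M)$, and $p\tran{a}_\may p'$ holds in $M$ if and only if $p\tran{\ct(a)}p'$ holds in $\calC(M)$. Because $\cv$ and $\ct$ are injective and their images $A^r$ and $A^l$ are disjoint, this correspondence is a genuine bijection on transitions, with no spurious transitions introduced on either side. First I would record this correspondence explicitly, since every modal step then reduces to a direct substitution.

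For $\varphi=\langle a\rangle\psi$, I would chain the equivalences: $(P,p)\models\langle a\rangle\psi$ iff there exists $p\tran{a}_\must p'$ with $(P,p')\models\psi$; by the transition correspondence this is equivalent to the existence of $p\tran{\cv(a)}p'$ with $(P,p')\models\psi$; by the induction hypothesis applied to $\psi$ this becomes the existence of $p\tran{\cv(a)}p'$ with $(\calC(P),p')\models\calC(\psi)$; and this is by definition $(\calC(P),p)\models\langle\cv(a)\rangle\calC(\psi)=\calC(\langle a\rangle\psi)$. The case $\varphi=[a]\psi$ runs symmetrically using the may/$\ct$ correspondence and the universal reading of the box: $(P,p)\models[a]\psi$ iff $(P,p')\models\psi$ for all $p\tran{a}_\may p'$, which passes through the $\ct$-correspondence and the induction hypothesis to $(\calC(P),p)\models[\ct(a)]\calC(\psi)=\calC([a]\psi)$.

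I expect there to be essentially no obstacle here, which is itself the point worth emphasizing. Unlike the $\calM$ direction of Section~\ref{cc-to-modal:sec}, where representing covariant actions by must transitions forced the introduction of the universal state $u$ and consequently broke the clean preservation of $[a]$-formulae (cf.~the counterexample following Proposition~\ref{Prop:Add1}), the translation $\calC$ introduces no auxiliary states and no extra transitions. Hence there are no additional proof obligations to discharge and no side conditions on $\varphi$; the induction goes through verbatim in both directions simultaneously, and the logical equivalence holds for the full Boudol-Larsen logic rather than only for an existential fragment.
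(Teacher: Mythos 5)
Your proposal is correct and follows essentially the same route as the paper: structural induction on $\varphi$ with the two modal cases handled by the exact may/$\ct$ and must/$\cv$ transition correspondence set up by Definition~\ref{Def:R2CC}, which the paper uses implicitly in its chains of equivalences and which you merely make explicit as a preliminary observation. Your closing remark contrasting this with the $\calM$ direction is accurate but not part of the proof itself.
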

\begin{proof}
By structural induction on $\varphi$, with the only non-trivial cases being
those that correspond to the modal operators:
\begin{itemize}
\item $[a]\varphi$, with $a\in A$.
\begin{eqnarray*}
(P,p)\models [a]\varphi
&\iff& \textrm{$(P,p')\models \varphi$ for all $p\tran{a}_\may p'$ in $P$} \\
&\stackrel{\mathrm{IH}}{\iff}& \textrm{$(\calC(P),p')\models \calC(\varphi)$
 for all $p\tran{\ct(a)} p'$ in $\calC(P)$}\\
&\iff& (\calC(P),p)\models [\ct(a)]\calC(\varphi)
\end{eqnarray*}

\item $\langle a\rangle\varphi$, with $a\in A$.
\begin{eqnarray*}
(P,p)\models \langle a\rangle\varphi
&\iff& \textrm{$(P,p')\models \varphi$ for some $p\tran{a}_\must p'$ in $P$} \\
&\stackrel{\mathrm{IH}}{\iff}& \textrm{$(\calC(P),p')\models \calC(\varphi)$
 for some $p\tran{\cv(a)} p'$ in $\calC(P)$}\\
&\iff& (\calC(P),p)\models \langle\cv(a)\rangle\calC(\varphi)
\end{eqnarray*}
\qed
\end{itemize}
\end{proof}

\begin{remark}
In fact, it is very easy to see that the translations $\calM$ and
$\calC$ also preserve, and reflect, the satisfaction of formulae in
the extensions of the logics from Definitions~\ref{Def:MTSlogic}
and~\ref{Def:formulaeCC} with infinite conjunctions and disjunctions.
\end{remark}

\label{Add2-start}
It is natural to wonder whether it is possible to provide a version of
Proposition~\ref{modal-to-cc:logic} for formulae in
covariant-contravariant modal logic over the signature $A^r = \{
\cv(a) \mid a\in A\}$, $A^l =\{ \ct(a) \mid a\in A\}$ and
$A^\mathit{bi}=\emptyset$. To this end, let $\calC^{-1}$ denote the
inverse of $\calC$ over Boudol-Larsen modal formulae defined in the
obvious way. We then have that:

\begin{proposition}\label{prop:Add2}
Let $P$ be an MTS over the set of actions $A$, and let $\varphi$ be a
covariant-contravariant modal formula over the signature $A^r = \{
\cv(a) \mid a\in A\}$, $A^l =\{ \ct(a) \mid a\in A\}$ and
$A^\mathit{bi}=\emptyset$. Then, for each $p\in P$.
\[
(P,p) \models \calC^{-1}(\varphi) \iff (\calC(P),p) \models \varphi . 
\]
\end{proposition}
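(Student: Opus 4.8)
The plan is to derive this statement directly from Proposition~\ref{modal-to-cc:logic}, rather than redoing a structural induction from scratch. The key observation is that, on formulae, the map $\calC$ defined just before Proposition~\ref{modal-to-cc:logic} is a bijection between the Boudol-Larsen modal formulae over $A$ and the covariant-contravariant modal formulae over the signature $A^r = \{\cv(a)\mid a\in A\}$, $A^l = \{\ct(a)\mid a\in A\}$, $A^\mathit{bi}=\emptyset$. Indeed, $\calC$ acts as the identity on $\bot,\top$, commutes with $\land$ and $\lor$, and sends $\langle a\rangle$ to $\langle\cv(a)\rangle$ and $[a]$ to $[\ct(a)]$. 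Since $A^\mathit{bi}=\emptyset$, the only modal operators available in the target logic are exactly $\langle\cv(a)\rangle$ (because $A^r\cup A^\mathit{bi}=A^r$) and $[\ct(a)]$ (because $A^l\cup A^\mathit{bi}=A^l$), which are precisely the images of $\langle a\rangle$ and $[a]$. Hence $\calC$ is onto the target logic, and it is visibly injective, so its inverse $\calC^{-1}$ is well defined on all such formulae and satisfies $\calC(\calC^{-1}(\varphi))=\varphi$.

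First I would establish this bijectivity claim formally, by exhibiting $\calC^{-1}$ explicitly (it fixes $\bot$ and $\top$, commutes with $\land$ and $\lor$, and sends $\langle\cv(a)\rangle$ back to $\langle a\rangle$ and $[\ct(a)]$ back to $[a]$) and checking by structural induction that $\calC^{-1}\comp\calC$ and $\calC\comp\calC^{-1}$ are both the identity on their respective domains. This is the step where the hypothesis $A^\mathit{bi}=\emptyset$ and the specific shape of the signature matter, so it is the only place requiring any care; once it is in place, the rest is immediate.

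Finally, given an arbitrary covariant-contravariant modal formula $\varphi$ over the signature in question, I would set $\psi=\calC^{-1}(\varphi)$, which is a Boudol-Larsen modal formula over $A$. Applying Proposition~\ref{modal-to-cc:logic} to $\psi$ yields
\[
(P,p)\models\psi \iff (\calC(P),p)\models\calC(\psi).
\]
Substituting $\psi=\calC^{-1}(\varphi)$ and using $\calC(\calC^{-1}(\varphi))=\varphi$ from the bijectivity step turns this into exactly
\[
(P,p)\models\calC^{-1}(\varphi)\iff(\calC(P),p)\models\varphi,
\]
as required. I do not expect any genuine obstacle: the whole argument is a change of variable in the already-proven Proposition~\ref{modal-to-cc:logic}, and the only non-routine point is confirming that $\calC$ is a bijection onto the covariant-contravariant formulae over this particular signature, which hinges on $A^\mathit{bi}$ being empty. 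As a fallback, the same statement could instead be proved by a direct induction on $\varphi$ mirroring the proof of Proposition~\ref{modal-to-cc:logic}, but the reduction above is cleaner.
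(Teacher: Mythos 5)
Your proposal is correct and follows essentially the same route as the paper: the paper's proof also applies Proposition~\ref{modal-to-cc:logic} to $\calC^{-1}(\varphi)$ and concludes via $\calC(\calC^{-1}(\varphi))=\varphi$. You simply spell out the bijectivity of $\calC$ on formulae (where $A^\mathit{bi}=\emptyset$ matters) in more detail than the paper, which leaves the definition of $\calC^{-1}$ as ``obvious.''
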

\begin{proof}
By Proposition~\ref{modal-to-cc:logic}, 
\[
(P,p) \models \calC^{-1}(\varphi) \iff (\calC(P),p) \models \calC(\calC^{-1}(\varphi)) . 
\]
The claim now follows since $\calC(\calC^{-1}(\varphi)) = \varphi$.
\qed
\end{proof}

The above observation is in contrast with the result we established
earlier in Proposition~\ref{Prop:Add1}. This may be taken to be a
first indication that the translation from MTSs to LTSs, and the
accompanying one for the associated modal logics, is ``more natural''
than the one from LTSs to MTSs provided in
Section~\ref{cc-to-modal:sec}. We will explore this issue in more
detail in Section~\ref{Sect:institutions}.
\label{Add2-end}

\section{Discussion of the previous translations}\label{Sect:discussion}

In Sections~\ref{cc-to-modal:sec} and \ref{Sect:R2CC}, we saw that it is
possible to translate back and forth between the world of LTSs modulo
the covariant-contravariant simulation preorder and MTSs modulo
refinement. The translations we have presented preserve, and reflect,
the preorders and the relevant modal formulae. There are, however,
some interesting, and non-obvious, corollaries that one may infer from
the translations.

To begin with, assume that $P$ and $Q$
are two LTSs with the same signature, with $A^\mathit{bi}\neq
\emptyset$. Let $p\in P$ and $q\in Q$ be such that $(P,p) \ccsim (Q,q)$. By
Corollary~\ref{Cor:CC2R-correct}, we know that this holds exactly when
$(\calM(P), p) \sqsubseteq (\calM(Q), q)$. Using
Corollary~\ref{Cor:R2CC-correct}, we therefore have that checking
whether $(P,p) \ccsim (Q,q)$ is equivalent to verifying whether
$(\calC(\calM(P)), p) \ccsim (\calC(\calM(Q)), q)$. Note now that
$A^\mathit{bi}$ is empty in the signature for the LTSs
$\calC(\calM(P))$ and $\calC(\calM(Q))$. Therefore, checking whether
two states are related by the covariant-contravariant simulation
preorder can always be reduced to an equivalent check in a setting
without bivariant actions.

It is also natural to wonder whether there is any relation between a
state $p$ in an LTS $P$ and the equally-named state in
$\calC(\calM(P))$. Similarly, one may wonder whether there is any
relation between a state $p$ in an MTS $P$ and the equally-named state
in $\calM(\calC(P))$. In both cases, we are faced with the difficulty
arising from the fact that the transition systems resulting from the
compositions of the two translations are over the alphabet $\{
\cv(a),\ct(a) \mid a\in A\}$, whereas the original system $P$ had
transitions labelled by actions in $A$.  In order to overcome this
difficulty, we consider the renaming $\rho: \{ \cv(a),\ct(a) \mid a\in
A\} \rightarrow A$ that maps both $\cv(a)$ and $\ct(a)$ to $a$, for
each $a\in A$. Besides, for any transition system $P$ over the set of
actions $\{ \cv(a),\ct(a) \mid a\in A\}$, we write $\rho(P)$ for the
transition system that is obtained from $P$ by renaming the label of
each transition in $P$ as indicated by $\rho$. Then we have the
following proposition:

\begin{proposition}\label{Prop:approx}
\quad
\begin{enumerate}
\item Let $P$ be an MTS and $p\in P$. Then we have $(\rho(\calM(\calC(P))),p)
\sqsubseteq (P,p)$.
\item Let $P$ be an LTS and $p\in P$. Then we have $(P,p) \ccsim
(\rho(\calC(\calM(P))),p)$.
\item In general, $(P,p) \sqsubseteq (\rho(\calM(\calC(P))),p)$ does
not hold for an arbitrary MTS $P$ and any state $p\in P$; nor does
$(\rho(\calC(\calM(P))),p) \ccsim (P,p)$ hold, for an arbitrary LTS $P$ and any state
$p\in P$.
\end{enumerate}
\end{proposition}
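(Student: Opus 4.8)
The plan is to first compute explicit descriptions of the two round-trip systems, and then to exhibit the relevant relations directly for the positive statements and minimal counterexamples for the negative ones.

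\emph{Unfolding the constructions.} First I would determine exactly what the systems $\rho(\calM(\calC(P)))$ and $\rho(\calC(\calM(P)))$ look like, by tracing through Definitions~\ref{Def:R2CC} and~\ref{Def:cc-to-modal} and then applying the renaming $\rho$. One finds that $\rho(\calM(\calC(P)))$ is the MTS $P$ together with a fresh universal state $u$, keeping every may and must transition of $P$ unchanged, adding the may transitions $p\tran{a}_\may u$ for every state $p$ and every action $a\in A$ (and the loops $u\tran{a}_\may u$), and---crucially---adding no new \emph{must} transitions. Dually, $\rho(\calC(\calM(P)))$ is the LTS $P$, retaining its original signature $\{A^r,A^l,A^\mathit{bi}\}$, together with a fresh universal state $u$, keeping every transition of $P$ and adding transitions $p\tran{a} u$ for every state $p$ but \emph{only} for $a\in A^r$ (and the loops $u\tran{a}u$). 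I expect getting these two descriptions right---in particular, noticing that the only extra transitions to $u$ are may transitions on the MTS side and covariant transitions on the LTS side---to be the crux of the whole argument.

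\emph{Statements 1 and 2.} For both positive statements I would verify that the identity relation $\{(p,p)\mid p\in P\}$ works. For statement~1, I would check it is a refinement from $\rho(\calM(\calC(P)))$ to $P$: the must clause is immediate because the round trip adds no must transitions, while the may clause holds because every may transition demanded on the right (in $P$) is literally present on the left. The extra transitions to $u$ are transitions of the left-hand system, and the may clause constrains only transitions of the right-hand system, so they impose no proof obligation. For statement~2, I would check that the same identity relation is a covariant-contravariant simulation from $P$ to $\rho(\calC(\calM(P)))$: the covariant/bivariant clause holds since all transitions of $P$ survive on the right, and the contravariant/bivariant clause holds because for $a\in A^l\cup A^\mathit{bi}$ the round trip adds no new $a$-transitions (the extra ones to $u$ carry only covariant labels), so every such transition on the right already lies in $P$.

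\emph{Statement 3.} For the negative statements I would produce minimal counterexamples built from the single-state system with no transitions. On the MTS side, take $P$ with one state $p$, no transitions, over $A=\{a\}$; then $\rho(\calM(\calC(P)))$ contains $p\tran{a}_\may u$, and the may clause of refinement would force this transition to be matched by some $p\tran{a}_\may p'$ in $P$, which does not exist. On the LTS side, take the same one-state system with signature $A^r=\{a\}$ and $A^l=A^\mathit{bi}=\emptyset$; then $\rho(\calC(\calM(P)))$ contains $p\tran{a}u$ with $a\in A^r$, and the covariant clause would force a matching $p\tran{a}p'$ in $P$, which again does not exist. In each case the obligation fails already for the pair $(p,p)$, hence for \emph{every} candidate relation containing that pair, yielding $(P,p)\not\sqsubseteq(\rho(\calM(\calC(P))),p)$ and $(\rho(\calC(\calM(P))),p)\not\ccsim(P,p)$ respectively. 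The only real care needed here is to keep the direction of the preorders straight and to ensure $A^r\neq\emptyset$ in the LTS example, so that the offending transition to $u$ is actually created.
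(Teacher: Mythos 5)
Your proposal is correct and follows essentially the same route as the paper: the identity relation on the states of $P$ witnesses both positive statements (the paper details only statement~2, noting that the extra transitions into $u$ carry only $A^r$/covariant labels and hence impose no matching obligation), and the negative statements are refuted by exactly the same one-state, transition-free counterexamples over $A=\{a\}$ and $A^r=\{a\}$. Your preliminary explicit unfolding of the two round-trip systems is a presentational difference only.
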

\begin{proof}
We limit ourselves to detailing the proof for the second statement and
to offering counter-examples proving the third one. The proof of the
first claim follows similar lines to the one for the second, and in fact is
even simpler.

In order to prove the second claim, it suffices to show that the
identity relation over $P$ is a covariant-contravariant simulation
between $P$ and $\rho(\calC(\calM(P)))$. To this end, assume first
that $p \tran{a} p'$ in $P$ for some $a\in A^r\cup
A^{\mathit{bi}}$. Then $p\tran{a}_\must p'$ in $\calM(P)$. Therefore,
$p\tran{\cv(a)} p'$ in $\calC(\calM(P))$ and $p \tran{a} p'$ in
$\rho(\calC(\calM(P)))$.

Assume now that $p \tran{a} p'$ in $\rho(\calC(\calM(P)))$ for some
$a\in A^l\cup A^{\mathit{bi}}$. This means that either $p\tran{\cv(a)}
p'$ or $p\tran{\ct(a)} p'$ in $\calC(\calM(P))$. We consider these two
possibilities separately.
\begin{itemize}
\item Suppose that $p\tran{\cv(a)} p'$ in $\calC(\calM(P))$. Then
$p\tran{a}_\must p'$ in $\calM(P)$. This means that $p \tran{a} p'$ in
$P$ and $a\in A^r\cup A^{\mathit{bi}}$. By our assumption, it must be
the case that $a\in A^{\mathit{bi}}$, and we are done.
\item Suppose that $p\tran{\ct(a)} p'$ in $\calC(\calM(P))$. Then
$p\tran{a}_\may p'$ in $\calM(P)$. Since $a\in A^l\cup
A^{\mathit{bi}}$ by our assumption, we have that $p'\neq u$ in
$\calM(P)$, because $u$ can only be reached via $A^r$-labelled
may transitions. Therefore, $p'\in P$ and $p\tran{a} p'$.
\end{itemize}
This completes the proof of the second claim. 

We now argue that, in general, $(P,p) \sqsubseteq
(\rho(\calM(\calC(P))),p)$ does not hold for an MTS $P$ and a state
$p\in P$. Let $P$ be the MTS over the alphabet $A=\{a\}$, with $p$ as its
only state and with no transitions. State $p$ has an outgoing
$a$-labelled may transition in $\rho(\calM(\calC(P)))$, which cannot be
matched by $p$ in $P$. Therefore, $(P,p) \not\sqsubseteq
(\rho(\calM(\calC(P))),p)$. 

To complete the proof we now argue that, in general,
$(\rho(\calC(\calM(P))),p) \ccsim (P,p)$ does not hold for an LTS $P$
and a state $p\in P$. Let $P$ be an LTS with $A^r = \{a\}$, $p$ as its only state,
and with no transitions. The sets $A^l$ and $A^{bi}$ can be
arbitrary and play no role in the counter-example. Then it is immediate to see that state $p$ has
a transition $p \tran{a} u$ in $\rho(\calC(\calM(P)))$, but this
transition cannot be matched by $p$ in $P$.  \qed
\end{proof}

In what follows we present a result on the relationships between the
translations $\calM$ and $\calC$ for LTSs without bivariant actions.

\begin{definition}
Let $P$ be an LTS with its alphabet partitioned into $A^r$ and $A^l$.
Then the LTS $\ol{P}$ is that obtained from $P$ by simply renaming every $a\in A^r$ as
$\cv(a)$ and every $a\in A^l$ as $\ct(a)$.
\end{definition}

\begin{proposition}\label{renaming-lts}
Let $P$ be an LTS over an alphabet $A^r\cup A^l$ and let $Q$ be an MTS over the same alphabet. 
Then the following statements hold. 
\begin{enumerate}
\item If a relation $R$ is a covariant-contravariant simulation
between $\ol{P}$ and $\calC(Q)$, then $R$ is a refinement between
$\calM(P)$ and $Q$.
\item If $(\ol{P},p) \ccsim (\calC(Q), q)$ then $(\calM(P),p) \sqsubseteq
(Q,q)$, for all states $p\in P$ and $q\in Q$.
\item The converse implication of the above statement fails.
\end{enumerate}
\end{proposition}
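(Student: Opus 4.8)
The plan is to prove statements 1 and 2 together by a direct verification of the refinement clauses, obtaining statement 2 as an immediate consequence of statement 1, and then to refute the converse in statement 3 by exhibiting a small counterexample that exploits the universal state $u$ introduced by $\calM$. Throughout I would use that the states of $\ol{P}$ (resp. $\calC(Q)$) are exactly those of $P$ (resp. $Q$), so the same relation $R$ is a candidate refinement between $\calM(P)$ and $Q$: it simply leaves the fresh state $u$ of $\calM(P)$ unrelated, which is harmless since refinement only constrains pairs that are actually in $R$.

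For statement 1, assume $R$ is a covariant-contravariant simulation between $\ol{P}$ and $\calC(Q)$, and check the two refinement clauses for an arbitrary pair $p \mathrel{R} q$. The must-clause is routine: the must transitions of $\calM(P)$ are precisely the $p\tran{a}_\must p'$ arising from transitions $p\tran{a}p'$ of $P$ with $a\in A^r$ (recall $A^{bi}=\emptyset$ here, and that the transitions $p\tran{a}_\may u$ and $u\tran{a}_\may u$ are only may transitions). Renaming gives $p\tran{\cv(a)}p'$ in $\ol{P}$; the covariant clause of $R$ produces a matching $q\tran{\cv(a)}q'$ in $\calC(Q)$ with $p'\mathrel{R}q'$, and by the construction of $\calC$ this is exactly a must transition $q\tran{a}_\must q'$ of $Q$, as required.

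The delicate clause is the may-clause, and this is where the structural mismatch between $\ol{P}$ and $\calC(Q)$ does the work. Given $q\tran{a}_\may q'$ in $Q$, we get $q\tran{\ct(a)}q'$ in $\calC(Q)$ with $\ct(a)$ contravariant, so the contravariant clause of $R$ yields a matching $p\tran{\ct(a)}p'$ in $\ol{P}$ with $p'\mathrel{R}q'$. The crucial observation is that $\ol{P}$ carries $\ct$-labelled transitions only for actions in $A^l$, since $\ol{P}$ renames $A^r$-transitions with $\cv$ and $A^l$-transitions with $\ct$. Hence the mere existence of this $\ct(a)$-transition forces $a\in A^l$ and $p\tran{a}p'$ in $P$, giving $p\tran{a}_\may p'$ in $\calM(P)$. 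This single argument also disposes of the case $a\in A^r$: such a may transition of $q$ could never be matched in $\ol{P}$, so being a cc-simulation guarantees no such transition emanates from a related state $q$, and the proof obligation that would otherwise be discharged by $p\tran{a}_\may u$ is simply vacuous. Statement 2 is then immediate, since a witnessing cc-simulation $R$ with $p\mathrel{R}q$ is, by statement 1, a refinement containing $(p,q)$.

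For statement 3 I would exhibit the asymmetry created by $u$. Take $A^r=\{a\}$ and $A^l=\emptyset$, let $P$ be the LTS with a single state $p$ and no transitions, and let $Q$ be the MTS with states $q,q'$ and the single may transition $q\tran{a}_\may q'$ (and no must transitions). On the one hand, $\calM(P)$ contains $p\tran{a}_\may u$, so the relation $R=\{(p,q)\}\cup(\{u\}\times\{q,q'\})$ is a refinement witnessing $(\calM(P),p)\sqsubseteq(Q,q)$: the transition $q\tran{a}_\may q'$ is matched by $p\tran{a}_\may u$, and $u$ refines every state. On the other hand, $\ol{P}$ has no transitions at all, whereas $\calC(Q)$ has $q\tran{\ct(a)}q'$ with $\ct(a)$ contravariant, so the contravariant clause can never be satisfied and $(\ol{P},p)\not\ccsim(\calC(Q),q)$. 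I expect the main obstacle to be conceptual rather than computational: recognizing that the faithfulness of $\calC$ (which, unlike $\calM$, introduces no universal state) is exactly what breaks the converse, and phrasing the may-clause of statement 1 so that the missing $\ct$-transitions of $\ol{P}$ for $A^r$-actions discharge the matching obligations for free.
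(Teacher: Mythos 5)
Your proposal is correct and follows essentially the same route as the paper's proof: the must-clause is discharged via the covariant $\cv(a)$-transitions, the may-clause via the contravariant $\ct(a)$-transitions (with the observation that a matching $\ct(a)$-transition in $\ol{P}$ forces $a\in A^l$, which also covers the $A^r$ may-transitions of $Q$), and the counterexample for statement 3 is the same one up to an inessential variation (the paper uses a one-state $Q$ with a may self-loop $q\tran{a}_\may q$, while you use a fresh target state $q'$). Your explicit remark that the $u$-transitions of $\calM(P)$ make the refinement hold while the absence of $\ct(a)$-transitions in $\ol{P}$ blocks the simulation is exactly the asymmetry the paper exploits.
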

\begin{proof} 
We limit ourselves to detailing a proof of the first statement and to
offering a counter-example showing the third. The second statement is
an immediate corollary of the first.

To prove the first statement, assume that $p \mathrel{R} q$ and that
$R$ is a covariant-contravariant simulation between $\ol{P}$ and
$\calC(Q)$.  If $q\tran{a}_\may q'$ in $Q$ then $q\tran{\ct(a)} q'$ in
$\calC(Q)$.  Since $R$ is a covariant-contravariant simulation between
$\ol{P}$ and $\calC(Q)$, there is some $p'$ in $\ol{P}$ with
$p\tran{\ct(a)} p'$ and $p' \mathrel{R} q'$.  Therefore, $p\tran{a}
p'$ in $P$ with $a\in A^l$, and $p\tran{a}_\may p'$ in $\calM(P)$ with
$p' \mathrel{R} q'$, as required.  Now, if $p\tran{a}_\must p'$ in
$\calM(P)$ then $p\tran{a} p'$ in $P$ with $a\in A^r$ and
$p\tran{\cv(a)} p'$ in $\ol{P}$.  Since $R$ is a
covariant-contravariant simulation between $\ol{P}$ and $\calC(Q)$,
there is some $q'$ in $\calC(Q)$ with $q\tran{\cv(a)} q'$ and $p'
\mathrel{R}q'$, and therefore $q\tran{a}_\must q'$ in $Q$, as required.

To see that the converse implication of the second statement in the
proposition fails in general, let $P$ be an LTS with $A^r = \{a\}$,
with $p$ as its only state and with no transitions. In this case $A^l$
can be arbitrary and plays no role in the counter-example. Let $Q$ be
a one-state MTS with the transition $q \tran{a}_\may q$. Then we have
$(\calM(P),p) \sqsubseteq (Q,q)$. On the other hand, $(\ol{P},p)
\not\ccsim (\calC(Q), q)$, because $q \tran{\ct(a)} q$ in $\calC(Q)$
and $\ct(a)$ is a contravariant action, whereas the LTS $\ol{P}$ has
no transitions. \qed
\end{proof}

\section{Characteristic formulae for processes}\label{Sect:charforms}

In this section, we show that the translation $\calC$ can be used to
transfer characteristic formulae from the setting of MTSs modulo
refinement to that of LTSs modulo the covariant-contravariant
simulation preorder. Characteristic formulae for processes provide an
alternative, logical characterization of a preorder in terms of {\em a
single formula}\/: given a process $t$ we obtain a formula $\chi(t)$
such that $t\< t'$ iff $t'\models\chi(t)$, for all $t'$. Compare with
Propositions~\ref{Prop:BLmodchar} and~\ref{Prop:modcharcc}, which
characterize a relation over states in terms of infinite collections
of formulae.

For consistency with the developments
in~\cite{BoudolL1992}, we focus on characteristic formulae for finite,
``essentially loop-free'' systems. Following~\cite{BoudolL1992,FabregasEtAl10-logics}, to describe these finite systems we
consider two signatures: the first generates terms describing a family
of MTSs, and the second generates terms denoting a family of covariant-contravariant LTSs.

\begin{definition}[\cite{BoudolL1992}]\label{MTSt}
Given a set of actions $A$, the set $\mtsp(A)$ of MTS process terms is given by
\[
t ::= 0 \mid \omega \mid a.t \mid a!t \mid t + t . 
\]
where $a\in A$. 

We define the `universal MTS' associated with $\mtsp(A)$ as follows:
\begin{itemize}
\item Its set of states is just $\mtsp(A)$.

\item For each term $a.t$ we have the transition $a.t\tran{a}_\may t$;
besides, for each $a\in A$, we have $\omega\tran{a}_\may\omega$.

\item For each term $a!t$, and $\mathsf{o}\in\{\must,\may\}$ we have
$a!t\tran{a}_\mathsf{o} t$.

\item For each term $t_1+t_2$, $a\in A$ and
$\mathsf{o}\in\{\must,\may\}$ we have $t_1+t_2\tran{a}_\mathsf{o}t'$,
if and only if, we have $t_i\tran{a}_\mathsf{o}t'$ for some
$i\in\{1,2\}$.
\end{itemize}
\end{definition}

Note that $\omega$ denotes the MTS $U$ from Example~\ref{Ex:U} and is
the only source of loops in the MTS we have just described. So,
abstracting from the self-loops at the leaves labelled with $\omega$,
terms in $\mtsp(A)$ may be viewed as describing finite synchronization
trees, in the sense of Milner~\cite{Mi89}. 

A term of the form $a!t$ denotes a state in an MTS that can perform an
$a$-labelled must transition, and therefore also an $a$-labelled may
transition, leading to the state described by the term $t$.

\begin{definition}
Let $(A^r,A^l, \emptyset)$ be a signature and let $A = A^r \cup
A^l$. The set $\ltsp(A)$ of LTS process terms is given by
\[
t ::= 0 \mid \omega \mid a.t \mid t + t , 
\]
where $a\in A$. 

We define the `universal LTS' associated with $\ltsp(A)$ as follows:
\begin{itemize}
\item Its set of states is just $\ltsp(A)$.

\item For each term $a.t$ we have the transition $a.t\tran{a} t$;
besides, for each $a\in A^l$, we have $\omega\tran{a}\omega$.

\item For each term $t_1+t_2$ and each $a\in A$, we have
$t_1+t_2\tran{a}t'$, if and only if, we have $t_i\tran{a}t'$ for some
$i\in\{1,2\}$.
\end{itemize}

\end{definition}

The translation $\calC$ from MTSs over the alphabet $A$ to LTSs over
the signature $(\{\cv(a)\mid a\in A\}, \{\ct(a)\mid a\in
A\},\emptyset)$ can be extended to terms in $\mtsp(A)$ yielding terms
in $\ltsp(\{\cv(a),\ct(a)\mid a\in A\})$ as the unique homomorphism
that is the identity over constants and satisfies the following
equalities:
\begin{eqnarray*}
\calC(a!t) & = & \cv(a). \calC(t) + \ct(a).\calC(t) \quad \text{and}  \\
\calC(a.t) & = & \ct(a). \calC(t) . 
\end{eqnarray*} 

Then we have the following results:

\begin{lemma}\label{Lem:opcor}
Let $t$ be an  MTS term. Then the following statements hold:
\begin{enumerate}
\item If $t\tran{a}_\must t'$ for some MTS term $t'$ then $\calC(t)\tran{\cv(a)} \calC(t')$. 
\item If $t\tran{a}_\may t'$ for some MTS term $t'$ then $\calC(t)\tran{\ct(a)} \calC(t')$. 
\item If $\calC(t)\tran{\cv(a)} u$ for some LTS term $u$ then
$t\tran{a}_\must t'$ for some MTS term $t'$ such that $u = \calC(t')$.
\item If $\calC(t)\tran{\ct(a)} u$ for some LTS term $u$ then
$t\tran{a}_\may t'$ for some MTS term $t'$ such that $u = \calC(t')$.
\end{enumerate}
\end{lemma}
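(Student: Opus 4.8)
The plan is to prove each of the four statements by induction on the structure of the MTS term $t$, exploiting the fact that the transition relations of the `universal MTS' and `universal LTS' are defined compositionally over the term structure. The base cases are the constants $0$ and $\omega$; the inductive cases are $a.t$, $a!t$, and $t_1+t_2$. Statements (1) and (3) concern must transitions and $\cv(a)$-transitions, while (2) and (4) concern may transitions and $\ct(a)$-transitions; the arguments are entirely parallel, so I would present the first pair in full and remark that the second follows by a symmetric argument.

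For the forward directions (1) and (2), I would first dispose of the constants. The term $0$ has no outgoing transitions, so there is nothing to prove; the term $\omega$ has only may self-loops $\omega\tran{a}_\may\omega$ (no must transitions), which matches $\calC(\omega)=\omega$ having only $\ct(a)$-self-loops in $\ltsp$, settling (2) and making (1) vacuous. For the inductive cases, the key observation is that $\calC$ is a homomorphism satisfying $\calC(a!t)=\cv(a).\calC(t)+\ct(a).\calC(t)$ and $\calC(a.t)=\ct(a).\calC(t)$. Thus for $t = a!t'$, the must transition $a!t'\tran{a}_\must t'$ maps under $\calC$ to the summand $\cv(a).\calC(t')$, which has the transition $\cv(a).\calC(t')\tran{\cv(a)}\calC(t')$, establishing (1); similarly the may transition yields the $\ct(a)$-summand for (2). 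For $t=a.t'$, there is only a may transition $a.t'\tran{a}_\may t'$, and $\calC(a.t')=\ct(a).\calC(t')\tran{\ct(a)}\calC(t')$ handles (2), with (1) vacuous. The sum case $t_1+t_2$ is handled by the compositional definition of $+$ in both universal systems: a transition of $t_1+t_2$ comes from a transition of some $t_i$, and since $\calC(t_1+t_2)=\calC(t_1)+\calC(t_2)$, I apply the induction hypothesis to $t_i$ and then use the rule for $+$ in $\ltsp$ to lift the resulting transition back to $\calC(t_1)+\calC(t_2)$.

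For the backward directions (3) and (4), I would argue that any $\cv(a)$- or $\ct(a)$-transition of $\calC(t)$ must have arisen from the homomorphic image of some subterm, and invert the construction of $\calC$ accordingly. Again proceeding by induction on $t$: for $t=0$ there is nothing to prove; for $t=\omega$, since $\calC(\omega)=\omega$ has no $\cv(a)$-transitions, (3) is vacuous, while its $\ct(a)$-self-loop $\omega\tran{\ct(a)}\omega$ is matched by $\omega\tran{a}_\may\omega$ with $\omega=\calC(\omega)$, giving (4). For $t=a!t'$, the only $\cv$-transition of $\cv(a).\calC(t')+\ct(a).\calC(t')$ leads to $\calC(t')$, which is witnessed by $a!t'\tran{a}_\must t'$; the inverse extraction of $t'$ is forced because $\calC$ is injective, as noted in Remark~\ref{Rem:injC}. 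For $t=a.t'$ there are no $\cv$-transitions (so (3) is vacuous) and the sole $\ct(a)$-transition reaches $\calC(t')$, matched by $a.t'\tran{a}_\may t'$. The sum case once more uses that a transition of $\calC(t_1)+\calC(t_2)$ originates in some $\calC(t_i)$, to which the induction hypothesis applies, yielding a must/may transition of $t_i$ and hence of $t_1+t_2$.

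The main obstacle I anticipate is not any single calculation but rather keeping the bookkeeping correct in the backward directions, where one must ensure that the target state $u$ reached in $\ltsp$ is genuinely of the form $\calC(t')$ for an MTS term $t'$, rather than some LTS term outside the range of $\calC$. This is precisely where the homomorphic and injective nature of $\calC$ is essential: every reachable state in the relevant fragment of the universal LTS is the $\calC$-image of an MTS term, so the inverse $\calC^{-1}$ from Remark~\ref{Rem:injC} is well defined on it and recovers the witnessing $t'$ uniquely. A secondary subtlety is the treatment of $a!t'$, where a single MTS construct produces \emph{two} LTS summands (one covariant, one contravariant); I must be careful that the $\cv$-transition is traced back to the must transition and the $\ct$-transition to the may transition, so that statements (3) and (4) are not conflated.
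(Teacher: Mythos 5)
Your proof is correct and follows essentially the same route as the paper, which simply notes that statements (1)--(2) follow by induction on the derivation of the transition and (3)--(4) by induction on the structure of $t$; since the transition relations of the universal MTS and LTS are syntax-directed, your uniform structural induction amounts to the same argument, just written out in full. (One minor remark: the appeal to injectivity of $\calC$ in the backward direction is unnecessary, since your case analysis already exhibits the witnessing subterm $t'$ with $u=\calC(t')$ directly.)
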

\begin{proof}
The first two statements can be proven by induction on the proof of
the relevant transition. The third and the fourth statement can be
easily shown by induction on the structure of $t$.
\qed
\end{proof}

It is not hard to see that the LTS associated with $\calC(t)$, where
$t$ is an MTS term, is the LTS one obtains by considering the MTS for
term $t$, defined as in Definition~\ref{MTSt}, and applying the
translation $\calC$ from Definition~\ref{Def:R2CC} to it. Therefore,
the following result follows essentially from
Proposition~\ref{modal-to-cc:logic}. (One can also give a simple proof
of this result using Lemma~\ref{Lem:opcor} above.)

\begin{proposition}\label{terms-formulas:prop}
For an MTS term $t$ and a modal formula $\varphi$,
\[
t\models\varphi \iff \calC(t)\models \calC(\varphi).
\]  
\end{proposition}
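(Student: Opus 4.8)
The plan is to prove the biconditional by structural induction on $\varphi$, mirroring the proof of Proposition~\ref{modal-to-cc:logic} but using the operational correspondence recorded in Lemma~\ref{Lem:opcor} to relate the transitions of the MTS state $t$ with those of the LTS state $\calC(t)$. The Boolean cases ($\bot$, $\top$, conjunction and disjunction) are immediate, since $\calC$ is a homomorphism over these connectives and the satisfaction clauses for them are identical in the two logics; the inductive hypothesis dispatches the subformulae directly.

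The two modal cases carry the content. For $\varphi=\langle a\rangle\varphi'$, I would argue that $t\models\langle a\rangle\varphi'$ exactly when there is a must transition $t\tran{a}_\must t'$ with $t'\models\varphi'$; by the inductive hypothesis this is equivalent to $\calC(t')\models\calC(\varphi')$, and by Lemma~\ref{Lem:opcor}(1) such a must transition yields $\calC(t)\tran{\cv(a)}\calC(t')$, while Lemma~\ref{Lem:opcor}(3) guarantees conversely that every $\cv(a)$-successor $u$ of $\calC(t)$ is of the form $\calC(t')$ for a suitable must successor $t'$ of $t$. Hence $t\models\langle a\rangle\varphi'$ iff $\calC(t)\models\langle\cv(a)\rangle\calC(\varphi')=\calC(\langle a\rangle\varphi')$. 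The case $\varphi=[a]\varphi'$ is dual: using Lemma~\ref{Lem:opcor}(2) and~(4) I would establish that the may transitions $t\tran{a}_\may t'$ correspond, via $\calC$, bijectively to the $\ct(a)$-transitions of $\calC(t)$, so that the universal quantification over may successors of $t$ matches the universal quantification over $\ct(a)$-successors of $\calC(t)$, giving $t\models[a]\varphi'$ iff $\calC(t)\models[\ct(a)]\calC(\varphi')=\calC([a]\varphi')$.

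The only point requiring care — and the one I expect to be the main obstacle, modest as it is — is ensuring that the inductive hypothesis can legitimately be applied to the successor states. The hypothesis speaks of an MTS term $t'$ and its image $\calC(t')$, not of an arbitrary state of the universal LTS; what makes the argument go through is precisely the surjectivity clauses~(3) and~(4) of Lemma~\ref{Lem:opcor}, which certify that every relevant successor of $\calC(t)$ is genuinely of the form $\calC(t')$. Without these clauses one could in principle reach an LTS state to which the hypothesis does not apply, so this is the step at which the induction is actually anchored.

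As an alternative, and as the surrounding text suggests, one could instead observe that the universal LTS generated by the term $\calC(t)$ coincides, on the part reachable from $\calC(t)$, with the LTS obtained by first forming the universal MTS for $t$ as in Definition~\ref{MTSt} and then applying the translation $\calC$ of Definition~\ref{Def:R2CC}. This identification, itself justified by Lemma~\ref{Lem:opcor}, reduces the statement to an immediate instance of Proposition~\ref{modal-to-cc:logic} applied to that MTS at the state $t$, which is why the surrounding prose can assert that the result follows \emph{essentially} from the earlier proposition.
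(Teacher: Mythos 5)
Your proof is correct and matches the paper's treatment: the paper justifies this proposition by exactly the two routes you describe---identifying the LTS generated by $\calC(t)$ with the $\calC$-image of the universal MTS for $t$ and invoking Proposition~\ref{modal-to-cc:logic}, and, as an explicitly mentioned alternative, the direct structural induction anchored on Lemma~\ref{Lem:opcor} that you carry out in detail. Your remark that clauses~(3) and~(4) of the lemma are what legitimise applying the inductive hypothesis to successors is precisely the right point of care.
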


The above result can be used to transfer characteristic formulae for
MTS terms modulo refinement to characteristic formulae for their image
LTS terms via $\calC$. 

We begin by recalling the definition of characteristic formulae for
MTS terms modulo refinement from~\cite{BoudolL1992,La89}. 
\begin{definition}[\cite{BoudolL1992,La89}]\label{Def:charformMTS}
For each term $t\in\mtsp(A)$, the characteristic formula $\chi(t)$ is
defined as follows: 
\begin{equation}\label{Eqn:charform}
\chi(t) = \bigwedge_{\phi \in \delta(t)} \phi \wedge 
\bigwedge_{a\in A} [a] \gamma_a(t) , 
\end{equation}
where the set of formulae $\delta(t)$ and the formulae $\gamma_a(t)$
are given inductively thus
\begin{enumerate}
\item $\delta(0) = \emptyset$ and $\gamma_a(0)= \bot$, 
\item $\delta(\omega) = \emptyset$ and $\gamma_a(\omega)= \top$,
\item $\delta(a.t) = \emptyset$, $\gamma_a(a.t)= \gamma_a(t)$ and 
$\gamma_b(a.t)= \bot$ ($b\neq a$), 
\item $\delta(a!t) = \{\langle a\rangle\chi(t)\}$ and $\gamma_b(a!t)= \gamma_b(a.t)$, for each $b\in A$, and 
\item $\delta(t_1+t_2) = \delta(t_1) \cup \delta(t_2)$ and
$\gamma_a(t_1+t_2) = \gamma_a(t_1) \vee \gamma_a(t_2)$.
\end{enumerate}
As usual, an empty conjunction stands for $\top$. 
\end{definition}

The correctness of the above construction was proved by Larsen in~\cite{La89}.
\begin{proposition}\label{Prop:Larsenchar}
Let $t, t'\in \mtsp(A)$. Then $t \sqsubseteq t'$ iff $t' \models
\chi(t)$.
\end{proposition}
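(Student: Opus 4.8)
The plan is to prove both implications of the biconditional $t \sqsubseteq t'$ iff $t' \models \chi(t)$ by induction on the structure of the MTS term $t$, exploiting the recursive shape of $\chi(t)$ given in Equation~\eqref{Eqn:charform}. The characteristic formula $\chi(t)$ has two conjuncts: the set $\delta(t)$ of existential (diamond) requirements, which encode the must transitions that any refinement must provide, and the family of box formulae $\bigwedge_{a\in A} [a]\gamma_a(t)$, which constrain the may transitions that a refinement is permitted to have. The key observation to unpack is that $\gamma_a(t)$ acts as a characteristic formula \emph{for the $a$-derivatives of $t$}: intuitively, $t'' \models \gamma_a(t)$ should hold exactly when $t''$ is a refinement of some $a$-successor of $t$ (or when $t$ has a universal $a$-branch via $\omega$). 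So the heart of the argument is a simultaneous induction establishing, alongside the main claim, the auxiliary statement that $t' \tran{a}_\may s'$ implies (for $t' \models [a]\gamma_a(t)$) that $s' \models \gamma_a(t)$, together with a matching characterization of $\gamma_a$.

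\medskip\noindent\textbf{The forward direction.} First I would assume $t \sqsubseteq t'$ and show $t' \models \chi(t)$. Since $t \sqsubseteq t'$, there is a refinement relation $R$ with $t \mathrel{R} t'$. I would verify the two conjuncts separately. For each $\langle a\rangle \chi(s) \in \delta(t)$ — which arises from a summand $a!s$ of $t$ and hence a must transition $t \tran{a}_\must s$ — the refinement clause guarantees $t' \tran{a}_\must s'$ with $s \mathrel{R} s'$, so $s \sqsubseteq s'$; by the induction hypothesis $s' \models \chi(s)$, and therefore $t' \models \langle a\rangle \chi(s)$. For the box conjuncts, I would take any $t' \tran{a}_\may s'$; the refinement clause (in the direction that may transitions of the refined system are matched by the specification) gives $t \tran{a}_\may s$ with $s \mathrel{R} s'$, and then show $s' \models \gamma_a(t)$ by a sub-induction on the definition of $\gamma_a$, using that $s \sqsubseteq s'$ and that $\gamma_a(t)$ is built as a disjunction over the $a$-successors of $t$ (with $\gamma_a(\omega) = \top$ handling the universal case).

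\medskip\noindent\textbf{The reverse direction.} Conversely, assuming $t' \models \chi(t)$, I would exhibit a refinement relation witnessing $t \sqsubseteq t'$. The natural candidate is the relation $R = \{(s, s') \mid s' \models \chi(s)\}$, and I would prove it is a refinement. Given $s \mathrel{R} s'$ and a must transition $s \tran{a}_\must s_1$, this transition stems from an $a!s_1$ summand, so $\langle a\rangle \chi(s_1) \in \delta(s)$ and hence $s' \models \langle a\rangle \chi(s_1)$, yielding $s' \tran{a}_\must s_1'$ with $s_1' \models \chi(s_1)$, i.e.\ $s_1 \mathrel{R} s_1'$. Given a may transition $s' \tran{a}_\may s_1'$, I would use $s' \models [a]\gamma_a(s)$ to get $s_1' \models \gamma_a(s)$, and then decode $\gamma_a$ to produce a matching $s \tran{a}_\may s_1$ with $s_1' \models \chi(s_1)$. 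This decoding step is where the precise inductive characterization of $\gamma_a(t)$ is indispensable.

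\medskip\noindent\textbf{Main obstacle.} The main difficulty will be establishing and correctly using the auxiliary characterization of $\gamma_a(t)$, particularly the interaction between the summation case (where $\gamma_a$ is a disjunction $\gamma_a(t_1)\vee\gamma_a(t_2)$) and the $\omega$ case (where $\gamma_a(\omega)=\top$ permits \emph{any} successor). In the reverse direction one must recover a genuine $a$-successor $s_1$ of $s$ from the mere fact that $s_1' \models \gamma_a(s)$, and when $\omega$ contributes a $\top$ disjunct this successor can be taken to be $\omega$ itself, so the case analysis must treat the $\omega$-branches carefully to ensure $s_1 \mathrel{R} s_1'$ still holds (noting $s_1' \models \chi(\omega) = \top$). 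Since this is the result whose proof was already given by Larsen in~\cite{La89}, I would remark that the claim is classical and that the structural induction sketched above, with the auxiliary invariant on $\gamma_a$, is the standard route; the present paper invokes it only to transfer characteristic formulae across $\calC$ via Proposition~\ref{terms-formulas:prop}.
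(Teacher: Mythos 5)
The paper does not actually prove this proposition: it states it and attributes its proof to Larsen's paper \cite{La89} (``The correctness of the above construction was proved by Larsen in~\cite{La89}''), so there is no in-paper argument to compare yours against. On its own merits, your sketch is the standard and correct route: structural induction for the forward direction, and for the converse the relation $R=\{(s,s')\mid s'\models\chi(s)\}$ shown to be a refinement, with the whole argument pivoting on the auxiliary characterizations $\bigwedge_{\phi\in\delta(t)}\phi \equiv \bigwedge\{\langle a\rangle\chi(t'')\mid t\tran{a}_\must t''\}$ and $\gamma_a(t)\equiv\bigvee\{\chi(t'')\mid t\tran{a}_\may t''\}$ (empty disjunction being $\bot$); the paper itself asserts these equivalences, up to logical equivalence, in the discussion following Definition~\ref{Def:charformMTS}. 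Your handling of the $\omega$ case (take $s_1=\omega$ when the $\top$ disjunct fires, using $\chi(\omega)\equiv\top$) is exactly what is needed, and it also disposes of the only place where the structural induction is not strictly decreasing, namely the self-loop $\omega\tran{a}_\may\omega$.

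One caveat worth flagging: your auxiliary invariant ``$\gamma_a(t)$ is the disjunction of $\chi$ over the $a$-may-derivatives of $t$'' is the right one, but it is inconsistent with clause~3 of Definition~\ref{Def:charformMTS} as literally printed, which sets $\gamma_a(a.t)=\gamma_a(t)$; that clause must be read as $\gamma_a(a.t)=\chi(t)$ (a typo in the paper, as one sees by checking $\gamma_a(a.a.0)$ against the paper's own later claim that $\bigwedge_{a}[a]\gamma_a(t)\equiv\bigwedge_{a\in A_t}[a]\bigvee\{\chi(t')\mid t\tran{a}_\may t'\}$). With that reading your sub-induction establishing the invariant goes through, and the rest of the argument is sound. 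Since the result is imported rather than proved here, the appropriate disposition is precisely the one you end with: cite \cite{La89} (or \cite{BoudolL1992}) and, if a self-contained proof is wanted, supply the induction you outline.
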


Note that the formula $\chi(\omega)$ is logically equivalent to
$\top$.  Moreover, for each term $t\in \mtsp(A)$, we have that, up
to logical equivalence,
\[
\bigwedge_{\phi \in \delta(t)} \phi = \bigwedge \{\langle
a\rangle\chi(t') \mid t \tran{a}_\must t' \} .
\]
Consider now the second conjunction in the formula
(\ref{Eqn:charform}). If $t$ can perform an $a$-labelled may
transition leading to a term that is equivalent to $\omega$ with
respect to the kernel of $\sqsubseteq$, then, up to logical equivalence, 
\[
[a] \gamma_a(t)   = \top . 
\]
For each term $t$, let $A_t$ be the subset of $A$ consisting of all
the actions $a$ such that each $a$-labelled may transition from $t$ leads
to a term that is {\em not} equivalent to $\omega$ with respect to the
kernel of $\sqsubseteq$. Then, up to logical equivalence, 
\[
\bigwedge_{a\in A} [a] \gamma_a(t) = \bigwedge_{a\in A_t} [a] \bigvee 
\{ \chi(t') \mid t \tran{a}_\may t' \} . 
\]
In summary, working up to logical equivalence, we can rewrite the formula
(\ref{Eqn:charform}) as follows:
\[
\bigwedge \{\langle
a\rangle\chi(t') \mid t \tran{a}_\must t' \} \vee 
 \bigwedge_{a\in A_t} [a] \bigvee 
\{ \chi(t') \mid t \tran{a}_\may t' \} .
\]

\begin{proposition}\label{terms-characteristic:prop}
$\calC(\chi(t))$ is a characteristic formula for $\calC(t)$, for each
$t\in\mtsp(A)$.
\end{proposition}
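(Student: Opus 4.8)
The goal is to establish, in the covariant-contravariant setting, the characteristic-formula property paired with Proposition~\ref{Prop:Larsenchar}: for every LTS term $s'$ over the signature $(\{\cv(a)\mid a\in A\},\{\ct(a)\mid a\in A\},\emptyset)$ we want $\calC(t)\ccsim s'$ iff $s'\models\calC(\chi(t))$. The plan is to reduce this, as far as possible, to three facts already at our disposal: Larsen's characteristic-formula theorem (Proposition~\ref{Prop:Larsenchar}), the correctness of $\calC$ on processes (Corollary~\ref{Cor:R2CC-correct}, itself resting on Proposition~\ref{modal-to-cc}), and the correctness of $\calC$ on formulae (Proposition~\ref{terms-formulas:prop}). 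First I would record the observation, already made just before Proposition~\ref{terms-formulas:prop}, that the LTS denoted by $\calC(t)$ coincides with the result of applying the translation of Definition~\ref{Def:R2CC} to the MTS denoted by $t$; this is what lets the three propositions speak about the same objects.

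The core of the argument handles the case in which the test process $s'$ is itself the image $\calC(t')$ of an MTS term $t'$. Here the three biconditionals chain directly: $\calC(t)\ccsim\calC(t')$ iff $t\sqsubseteq t'$ (Corollary~\ref{Cor:R2CC-correct}), iff $t'\models\chi(t)$ (Proposition~\ref{Prop:Larsenchar}), iff $\calC(t')\models\calC(\chi(t))$ (Proposition~\ref{terms-formulas:prop}). Since $\calC$ is injective (Remark~\ref{Rem:injC}), this pins down the behaviour of $\calC(\chi(t))$ exactly on the $\calC$-images; in particular, taking $t'=t$ and using reflexivity of $\sqsubseteq$, it yields $\calC(t)\models\calC(\chi(t))$.

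It remains to cover an arbitrary LTS term $s'$, which need not lie in the range of $\calC$ because it may violate the must-is-may constraint recorded after Definition~\ref{Def:R2CC} (for instance, $s'$ may offer a $\cv(a)$-transition with no matching $\ct(a)$-transition). The soundness half, $\calC(t)\ccsim s'\Rightarrow s'\models\calC(\chi(t))$, is painless: from $\calC(t)\models\calC(\chi(t))$ and the preservation direction of Proposition~\ref{Prop:modcharcc} (which holds for arbitrary image-finite LTSs), every formula satisfied by $\calC(t)$, hence $\calC(\chi(t))$ in particular, is satisfied by $s'$.

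The completeness half, $s'\models\calC(\chi(t))\Rightarrow\calC(t)\ccsim s'$, is where I expect the real work, precisely because such an $s'$ cannot be pulled back through $\calC^{-1}$ and so the clean chain above is unavailable. My plan is to prove it by structural induction on $t$ using the reformulation of $\chi(t)$ derived just before the statement, whose $\calC$-image is $\bigwedge\{\langle\cv(a)\rangle\calC(\chi(t'))\mid t\tran{a}_\must t'\}\wedge\bigwedge_{a\in A_t}[\ct(a)]\bigvee\{\calC(\chi(t'))\mid t\tran{a}_\may t'\}$. Reading off a candidate covariant-contravariant simulation, the $\langle\cv(a)\rangle$-conjuncts supply, for each must-transition of $\calC(t)$, a matching $\cv(a)$-transition of $s'$ into a state satisfying $\calC(\chi(t'))$, dispatched by the induction hypothesis; dually, the $[\ct(a)]$-conjuncts force every $\ct(a)$-successor of $s'$ (for $a\in A_t$) to satisfy some $\calC(\chi(t'))$, again handled inductively. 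The delicate point is the may-transitions on actions $a\notin A_t$, namely those leading to $\omega$-equivalent states, for which $\calC(\chi(t))$ carries no $[\ct(a)]$-constraint; here I would invoke the key lemma that $\calC(\omega)$ is the $\ccsim$-least element over the signature $(\{\cv(a)\},\{\ct(a)\},\emptyset)$ — so that $\calC(\omega)\ccsim s''$ for every state $s''$ — which lets the transition $\calC(t)\tran{\ct(a)}\calC(\omega)$ absorb any such $\ct(a)$-move of $s'$. Assembling these matches gives the required simulation and closes the induction.
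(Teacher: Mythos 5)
Your proposal is correct and follows essentially the same route as the paper's proof: establish $\calC(t)\models\calC(\chi(t))$ via Propositions~\ref{Prop:Larsenchar} and~\ref{terms-formulas:prop}, then prove the completeness direction by exhibiting a covariant-contravariant simulation read off from the rewritten form of $\calC(\chi(t))$, using Lemma~\ref{Lem:opcor} for the operational correspondence and the $\ccsim$-minimality of $\calC(\omega)$ to absorb the $\ct(a)$-moves for $a\notin A_t$. The only differences are presentational: the paper defines the single relation $R=\{(\calC(t),s)\mid s\models\calC(\chi(t))\}$ and checks it is a simulation rather than inducting on $t$, and it leaves implicit the soundness direction that you correctly discharge via the preservation half of Proposition~\ref{Prop:modcharcc}.
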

\begin{proof}
By Propositions~\ref{Prop:Larsenchar} and~\ref{terms-formulas:prop},
$\calC(t)\models \calC(\chi(t))$.  Now, assume that $s\models
\calC(\chi(t))$ for some $s\in \ltsp(\{\cv(a),ct(a)\mid a\in A\})$. We
shall show that $\calC(t)\ccsim s$. (Observe, in passing, that, since
the map $\calC$ is not surjective, the term $s$ might not be the image
of any MTS term.) To this end, it suffices to show that the relation 
\[
R = \{(\calC(t),s) \mid s \models \calC(\chi(t)), s\in
\ltsp(\{\cv(a),ct(a)\mid a\in A\}), t\in\mtsp(A)\}
\]
is a covariant-contravariant simulation. 

To see this, note, first of all, that, in the light
of the above discussion,
\[
\calC(\chi(t)) = \bigwedge \{\langle
\cv(a)\rangle\calC(\chi(t')) \mid t \tran{a}_\must t' \} \vee 
 \bigwedge_{a\in A_t} [\ct(a)] \bigvee 
\{ \calC(\chi(t')) \mid t \tran{a}_\may t' \} .
\]
The claim can now be easily shown using Lemma~\ref{Lem:opcor} and the
fact that $\calC(\omega) = \omega \ccsim s'$, for each $s'\in
\ltsp(\{\cv(a),ct(a)\mid a\in A\})$.
\qed
\end{proof}

This last result can be used as an alternative
to~\cite[Lemma~2]{AcetoEtAl11} to prove the existence of
characteristic formulae for LTS terms that are in the range of
$\calC$.  Indeed, for those terms, the characteristic formula derived
using the above proposition coincides with the one offered by the
direct construction given in the above-cited reference.

\section{Partial bisimulation}\label{Sect:PB}

The partial bisimulation preorder has been proposed
in~\cite{Baetenetal} as a suitable behavioural relation over LTSs for
studying the theory of supervisory control~\cite{RW87} in a
concurrency-theoretic framework. Formally, the notion of partial
bisimulation is defined over LTSs with a set of actions $A$ and a
so-called {\em bisimulation set} $B\subseteq A$.  The LTSs considered
in~\cite{Baetenetal} also include a termination predicate $\darrow$
over states. For the sake of simplicity, and since its role is
orthogonal to our aims in this paper, instead of extending MTSs and
their refinements and/or covariant-contravariant simulations with such a
predicate, we simply omit this predicate in what follows.

\begin{definition}\label{Def:PB}
A {\em partial bisimulation with bisimulation set $B$} between two LTSs
$P$ and $Q$ is a relation $R\subseteq P\times Q$ such that, whenever
$p \mathrel{R} q$:
\begin{itemize}
\item For all $a\in A$, if $p\tran{a}p'$ then there exists some
 $q\tran{a}q'$ with $p'\mathrel{R} q'$.
\item For all $b\in B$, if $q\tran{b}q'$ then there exists some
 $p\tran{b}p'$ with $p' \mathrel{R} q'$.
\end{itemize}
We write $p\pbsim q$ if $p \mathrel{R} q$ for some partial
bisimulation with bisimulation set $B$.
\end{definition}

It is easy to see that partial bisimulation with bisimulation set $B$
is a particular case of covariant-contravariant simulation.
\begin{proposition}\label{Prop:PBasCC}
Let $P$ be an LTS. A relation $R$ is a partial bisimulation with
bisimulation set $B$ iff it is a covariant-contravariant simulation
for the same LTS when it is seen as a covariant-contravariant LTS with signature $A^r = A\setminus B$, $A^l = \emptyset$ and
$A^{bi} = B$. As a consequence we have $p \pbsim q$ iff $p \ccsim q$, for each $p,q\in P$.
\end{proposition}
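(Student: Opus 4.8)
The plan is to prove this by directly comparing the defining clauses of the two relations under the stated signature. The key observation is that partial bisimulation and covariant-contravariant simulation are defined by essentially the same two-clause pattern, and the signature $A^r = A\setminus B$, $A^l = \emptyset$, $A^{bi} = B$ is chosen precisely so that the two clauses coincide.

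First I would expand the covariant-contravariant simulation conditions for this specific signature. The ``forward'' clause requires matching for all $a\in A^r\cup A^{bi}$; since $A^r\cup A^{bi} = (A\setminus B)\cup B = A$, this is exactly the first clause of partial bisimulation, namely that every $p\tran{a}p'$ is matched by some $q\tran{a}q'$ with $p'\mathrel{R}q'$. The ``backward'' clause requires matching for all $a\in A^l\cup A^{bi}$; since $A^l = \emptyset$, we have $A^l\cup A^{bi} = B$, so this is exactly the second clause of partial bisimulation, namely that every $q\tran{b}q'$ with $b\in B$ is matched by some $p\tran{b}p'$ with $p'\mathrel{R}q'$. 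Thus the two sets of conditions are literally identical clause by clause, which gives the first biconditional: $R$ is a partial bisimulation with bisimulation set $B$ if and only if $R$ is a covariant-contravariant simulation for the signature $(A\setminus B,\emptyset,B)$.

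The consequence about the preorders then follows immediately. Since $p\pbsim q$ means $p\mathrel{R}q$ for some partial bisimulation $R$, and $p\ccsim q$ means $p\mathrel{R}q$ for some covariant-contravariant simulation $R$, and the two classes of relations coincide by the first part, the witnessing relations are interchangeable. Hence $p\pbsim q$ holds exactly when $p\ccsim q$ holds, for all $p,q\in P$.

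Honestly, there is no real obstacle here: the entire content is the verification that the index sets $A^r\cup A^{bi}$ and $A^l\cup A^{bi}$ reduce to $A$ and $B$ respectively under the given partition. The only point requiring any care is to make this set-theoretic bookkeeping explicit so that the reader sees the clauses of Definition~\ref{Def:CCsim} and Definition~\ref{Def:PB} line up, rather than merely asserting the correspondence. Once that identification is stated, both the relation-level equivalence and the preorder-level consequence are immediate.
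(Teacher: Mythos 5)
Your proof is correct and follows the same route as the paper, which simply states that the result is immediate from the definitions; your explicit bookkeeping that $A^r\cup A^{bi}=A$ and $A^l\cup A^{bi}=B$ is exactly the verification the paper leaves to the reader.
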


\begin{proof}
Immediate from the definitions.  \qed
\end{proof}

\begin{remark}
Note that, in the light of the discussion in
Section~\ref{Sect:discussion}, after having changed the signature of
the LTS $P$ in the manner described in the statement of the above result,
checking whether $p \pbsim q$ holds in $P$ can always be
reduced to verifying whether $p \ccsim q$ holds in
$\calC(\calM(P))$. This check does {\em not} involve any bivariant
action.
\end{remark}

As a corollary of the above proposition, we immediately obtain the
following result, that indicates us that, instead of the modal logic used
in~\cite{Baetenetal} to characterize the partial bisimulation preorder
with bisimulation set $B$, one can use the simpler, negation-free
logic for the covariant-contravariant simulation preorder.

\begin{corollary}
Let $p,q$ be states in some image-finite LTS. Then $p \pbsim q$ iff
the collection of formulae in Definition~\ref{Def:formulaeCC} over the
signature $A^r = A\setminus B$, $A^l = \emptyset$ and $A^{bi} = B$
satisfied by $p$ is included in the collection of formulae satisfied
by $q$.
\end{corollary}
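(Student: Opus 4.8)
The plan is to obtain this corollary as an immediate consequence of chaining together two results already established in the paper, namely Proposition~\ref{Prop:PBasCC} and Proposition~\ref{Prop:modcharcc}. First I would invoke Proposition~\ref{Prop:PBasCC} to translate the partial-bisimulation question into a covariant-contravariant question. That proposition tells us that, when the given LTS is viewed as a covariant-contravariant LTS with signature $A^r = A\setminus B$, $A^l = \emptyset$ and $A^{bi} = B$, a relation is a partial bisimulation with bisimulation set $B$ exactly when it is a covariant-contravariant simulation; in particular $p \pbsim q$ iff $p \ccsim q$ for all $p,q$ in the LTS.

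The second step is to apply the modal characterization of $\ccsim$ supplied by Proposition~\ref{Prop:modcharcc}. Since $p$ and $q$ are states of one and the same LTS, they trivially have the same signature, namely the one fixed above; and the LTS is image finite by hypothesis, so the proposition applies. It yields that $p \ccsim q$ iff the collection of covariant-contravariant modal formulae over that signature satisfied by $p$ is contained in the collection satisfied by $q$. Here it is worth noting explicitly that, because $A^l = \emptyset$, the logic of Definition~\ref{Def:formulaeCC} over this signature admits the diamond operator $\langle a\rangle$ for every $a\in A^r\cup A^{bi} = A$ and the box operator $[b]$ for every $b\in A^l\cup A^{bi} = B$, which is precisely the negation-free fragment the corollary refers to.

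Composing these two equivalences gives the desired biconditional: $p \pbsim q$ iff the set of formulae satisfied by $p$ is included in the set satisfied by $q$. I do not anticipate any real obstacle here, as the argument is purely a matter of substituting one characterization into the other; the only points requiring a moment's care are checking that the hypotheses of Proposition~\ref{Prop:modcharcc} (image finiteness, and that $p,q$ share a signature) are met, and observing that the instantiated signature collapses the general covariant-contravariant logic to exactly the logic named in the statement. As in Remark~\ref{rem:image-fin} and the comment following Proposition~\ref{Prop:modcharcc}, the image-finiteness assumption is needed only for the implication from formula inclusion to $\ccsim$ (hence to $\pbsim$), the converse direction holding for arbitrary LTSs.
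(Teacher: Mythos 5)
Your proposal is correct and matches the paper's own derivation: the corollary is obtained exactly by composing Proposition~\ref{Prop:PBasCC} (partial bisimulation with bisimulation set $B$ coincides with $\ccsim$ under the signature $A^r = A\setminus B$, $A^l=\emptyset$, $A^{bi}=B$) with the modal characterization of $\ccsim$ in Proposition~\ref{Prop:modcharcc}. Your additional remarks on image finiteness and on how the signature instantiates the logic of Definition~\ref{Def:formulaeCC} are accurate and consistent with the paper's discussion.
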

Note also that, as a corollary of Proposition~\ref{Prop:PBasCC}, the
translations of LTSs and formulae defined in
Section~\ref{cc-to-modal:sec} can be applied to embed LTSs modulo the
partial bisimulation preorder into modal transition systems modulo
refinement. In this case, however, there is an easier alternative transformation
that does not require the extra state $u$.

\begin{definition}\label{Def:PB2R}
Let $P$ be an LTS over a set of actions $A$ with a bisimulation set 
$B\subseteq A$.
Then the MTS $\calN(P)$ is constructed as follows:
\begin{itemize}
\item The set of states is that of $P$.
\item For each transition $p\tran{a} p'$ in $P$, we add a transition
 $p\tran{a}_\may p'$ in $\calN(P)$.
\item For each transition $p\tran{b} p'$ in $P$ with $b\in B$, we add a transition $p\tran{b}_\must p'$ in $\calN(P)$.
\item There are no other transitions in $\calN(P)$.
\end{itemize}
\end{definition}

\begin{proposition}
$R$ is a partial bisimulation with bisimulation set $B$ between $P$ and $Q$ 
iff $R^{-1}$ is a refinement between $\calN(Q)$ and $\calN(P)$.
\end{proposition}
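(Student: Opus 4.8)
The plan is to observe that the statement is an immediate unfolding of the two definitions, provided one keeps careful track of which transition relation of $\calN$ encodes which clause of partial bisimulation. Since $\calN$ leaves the state set untouched, the relation $R^{-1}\subseteq Q\times P$ is automatically a well-typed candidate refinement between $\calN(Q)$ and $\calN(P)$; the only real content is matching the clauses.

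First I would record the two basic facts about $\calN$ that follow directly from Definition~\ref{Def:PB2R}: for any LTS $P$, one has $p\tran{a}_\may p'$ in $\calN(P)$ if and only if $p\tran{a}p'$ in $P$, and $p\tran{b}_\must p'$ in $\calN(P)$ if and only if $p\tran{b}p'$ in $P$ and $b\in B$. The same two equivalences hold for $Q$ and $\calN(Q)$. These facts are the dictionary translating $\calN$-transitions back into $P$- and $Q$-transitions.

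Next I would spell out what it means for $R^{-1}$ to be a refinement between $\calN(Q)$ and $\calN(P)$, using Definition~\ref{Def:modref} with the related pair written $q\mathrel{R^{-1}}p$ (equivalently $p\mathrel{R}q$), so that the left component is $q$ and the right component is $p$. The must-clause then reads: $q\tran{a}_\must q'$ in $\calN(Q)$ implies there is $p'$ with $p\tran{a}_\must p'$ in $\calN(P)$ and $p'\mathrel{R}q'$. By the facts above this is exactly: for every $b\in B$, if $q\tran{b}q'$ in $Q$ then there is $p'$ with $p\tran{b}p'$ in $P$ and $p'\mathrel{R}q'$, which is the second clause of Definition~\ref{Def:PB}. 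Symmetrically, the may-clause reads: $p\tran{a}_\may p'$ in $\calN(P)$ implies there is $q'$ with $q\tran{a}_\may q'$ in $\calN(Q)$ and $p'\mathrel{R}q'$; by the facts above this is precisely the first (simulation) clause of Definition~\ref{Def:PB}, ranging over all $a\in A$. Since the two refinement clauses are logically equivalent, clause by clause, to the two partial-bisimulation clauses for every related pair, the biconditional follows without further argument.

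The only point that requires care — and the closest thing to an obstacle — is the bookkeeping of the inversion: refinement matches must-transitions from the left component to the right, and may-transitions from the right component back to the left, so passing to $R^{-1}$ and swapping the roles of $P$ and $Q$ is exactly what realigns these two directions with the ``$p$ simulated by $q$ on all of $A$, together with $q$ simulated by $p$ on $B$'' shape of partial bisimulation. Once the correspondence between $\may/\must$ and ``all actions''/``$B$-actions'' is fixed, no computation remains.
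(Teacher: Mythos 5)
Your proof is correct and follows essentially the same route as the paper's: both arguments amount to unfolding Definition~\ref{Def:PB2R} to translate may/must transitions of $\calN(P)$ and $\calN(Q)$ back into transitions of $P$ and $Q$ (with must transitions existing exactly for $B$-labelled ones), and then observing that, after inverting the relation and swapping the two systems, the two refinement clauses coincide with the two partial-bisimulation clauses. The paper writes this out as two separate implications while you phrase it as a clause-by-clause equivalence, but the content is identical.
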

\begin{proof}
$(\Rightarrow)$ Assume that $R$ is a partial bisimulation with
bisimulation set $B$ and suppose that $q \mathrel{R^{-1}} p$.  If $p\tran{a}_\may
p'$ in $\calN(P)$ then $p\tran{a} p'$ in $P$. Since $R$ is a partial
bisimulation, there is some $q\tran{a} q'$ in $Q$ with $p' \mathrel{R} q'$ and, by
construction, $q\tran{a}_\may q'$ in $\calN(Q)$ with $q' \mathrel{R^{-1}} p'$.
Now, if $q\tran{a}_\must q'$ in $\calN(Q)$ then $q\tran{a} q'$ in $Q$
with $a\in B$. Since $R$ is a partial bisimulation and $p\mathrel{R} q$, there is some 
$p\tran{a} p'$ in $P$ with $p' \mathrel{R} q'$ and hence $p\tran{a}_\must p'$ in
$\calN(P)$, as required.

$(\Leftarrow)$
Analogous.
\qed
\end{proof}

\begin{remark}\label{Rem:0least}
In the special case $B=\emptyset$, the partial bisimulation preorder
is just the standard simulation preorder. Therefore, for the LTS defined by the term $0$, we have $0\pbsim p$ for each state
$p$ in any LTS $P$.  Since $B=\emptyset$, all the modal transition
systems $\calN(P)$ that result from the translation of an LTS $P$ will
have no must transitions; for such modal transition systems,
$\calN(P)\sqsubseteq 0$ always holds.  Indeed, in that case
$\sqsubseteq$ coincides with the inverse of the simulation preorder
over MTSs.
\end{remark}
The drawback of the direct transformation presented in
Definition~\ref{Def:PB2R}, as compared to that in
Section~\ref{cc-to-modal:sec}, is that it does not preserve the
satisfaction of modal formulae.  The problem lies in the fact that,
while the existential modality $\langle a\rangle$ allows any
transition with $a\in A$ in the partial bisimulation framework, it
requires a must transition in the setting of MTSs.

As we have seen, it is easy to express partial bisimulations as a 
special case of covariant-contravariant simulations. It is therefore natural
to wonder whether the converse also holds. We present some
indications that the partial bisimulation framework is strictly less
expressive than both modal refinements and covariant-contravariant
simulations.  

Let us assume, by way of example, that the set of actions $A$ is
partitioned into $A^r = \{a\}$ and $A^l = \{b\}$---so the set of
bivariant actions is empty. In this setting, there cannot be a
translation $\calT$ from LTSs modulo $\ccsim$ into LTSs modulo
$\pbsim$ that satisfies the following natural conditions (by abuse of
notation, we identify an LTS $P$ with a specific state $p$):
\begin{enumerate}
\item \label{eq:equivalence}
 For all $p$ and $q$, $p \ccsim q \iff \calT(p) \pbsim \calT(q)$.
\item \label{eq:homomorphism} $\calT$ is a homomorphism with respect
 to $+$, that is, $\calT(p+q) = \calT(p)+T(q)$, where $+$ denotes the
 standard notion of nondeterministic composition of LTSs from
 CCS~\cite{Mi89}. (Intuitively, this compositionality requirement
 states that the translation only uses `local information'.)
\item \label{eq:zero} There is an $n$ such that $\calT(b^n)$ is not
 simulation equivalent to $\calT(0)$, where $b^n$ denotes an LTS
 consisting of $n$ consecutive $b$-labelled transitions.
\end{enumerate}
Indeed, observe that, by condition~\ref{eq:homomorphism}, 
\[
\calT(p) = \calT(p+0) = \calT(p) + \calT(0) \quad\textrm{for each $p$},  
\]
and therefore $\calT(p) + \calT(0) \pbsim \calT(p)$. 
This means that $\calT(0) \lesssim \calT(p)$ for each $p$, where $\lesssim$ is 
the simulation preorder. 
In particular, $\calT(0) \lesssim \calT(\bot)$ where $\bot$ is the process 
consisting of a $b$-labelled loop with one state, which is the least element 
with respect to $\ccsim$. 

Note now that $\bot \ccsim b^{n+1} \ccsim b^n \ccsim 0$ for each $n>0$. 
Therefore, by condition~\ref{eq:equivalence},
\[
\calT(\bot) \pbsim \calT( b^{n+1})\pbsim \calT(b^n) \pbsim \calT(0) \quad\textrm{for each $n>0$}. 
\]
Hence,
\[
\calT(\bot) \lesssim \calT(b^n)\lesssim \calT(0)\lesssim \calT(\bot) 
\quad\textrm{for each $n>0$}.
\]
This yields that, for each $n>0$, $\calT(b^n)$ is simulation equivalent to 
$\calT(0)$, which contradicts condition~\ref{eq:zero}.
(Note that we have only used the soundness of the transformation $\calT$.) 

This is clearly indicating that any $\calT$ that is compositional with
respect to $+$ and is sound, in the sense of
condition~\ref{eq:equivalence}, would have to be very odd indeed, if
it exists at all.  Modulo simulation equivalence, such a translation
would have to conflate a non-well-founded descending chain of LTSs
into a single point. 

We end this section with a companion result. 
\begin{proposition}\label{Prop:noconversionintosim}
Assume that $a\in A^r$ and $b\in A^l$. Suppose furthermore that
$B=\emptyset$. Then there is no translation $\calT$ from LTSs modulo
$\ccsim$ into LTSs modulo $\pbsim$ that satisfies
conditions~\ref{eq:equivalence} and~\ref{eq:homomorphism} above.
\end{proposition}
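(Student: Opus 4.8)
The plan is to reuse the machinery of the three-condition argument that immediately precedes this proposition, but to replace the appeal to condition~\ref{eq:zero} by an appeal to the \emph{completeness} half of condition~\ref{eq:equivalence}, which becomes usable precisely because $B=\emptyset$. The crucial observation is that, by Remark~\ref{Rem:0least}, when $B=\emptyset$ the preorder $\pbsim$ coincides with the ordinary simulation preorder $\lesssim$. Hence condition~\ref{eq:equivalence} reads $p\ccsim q \iff \calT(p)\lesssim \calT(q)$, and in particular its right-to-left direction now lets us convert a simulation-preorder inequality between translated processes back into a $\ccsim$-relation between the originals.

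First I would extract, from condition~\ref{eq:homomorphism} alone, the inequality $\calT(0)\lesssim \calT(b)$, where $b$ (i.e.\ $b^1$ in the notation used above) is the one-transition LTS performing $b\tran{b}0$. Since $b+0=b$, condition~\ref{eq:homomorphism} gives $\calT(b)=\calT(b+0)=\calT(b)+\calT(0)$; as every summand simulates into a sum, we obtain $\calT(0)\lesssim \calT(b)+\calT(0)=\calT(b)$. This is exactly the summand step used in the preceding argument, and it yields a genuine inequality with respect to $\lesssim$, independently of the value of $B$.

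The second step is where the hypothesis $B=\emptyset$ does the real work. Since $\pbsim$ equals $\lesssim$, the inequality $\calT(0)\lesssim \calT(b)$ is the same as $\calT(0)\pbsim \calT(b)$, so the right-to-left direction of condition~\ref{eq:equivalence} forces $0\ccsim b$. But this contradicts the contravariance of $b$: because $b\in A^l$, any covariant-contravariant simulation witnessing $0\ccsim b$ would have to match the transition $b\tran{b}0$ by a $b$-transition out of $0$ (Definition~\ref{Def:CCsim}), and $0$ has no outgoing transitions. Hence $0\not\ccsim b$, and no $\calT$ satisfying conditions~\ref{eq:equivalence} and~\ref{eq:homomorphism} can exist. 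Notice that, unlike the earlier argument, we need neither the descending chain $\bot\ccsim\cdots\ccsim b^{n}\ccsim 0$ nor the intermediate conclusion that $\calT(b^n)$ is simulation equivalent to $\calT(0)$; a single application of completeness at $b^1$ suffices.

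The main obstacle, conceptually rather than technically, is to see why discarding condition~\ref{eq:zero} is compensated exactly by assuming $B=\emptyset$. The summand argument only ever delivers an inequality with respect to $\lesssim$, whereas the completeness direction of condition~\ref{eq:equivalence} speaks about $\pbsim$; in general these cannot be matched up, which is why the earlier argument had to route its contradiction through condition~\ref{eq:zero}. It is the collapse $\pbsim=\lesssim$ forced by $B=\emptyset$ that closes this gap. Everything else is elementary: one uses only that a summand simulates into a sum, and that $0$ cannot simulate a process with a pending contravariant transition.
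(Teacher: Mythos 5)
Your proposal is correct and follows essentially the same route as the paper's own proof: both use condition~\ref{eq:homomorphism} to obtain a summand-into-sum inequality for $\pbsim$ (which collapses to the simulation preorder since $B=\emptyset$), and then invoke the completeness direction of condition~\ref{eq:equivalence} to derive a false $\ccsim$-relation that is refuted by the contravariant action $b$. The only difference is the choice of witnesses --- you derive the false claim $0\ccsim b$, whereas the paper derives $a\not\ccsim a+b$ from $\calT(a)\pbsim\calT(a+b)$ --- which is immaterial.
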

\begin{proof}
Assume, towards a contradiction, that $\calT$ is a translation from
LTSs modulo $\ccsim$ into LTSs modulo $\pbsim$ that satisfies the
conditions in the statement of the proposition. Recall that, when $B$
is empty, $\pbsim$ is the simulation preorder (see
Remark~\ref{Rem:0least}). Therefore, using
condition~\ref{eq:homomorphism}, for each $p$ and $q$, we have that
\[
\calT(p) \pbsim  \calT(p) + \calT(q) = \calT(p+q) .
\]
This means, in particular, that $\calT(a) \pbsim \calT(a+b)$. By
condition~\ref{eq:equivalence}, it follows that $a \ccsim a+b$. This
is, however, false since $b$ is in $A^l$.  Therefore $\calT$ cannot exist. \qed
\end{proof}

\section{Institutions and institution morphisms}\label{Sect:institutions}

After defining mutual transformations between MTSs modulo refinement
and LTSs modulo the covariant-contravariant simulation preorder, we
wanted to know how close this relationship was. In particular, it is
interesting to ask whether, in a precise sense, there is a one-to-one
correspondence between those models.  Thus, in order to gain more
insight into the relationship between MTSs modulo refinement and LTSs
modulo the covariant-contravariant simulation preorder, we will now
study their connections at a more abstract level, in the context of
institutions~\cite{Inst}.  When compared at the level of institutions
it turns out that the correspondence between these models is not
one-to-one.

Admittedly, institutions and their morphisms have not been used often
in the literature to compare the expressive power of models of
concurrency. However, those notions were proposed by Burstall and Goguen
explicitly in order to relate and translate logical systems used in
computer science. We therefore think that it is interesting to use
them to study the connections between MTSs modulo refinement and LTSs
modulo the covariant-contravariant simulation preorder within a
systematic framework. An alternative, and perhaps more standard, approach
would be to study the relationships between the models we consider in
this paper following the lead of~\cite{WinskelN1995}.

To make the paper as self-contained as possible, we first provide the
formal definition of institution and some examples. In what follows,
the notation $|\mathbf{C}|$ is used to denote the set of objects in a
category $\mathbf{C}$.

\begin{definition}[\cite{Inst}]
An \emph{institution} is a quadruple $\calI= (\Sign, \sen,
\Mod,\models)$ such that
\begin{itemize}
\item $\Sign$ is a category whose objects are called signatures,
\item $\sen : \Sign \lra \Set$ is a functor associating to each signature
  $\Sigma$ a set of sentences over that signature,
\item $\Mod : \Sign^\mathrm{op} \lra\Cat$ is a functor that gives for each
  signature $\Sigma$ a category whose objects are called $\Sigma$-models,
\item $\models$ is a function associating to each $\Sigma \in |\Sign|$ a binary
  relation ${\models}_\Sigma \subseteq |\Mod(\Sigma)|\times \sen(\Sigma)$
  called $\Sigma$-satisfaction, 
\end{itemize}
so that the following \emph{satisfaction condition} holds for any $H: \Sigma
\rightarrow \Sigma'$ in $\Sign$, $M'\in |\Mod(\Sigma')|$ and all 
  $\varphi\in \sen(\Sigma)$:
\[
M'\models_{\Sigma'} \sen(H)(\varphi) \iff \Mod(H)(M')\models_{\Sigma}\varphi.
\]
\end{definition}

Institutions are particularly well-suited to capture, in a `simple'
though very abstract manner, the notion of model and, more broadly, of
logical system.  They can be used to prove general results about
logical systems without the need to consider all their possible
different instances.  For example, many-sorted equational logic can be
naturally turned into an institution as follows:
\begin{itemize}
\item $\Sign$ is the category of ``ordinary'' signatures $\Sigma=(S,\Omega)$
  composed of a set $S$ of sorts and a set $\Omega$ of many-sorted operations
  $f : s_1 \dots s_n \to s$ of varying arities.
  Morphisms are defined by a pair of functions $\mu_S: S\lra S'$ between sorts and
  $\mu_\Omega:\Omega\lra\Omega'$ between operations so that $\mu_\Omega(f) : \mu_S(s_1)\dots
  \mu_S(s_n)\to \mu_S(s)$.
\item $\sen(\Sigma)$ returns the set of all equations $(\forall X)\, t = t'$
  that can be built with the terms over $\Sigma$ and the set of sorted
  variables $X$. 
  For a mapping $H:\Sigma\lra \Sigma'$ of signatures, $\sen(H)$ simply
  translates functions in $\Sigma$ in an equation $(\forall X)\, t = t'$ to the
  corresponding functions in $\Sigma'$. 
\item $\Mod(\Sigma)$ is simply the category of $\Sigma$-algebras together
  with the morphisms between them.
  $\Mod(H)$, for $H:\Sigma \lra \Sigma'$, is the reduct mapping taking a
  $\Sigma'$-algebra $A$ to its reduct $\Sigma$-algebra, usually denoted by
  $A|_H$. 
\item Finally, the satisfaction condition can be proved to hold in many-sorted
  equational logic.
\end{itemize}
Other relevant institutions commonly used in the specialized literature
\cite{Inst,Instmorph} are those for Horn logic, first-order
logic, temporal logic, or the Common Algebraic Specification Language (CASL).

Next, we proceed to define appropriate institutions for the notions of
covariant-contravariant simulation preorder and modal transition systems.

\begin{definition}
The institution $\calI_\cc= (\Sign_\cc, \sen_\cc, \Mod_\cc,
\models_\cc)$, associated with the logic for the covariant-contravariant
simulation preorder, is defined as follows.
\begin{itemize}
\item $\Sign_\cc$ has as objects triples $(A,B,C)$ of pairwise
 disjoint sets and morphisms $f :A\cup B\cup C\lra A'\cup B'\cup C'$
 with $f(A)\subseteq A'$, $f(B)\subseteq B'$, and $f(C)\subseteq C'$.
\item $\sen_\cc(A,B,C)$ is the set of formulae in the logic
 characterizing the covariant-contravariant simulation preorder, with
 $A$ the set of covariant actions, $B$ the set of contravariant
 actions, and $C$ the set of bivariant actions.  For each signature
 morphism $f$ and formula $\varphi$, the formula $\sen(f)(\varphi)$ is
 obtained from $\varphi$ by replacing each action $a$ with $f(a)$.
\item $\Mod_\cc(A,B,C)$ is the category of LTSs over the set of
  actions $A\cup B\cup C$, with a distinguished (initial) state. In
  Section~\ref{Sect:MTS} we introduced the notation $(P,p)$ to denote
  a state $p$ inside a system $P$; here we will use the same notation
  to denote any object of $\Mod_\cc(A,B,C)$.  Then, a morphism from
  $(P,p)$ to $(Q,q)$ is a covariant-contravariant simulation $R$ such
  that $(p,q)\in R$.

 Now, if $f : A\cup B\cup C\lra A'\cup B'\cup C'$ is a signature
 morphism, then
 \[
\Mod_\cc(f): \Mod_\cc(A',B',C')\lra \Mod_\cc(A,B,C)
\]
 maps $(P,p)$ to $(P|_f,p|_f)$ and $R:P\lra Q$ to $R_f: P|_f\lra Q|_f$, where:
 \begin{itemize}
 \item The set of states of $P|_f$ is the same as that of $P$, and the
   distinguished state remains the same: $p|_f=p$.
 \item $s\tran{a} s'$ in $P|_f$ if $s\tran{f(a)} s'$ in $P$.
 \item $R|_f$ coincides with $R$.
 \end{itemize}
\item $(P,s)\models_\cc\varphi$ if $(P,s)\models \varphi$ using the notion
 of satisfaction associated with the logic for the covariant-contravariant 
 simulation preorder given in Definition~\ref{Def:formulaeCC}.
\end{itemize}
\end{definition}

\begin{proposition}\label{prop:I_cc}
$\calI_\cc$ is an institution. 
\end{proposition}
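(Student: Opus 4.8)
The plan is to check, in turn, that each of the four ingredients of $\calI_\cc$ is well defined and (where appropriate) functorial, and then to establish the satisfaction condition by structural induction. First I would verify that $\Sign_\cc$ is a category: the identity function on $A\cup B\cup C$ trivially preserves the three blocks of the partition, and if $f\colon(A,B,C)\to(A',B',C')$ and $g\colon(A',B',C')\to(A'',B'',C'')$ each send covariant (resp.\ contravariant, bivariant) actions to covariant (resp.\ contravariant, bivariant) actions, then so does $g\comp f$; associativity and the identity laws are inherited from $\Set$. Functoriality of $\sen_\cc$ is equally routine, since $\sen_\cc(f)$ merely relabels actions according to $f$, so that $\sen_\cc(\mathrm{id})$ is the identity and $\sen_\cc(g\comp f)=\sen_\cc(g)\comp\sen_\cc(f)$.

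The next, and more substantial, step is to show that $\Mod_\cc$ is a contravariant functor, which splits into three checks. I would first argue that each $\Mod_\cc(A,B,C)$ is genuinely a category: the identity relation on the states of any LTS is a covariant-contravariant simulation containing every pair $(p,p)$, and the relational composition of two covariant-contravariant simulations is again one---this last fact follows by the usual diagram chase, using the covariant clause of Definition~\ref{Def:CCsim} for actions in $A\cup C$ and the contravariant clause for actions in $B\cup C$. Associativity and the identity laws then come for free from those of relational composition. Second, and this is the key point, I would check that $\Mod_\cc(f)$ is well defined on morphisms, i.e.\ that the reduct $R|_f=R$ of a covariant-contravariant simulation $R$ between $(P,p)$ and $(Q,q)$ over $(A',B',C')$ is a covariant-contravariant simulation between the reducts $(P|_f,p)$ and $(Q|_f,q)$ over $(A,B,C)$. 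This is exactly where the requirement that signature morphisms preserve the partition is used: for $a\in A$ we have $f(a)\in A'$, and a transition $p\tran{a}p'$ in $P|_f$ is by definition a transition $p\tran{f(a)}p'$ in $P$, which $R$ matches covariantly because $f(a)$ is covariant in $(A',B',C')$; the contravariant and bivariant cases are symmetric. Third, functoriality of $\Mod_\cc$ itself, namely $\Mod_\cc(\mathrm{id})=\mathrm{id}$ and $\Mod_\cc(g\comp f)=\Mod_\cc(f)\comp\Mod_\cc(g)$, follows from the identity $(P|_g)|_f=P|_{g\comp f}$ at the level of transitions, together with the fact that reducts leave the underlying relations unchanged.

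Finally I would prove the satisfaction condition: for $H\colon(A,B,C)\to(A',B',C')$, any model $(P,p)\in|\Mod_\cc(A',B',C')|$ and any $\varphi\in\sen_\cc(A,B,C)$, that $(P,p)\models_\cc \sen_\cc(H)(\varphi)$ holds iff $(P|_H,p)\models_\cc\varphi$. This goes by structural induction on $\varphi$ following Definition~\ref{Def:formulaeCC}. The propositional cases ($\bot$, $\top$, $\land$, $\lor$) are immediate, so the work lies in the two modal cases. For $\varphi=\langle a\rangle\psi$ with $a\in A\cup C$, the formula $\sen_\cc(H)(\varphi)=\langle H(a)\rangle\sen_\cc(H)(\psi)$ is again a well-formed diamond since $H(a)\in A'\cup C'$, and the defining equivalence ``$p\tran{a}p'$ in $P|_H$ iff $p\tran{H(a)}p'$ in $P$'' lets me transport the existential witness between the two models while the induction hypothesis disposes of $\psi$; the box case $\varphi=[a]\psi$ with $a\in B\cup C$ is dual, transporting the universal quantification over successors.

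I expect the main obstacle to be not any single computation but the discipline of keeping the variance bookkeeping straight throughout: the well-definedness of $\Mod_\cc$ on morphisms and the two modal cases of the satisfaction condition all hinge on the fact that $H$ maps each block of the partition into the corresponding block, so that diamonds stay diamonds and boxes stay boxes under relabelling. Once that invariant is isolated, every remaining step reduces to unwinding the definition of reduct and applying Definition~\ref{Def:CCsim} or Definition~\ref{Def:formulaeCC}.
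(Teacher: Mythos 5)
Your proposal is correct and follows essentially the same route as the paper: the paper dismisses the category/functor checks as routine (which you spell out in more detail, correctly identifying that preservation of the partition by signature morphisms is what makes reducts of covariant-contravariant simulations again covariant-contravariant simulations) and then establishes the satisfaction condition by the same structural induction on $\varphi$, with the same treatment of the two modal cases via the defining equivalence between transitions in $P|_H$ and $H$-relabelled transitions in $P$.
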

\begin{proof}
It is easy to check that all defined notions are indeed 
categories and functors.
As for the satisfaction condition, if $f:A\cup B\cup C\lra A'\cup B'\cup C'$ in
$\Sign_\cc$, $(P',s) \in \Mod_\cc(A',B',C')$, and $\varphi\in \sen_\cc(A,B,C)$, 
then
\[
(P',s)\models_\cc \sen_\cc(f)(\varphi) \iff \Mod_\cc(f)(P',s)\models_\cc \varphi
\]
can be proved by structural induction on $\varphi$. We consider the
possible forms $\varphi$ may have.
\begin{itemize}
\item $\top$ and $\bot$ are trivial.

\item For $\varphi_1\land\varphi_2$:
\begin{eqnarray*}
(P',s)\models_\cc \sen_\cc(f)(\varphi_1\land \varphi_2)
&\iff& (P',s)\models_\cc \sen_\cc(f)(\varphi_1) \land \sen_\cc(f)(\varphi_2)\\
&\stackrel{\mathrm{IH}}{\iff}& (P'|_f,s)\models_\cc \varphi_1 \textrm{ and }
       (P'|_f,s)\models_\cc \varphi_2\\
&\iff& (P'|_f,s)\models_\cc \varphi_1\land \varphi_2 . 
\end{eqnarray*}

\item Analogously for $\varphi_1\lor\varphi_2$.

\item For $\langle a\rangle\varphi$, with $a\in A\cup C$:
\begin{eqnarray*}
\lefteqn{(P',s)\models_\cc \sen_\cc(f)(\langle a\rangle\varphi) } \\
& \iff & (P',s)\models_\cc \langle f(a)\rangle\sen_\cc(f)(\varphi)\\
&\iff& \textrm{there is $s\tran{f(a)} p$ in $P'$ with 
               $(P',p)\models_\cc \sen_\cc(f)(\varphi)$}\\
&\stackrel{\mathrm{def}\ P'|_f,\ \mathrm{IH}}{\iff}& 
       \textrm{there is $s\tran{a} p$ in $P'|_f$ with 
               $(P'|_f,p)\models_\cc \varphi$}\\
&\iff& (P'|_f,s)\models_\cc \langle a\rangle\varphi . 
\end{eqnarray*}

\item For $[a]\varphi$, with $a\in B\cup C$:
\begin{eqnarray*}
\lefteqn{(P',s)\models_\cc \sen_\cc(f)([a]\varphi)}\\
&\iff& (P',s)\models_\cc [f(a)]\sen_\cc(f)(\varphi)\\
&\iff& \textrm{$(P',p)\models_\cc \sen_\cc(f)(\varphi)$
               for all $s\tran{f(a)} p$ in $P'$}\\
&\stackrel{\mathrm{def}\ P'|_f,\ \mathrm{IH}}{\iff}& 
       \textrm{$(P'|_f,p)\models_\cc \varphi$
               for all $s\tran{a} p$ in $P'|_f$}\\
&\iff& (P'|_f,s)\models_\cc  [a]\varphi . 
\end{eqnarray*}
\end{itemize}
This completes the proof. \qed
\end{proof}

In a similar way we define the institution $\calI_\mts$ for modal transition systems.

\begin{definition}
The institution $\calI_\mts= (\Sign_\mts, \sen_\mts, \Mod_\mts,
\models_\mts)$, associated with the logic for refinement over modal
transition systems, is defined as follows.
\begin{itemize}
\item $\Sign_\mts$ is the category of sets.
\item $\sen_\mts(A)$ is the set of formulae over $A$ in the logic presented in Definition~\ref{Def:MTSlogic}.
 The formula $\sen_\mts(f)(\varphi)$ is obtained from $\varphi$ by replacing 
 each action $a$ with $f(a)$.
\item $\Mod_\mts(A)$ is the category of MTSs over the set of
 labels $A$, with a  distinguished (initial) state.
 A morphism from $(M,m)$ to $(N,n)$ is a refinement $R$ such that $(m,n)\in R$.

 If $f: A\lra B$ in $\Sign_\mts$, then 
 $\Mod_\mts(f): \Mod_\mts(B)\lra \Mod_\mts(A)$ maps an MTS
 $(M,m)$ to $(M|_f, m|_f)$ and a morphism $R$ to $R|_f$, where:
 \begin{itemize}
 \item $M|_f$ has the same set of states as $M$ and the same distinguished 
   state: $m|_f=m$.
 \item $p\tran{a}_\may p'$ in $M|_f$ if $p\tran{f(a)}_\may p'$ in $M$.
 \item $p\tran{a}_\must p'$ in $M|_f$ if $p\tran{f(a)}_\must p'$ in $M$.
 \item $R|_f$ coincides with $R$.
 \end{itemize}

\item $\models_\mts$ is the notion of satisfaction presented in
Definition~\ref{Def:MTSlogic}.
\end{itemize}
\end{definition}

\begin{proposition}
$\calI_\mts$ is an institution.
\end{proposition}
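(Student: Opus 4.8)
The plan is to mirror closely the proof of Proposition~\ref{prop:I_cc}, since the institution $\calI_\mts$ is structured exactly like $\calI_\cc$: the only genuine content is (i) checking that the stated data really do form the required categories and functors, and (ii) verifying the satisfaction condition. First I would dispatch the categorical bookkeeping. The category $\Sign_\mts$ is just $\Set$, so there is nothing to verify there, and $\sen_\mts$ is plainly a functor into $\Set$, relabelling along a map $f:A\to B$ while preserving identities and composition. The real functoriality work is for $\Mod_\mts:\Sign_\mts^{\mathrm{op}}\to\Cat$, where I must confirm that each reduct $\Mod_\mts(f)$ is a well-defined functor.

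For this I would verify three things. First, that $M|_f$ is again an MTS: since $p\tran{a}_\must p'$ in $M|_f$ means $p\tran{f(a)}_\must p'$ in $M$, and $M$ satisfies ${\tran{}_\must}\subseteq{\tran{}_\may}$, relabelling transports the inclusion, so ${\tran{}_\must}\subseteq{\tran{}_\may}$ holds in $M|_f$ as well. Second, that the reduct $R|_f=R$ of a refinement $R$ between $M$ and $N$ (over $B$) is a refinement between $M|_f$ and $N|_f$ (over $A$): given $p\mathrel{R}q$, a transition $p\tran{a}_\must p'$ in $M|_f$ unfolds to $p\tran{f(a)}_\must p'$ in $M$, is matched by a must transition of $N$ via $R$, and folds back to a must transition in $N|_f$; the may clause is handled symmetrically. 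Third, that $\Mod_\mts(f)$ preserves identities and composition of refinements, which is immediate because $R|_f$ is literally $R$ and iterated reducts relabel along the composite, i.e.\ $(M|_g)|_f=M|_{g\comp f}$.

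The main step is the satisfaction condition: for $f:A\to B$, an object $(N,n)$ of $\Mod_\mts(B)$ and $\varphi\in\sen_\mts(A)$,
\[
(N,n)\models_\mts \sen_\mts(f)(\varphi)\iff (N|_f,n)\models_\mts\varphi .
\]
I would prove this by structural induction on $\varphi$, exactly as in Proposition~\ref{prop:I_cc}. The propositional cases $\top$, $\bot$, $\land$ and $\lor$ are immediate or follow directly from the induction hypothesis. The two interesting cases are the modalities, and they turn on the single observation that relabelling commutes with the transition relations: $n\tran{a}_\must n'$ in $N|_f$ iff $n\tran{f(a)}_\must n'$ in $N$, and likewise for $\tran{}_\may$. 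For $\langle a\rangle\varphi$ I rewrite $\sen_\mts(f)(\langle a\rangle\varphi)=\langle f(a)\rangle\sen_\mts(f)(\varphi)$, unwind the existential over must transitions of $N$, use the transition correspondence to recast it as an existential over must transitions of $N|_f$, and apply the induction hypothesis; the case $[a]\varphi$ is identical but with a universal quantifier over may transitions.

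Since the argument is a routine transcription of the $\calI_\cc$ proof, I do not anticipate a genuine obstacle. The only point that requires a moment's care is that the two modalities of the MTS logic are interpreted over \emph{different} transition relations ($\langle a\rangle$ over $\tran{}_\must$ and $[a]$ over $\tran{}_\may$), so in the modal cases of the induction I must invoke the correct half of the transition correspondence; once the reduct is known to preserve both $\tran{}_\must$ and $\tran{}_\may$ under relabelling, both cases go through uniformly.
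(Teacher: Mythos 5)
Your proposal is correct and follows essentially the same route as the paper: the satisfaction condition is verified by structural induction on $\varphi$, with the modal cases resting on the fact that the reduct $N|_f$ relabels both $\tran{}_\must$ and $\tran{}_\may$, so that $\langle a\rangle$ is matched against must transitions and $[a]$ against may transitions. The only difference is that you spell out the categorical bookkeeping (well-definedness of $M|_f$ as an MTS and functoriality of the reduct), which the paper declares routine and omits.
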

\begin{proof}
Again, let us just prove the satisfaction condition
\[
(M',s)\models_\mts\sen_\mts(f)(\varphi) \iff
\Mod_\mts(f)(M',s)\models_\mts \varphi , 
\]
for $f:A\lra B$ in $\Sign_\mts$, $(M',s) \in \Mod_\mts(B)$, and
$\varphi\in \sen_\mts(A)$, by induction on $\varphi$. We consider the
possible forms $\varphi$ may have.

\begin{itemize}
\item $\top$ and $\bot$ are trivial.

\item For $\varphi_1\land\varphi_2$:
\begin{eqnarray*}
\lefteqn{(M',s)\models_\mts \sen_\mts(f)(\varphi_1\land \varphi_2)}\\
&\iff& (M',s)\models_\mts \sen_\mts(f)(\varphi_1) \land\sen_\mts(f)(\varphi_2)\\
&\stackrel{\mathrm{IH}}{\iff}& (M'|_f,s)\models_\mts \varphi_1 \textrm{ and }
       (M'|_f,s)\models_\mts \varphi_2\\
&\iff& (M'|_f,s)\models_\mts \varphi_1\land \varphi_2 . 
\end{eqnarray*}

\item Analogously for $\varphi_1\lor\varphi_2$.

\item For $\langle a\rangle\varphi$:
\begin{eqnarray*}
\lefteqn{(M',s)\models_\mts \sen_\mts(f)(\langle a\rangle\varphi)}\\
&\iff& (M',s)\models_\mts \langle f(a)\rangle\sen_\mts(f)(\varphi)\\
&\iff& \textrm{there is $s\tran{f(a)}_\must p$ in $M'$ with 
               $(M',p)\models_\mts \sen_\mts(f)(\varphi)$}\\
&\stackrel{\mathrm{def}\ M'|_f,\ \mathrm{IH}}{\iff}& 
       \textrm{there is $s\tran{a}_\must p$ in $M'|_f$ with 
               $(M'|_f,p)\models_\mts \varphi$}\\
&\iff& (M'|_f,s)\models_\mts \langle a\rangle\varphi .
\end{eqnarray*}

\item For $[a]\varphi$:

\begin{eqnarray*}
\lefteqn{(M',s)\models_\mts \sen_\mts(f)([a]\varphi)}\\
&\iff& (M',s)\models_\mts [f(a)]\sen_\mts(f)(\varphi)\\
&\iff& \textrm{$(M',p)\models_\mts \sen_\mts(f)(\varphi)$
               for all $s\tran{f(a)}_\may p$ in $M'$}\\
&\stackrel{\mathrm{def}\ M'|_f,\ \mathrm{IH}}{\iff}& 
       \textrm{$(M'|_f,p)\models_\mts \varphi$
               for all $s\tran{a}_\may p$ in $M'|_f$}\\
&\iff& (M'|_f,s)\models_\mts [a]\varphi.\phantom{aaaaaaaaaaaaaaaaaaaaaaaaa}\textrm{\qed}  
\end{eqnarray*}

\end{itemize}
\end{proof}

Having abstractly captured covariant-contravariant simulations and modal
transitions systems by means of institutions allows us to try to relate them by
means of the categorical machinery developed to that effect. 
There have been many proposals of what a morphism between institutions should
be and most are collected and discussed in \cite{Instmorph}.
Their conclusion is that there is no canonical notion that fits all situations,
but it is commonly accepted that the most natural is the one we present next.

\begin{definition}[\cite{Instmorph}]
Given institutions $\calI= (\Sign, \sen, \Mod,\models)$ and $\calI'= (\Sign',
\sen', \Mod',\models')$, an institution morphism from $\calI$ to $\calI'$
consists of a functor $\Phi: \Sign \lra \Sign'$, a natural transformation
$\beta:\Mod\Rightarrow \Mod'\comp\Phi$, and a natural transformation $\alpha:
\sen'\comp\Phi\Rightarrow\sen$, such that the condition
\[
M\models_\Sigma\alpha_\Sigma(H)  \iff \beta_\Sigma(M)\models'_{\Phi(\Sigma)} H
\]
holds for each $\Sigma\in|\Sign|$, $M\in|\Mod(\Sigma)|$, and
$H\in\sen'(\Phi(\Sigma))$. 
\end{definition}
The intuition behind these for institution morphisms is that they are ``truth
preserving'' translations from one logical system into another.
As the following result shows, one can indeed translate $\calI_\mts$ into
$\calI_\cc$ using an institution morphism. 

\begin{proposition}\label{inst-morphism:prop}
$(\Phi,\alpha,\beta) : \calI_\mts\lra\calI_\cc$ is an institution morphism,
where:
\begin{itemize}
\item $\Phi: \Sign_\mts\lra\Sign_\cc$ maps $A$ to the triple  
 $(\cv(A),\ct(A),\emptyset)$, with:
 \begin{itemize}
 \item $\cv(A) = \{\cv(a)\mid a\in A\}$ and 
 \item $\ct(A) = \{\ct(a)\mid a\in A\}$.
 \end{itemize}
 For $f:A\lra B$, we define $\Phi(f)(\cv(a)) = \cv(f(a))$ and
 $\Phi(f)(\ct(a)) = \ct(f(a))$. 

\item The natural transformation $\alpha:\sen_\cc\comp\Phi\Rightarrow\sen_\mts$
 translates a formula $\varphi$ in $\sen_\cc(\cv(A),\ct(A),\emptyset)$ as 
 follows:
 \begin{itemize}
 \item $\alpha(\top) =\top$, $\alpha(\bot) = \bot$.
 \item $\alpha(\varphi_1\land\varphi_2) = 
        \alpha(\varphi_1)\land\alpha(\varphi_2)$.
 \item $\alpha(\varphi_1\lor\varphi_2) = 
        \alpha(\varphi_1)\lor\alpha(\varphi_2)$.
 \item $\alpha(\langle \cv(a)\rangle\varphi) = 
        \langle a\rangle\alpha(\varphi)$.
 \item $\alpha([\ct(a)]\varphi) = [a]\alpha(\varphi)$.
 \end{itemize}
\item The natural transformation 
 $\beta : \Mod_\mts \Rightarrow \Mod_\cc \comp\Phi$
 maps an MTS $(M,s)$ in $\Mod_\mts(A)$ to $(\calC(M),s)$, 
 and a morphism $R$ to itself.
\end{itemize}
\end{proposition}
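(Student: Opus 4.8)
The plan is to verify the four requirements of an institution morphism: that $\Phi$ is a functor, that $\alpha$ and $\beta$ are natural transformations of the appropriate variances, and that the satisfaction condition holds. The first three are routine bookkeeping, and the last is the heart of the matter; I will argue that it is essentially a restatement of Proposition~\ref{prop:Add2} once $\alpha_A$ is identified with $\calC^{-1}$ on sentences.

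First I would check that $\Phi:\Sign_\mts\lra\Sign_\cc$ is a functor. On objects, $\cv(A)$ and $\ct(A)$ are disjoint and the bivariant component is empty, so $(\cv(A),\ct(A),\emptyset)$ is a legitimate object of $\Sign_\cc$; on morphisms, $\Phi(f)$ respects the covariant and contravariant components by construction, and preservation of identities and composition is immediate. For $\alpha$ I would first note that each $\alpha_A$ sends a formula over $\Phi(A)$ to a Boudol--Larsen formula over $A$, so its codomain is correct; naturality, i.e.\ $\sen_\mts(f)\comp\alpha_A = \alpha_B\comp\sen_\cc(\Phi(f))$, then follows by a straightforward structural induction, since both routes merely relabel $\cv(a)\mapsto a\mapsto f(a)$ along the top-right path and $\cv(a)\mapsto\cv(f(a))\mapsto f(a)$ along the left-bottom path (and symmetrically for $\ct$).

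The component $\beta$ needs slightly more care, but it uses only results already established. Well-definedness on objects is clear, since $(\calC(M),s)$ is an LTS over $\Phi(A)$ with distinguished state $s$. Well-definedness on morphisms---that a refinement $R$ between two MTSs is again a morphism, i.e.\ a covariant-contravariant simulation, between their $\calC$-images---is exactly Proposition~\ref{modal-to-cc}; and since $\beta_A$ leaves relations untouched it trivially preserves identities and relational composition. For naturality, with the contravariant variance of $\Mod$, I would verify $\beta_A\comp\Mod_\mts(f) = \Mod_\cc(\Phi(f))\comp\beta_B$, which reduces to the identity $\calC(M|_f) = \calC(M)|_{\Phi(f)}$ (with the same distinguished state). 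This is a direct unfolding: a $\cv(a)$-transition on the left comes from a must $f(a)$-transition of $M$, while on the right it comes from a $\cv(f(a))$-transition of $\calC(M)$, hence again from a must $f(a)$-transition of $M$, and symmetrically $\ct(a)$-transitions correspond to may $f(a)$-transitions on both sides.

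The crux is the satisfaction condition
\[
(M,s)\models_\mts \alpha_A(H) \iff \beta_A(M,s)\models_\cc H
\]
for $H\in\sen_\cc(\Phi(A))$. The key observation is that every formula in $\sen_\cc(\cv(A),\ct(A),\emptyset)$ lies in the image of $\calC$---its diamonds can only carry covariant labels $\cv(a)$ and its boxes only contravariant labels $\ct(a)$---so $\alpha_A$ agrees with $\calC^{-1}$ on sentences. Since $\beta_A(M,s)=(\calC(M),s)$, the condition becomes $(M,s)\models\calC^{-1}(H)\iff(\calC(M),s)\models H$, which is precisely Proposition~\ref{prop:Add2}. I expect the only genuine subtlety to be making this identification of $\alpha_A$ with $\calC^{-1}$ explicit; once it is in place the deep content has already been supplied by the earlier proposition, and nothing further needs to be computed.
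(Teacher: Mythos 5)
Your proof is correct, but it reaches the satisfaction condition by a different route than the paper. The paper's proof consists solely of a direct structural induction on $\varphi$ establishing $(M,s)\models_\mts \alpha(\varphi) \iff (\calC(M),s)\models_\cc\varphi$, with the modal cases unfolded exactly as in Proposition~\ref{modal-to-cc:logic}; it leaves the functoriality of $\Phi$ and the naturality of $\alpha$ and $\beta$ implicit. You instead observe that over the signature $(\cv(A),\ct(A),\emptyset)$ every covariant-contravariant formula has diamonds labelled only by $\cv(a)$ and boxes only by $\ct(a)$, hence lies in the image of $\calC$ on sentences, so that $\alpha_A=\calC^{-1}$ and the satisfaction condition is literally Proposition~\ref{prop:Add2}. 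This is a legitimate and more modular shortcut: it avoids redoing the induction (which ultimately lives inside Proposition~\ref{modal-to-cc:logic}, from which Proposition~\ref{prop:Add2} is derived) at the price of making the identification $\alpha_A=\calC^{-1}$ explicit --- which you do. Your additional verifications (functoriality of $\Phi$, naturality of $\alpha$, well-definedness of $\beta$ on morphisms via Proposition~\ref{modal-to-cc}, and the identity $\calC(M|_f)=\calC(M)|_{\Phi(f)}$ for naturality of $\beta$) are correct and in fact more complete than what the paper records.
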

\begin{proof}
For $A$ in $\Sign_\mts$, $(M,s)$ in $\Mod_\mts(A)$, and $\varphi$ in 
$\sen_\cc(\Phi(A))$, we prove the satisfaction condition
\[
(M,s)\models_\mts \alpha(\varphi) \iff \beta(M,s)\models_\cc\varphi
\]
by induction on $\varphi$.
The only non-trivial cases correspond to formulae of the form $\langle \cv(a) \rangle\varphi$ and
$[\ct(a)] \varphi$.
\begin{itemize}
\item For $\langle \cv(a)\rangle\varphi$, we reason thus:
\begin{eqnarray*}
(M,s) \models \alpha(\langle \cv(a)\rangle\varphi) 
&\iff &(M,s)\models \langle a\rangle\alpha(\varphi)\\
&\iff &\textrm{there is $s\tran{a}_\must p$ in $M$ 
               with $(M,p)\models\alpha(\varphi)$}\\
&\stackrel{\mathrm{IH}}{\iff} &
       \textrm{there is $s\tran{\cv(a)} p$ in $\calC(M)$
               with $(\calC(M),p)\models \varphi$}\\
&\iff &(\calC(M),s)\models \langle \cv(a)\rangle\varphi . 
\end{eqnarray*} 

\item For $[\ct(a)]\varphi$, we argue as follows:
\begin{eqnarray*}
(M,s) \models \alpha([\ct(a)]\varphi) 
&\iff &(M,s)\models [a]\alpha(\varphi)\\
&\iff &\textrm{$(M,p)\models\alpha(\varphi)$
               for all $s\tran{a}_\may p$ in $M$}\\
&\stackrel{\mathrm{IH}}{\iff} &
       \textrm{$(\calC(M),p)\models \varphi$
               for all $s\tran{\ct(a)} p$ in $\calC(M)$}\\
&\iff &(\calC(M),s)\models [\ct(a)]\varphi . 
\end{eqnarray*} 
This completes the proof. \qed
\end{itemize}  
\end{proof}
The importance of the above result is that MTSs modulo refinement and their
accompanying modal logic can be `translated in a truth preserving
fashion' into LTSs modulo the covariant-contravariant simulation
preorder and their companion modal logic. Now it is natural to ask
whether one can consider $\calI_\mts$ a `subinstitution' of
$\calI_\cc$. Once again there are several related notions of subinstitution,
but the minimum requirement that they all make is that the functor $\beta$,
which is used to translate the models between the institutions, is an
equivalence of categories.
Next we will show that even with a `natural' institution morphism from 
$\calI_\mts$ to $\calI_\cc$, it is not possible to present the former as a
subinstitutions of the latter by means of an embedding.

Recall that an object in a category is {\em weakly final} if any other
object has at least one arrow into it.

\begin{proposition}\label{final-object}
$\Mod_\cc(A,B,\emptyset)$ has weakly final objects but $\Mod_\mts(A)$ does 
not.  
\end{proposition}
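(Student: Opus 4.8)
The plan is to establish the two halves separately: I would exhibit one explicit weakly final object in $\Mod_\cc(A,B,\emptyset)$, and then derive a contradiction from the mere assumption that one exists in $\Mod_\mts(A)$. Throughout I would use that a morphism $(P,p)\to(Q,q)$ in $\Mod_\cc$ is exactly a witness of $p\ccsim q$, and a morphism $(M,m)\to(N,n)$ in $\Mod_\mts$ is exactly a witness of $m\sqsubseteq n$; hence ``weakly final'' means ``dominated from below by $\ccsim$ (resp.\ $\sqsubseteq$) from every object''.

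For the first claim I would take the one-state LTS $Q$ with carrier $\{q\}$, transitions $q\tran{a}q$ for every covariant action $a\in A$, and no contravariant transitions at all. I claim $(Q,q)$ is weakly final. Given any $(P,p)$, consider the total relation $R=\{(p',q)\mid p' \text{ a state of } P\}$. The covariant clause holds because any move $p'\tran{a}p''$ with $a\in A$ is answered by $q\tran{a}q$ with $(p'',q)\in R$; the contravariant clause holds \emph{vacuously}, since $q$ emits no $B$-labelled transitions. Thus $R$ is a covariant-contravariant simulation with $(p,q)\in R$, giving $p\ccsim q$ and an arrow $(P,p)\to(Q,q)$.

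For the second claim I would argue by contradiction, assuming some $(N,n)$ is weakly final in $\Mod_\mts(A)$, so $m\sqsubseteq n$ for every MTS state $m$. First instantiate this with $m=0$, the single state with no transitions: the refinement clause demands that every may-transition $n\tran{a}_\may n'$ be matched by a may-transition out of $0$, and since $0$ has none, $n$ can have no may-transitions whatsoever; as $\tran{}_\must\subseteq\tran{}_\may$, it therefore has no must-transitions either. Now instantiate weak finality with $m=a!0$ for some $a\in A$, which satisfies $a!0\tran{a}_\must 0$. Then $a!0\sqsubseteq n$ would require some $n'$ with $n\tran{a}_\must n'$, contradicting that $n$ has no must-transitions. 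Hence no weakly final object can exist. Note that this second argument uses $A\neq\emptyset$; I would state that hypothesis explicitly, since if $A=\emptyset$ every MTS is transition-free and each object is trivially weakly final.

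The conceptual crux, and the point I expect to be worth emphasising rather than technically hard, is the asymmetry between the two settings. In an MTS the constraint $\tran{}_\must\subseteq\tran{}_\may$ couples the two transition relations, so pushing away all may-transitions from below (as $0$ does) simultaneously destroys the must-transitions needed to dominate $a!0$. In $\Mod_\cc(A,B,\emptyset)$ the covariant and contravariant relations are wholly independent, which is precisely what allows the universal $A$-looping state above to carry arbitrary covariant behaviour while remaining contravariantly inert. This independence has no counterpart in MTSs, and that is exactly why the translation $\calC$ cannot be upgraded to an equivalence of model categories.
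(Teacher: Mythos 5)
Your proof is correct and follows essentially the same route as the paper: the same one-state covariant-looping LTS (with the total relation as the witnessing simulation) for the positive half, and for the negative half the same pair of test objects --- a transition-free MTS and an MTS with a must-transition --- used in the opposite order to squeeze the hypothetical final object. Your explicit remark that the MTS half needs $A\neq\emptyset$ is a small but genuine refinement of the statement, which the paper leaves implicit.
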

\begin{proof}
First, consider the pair $(F,s)$ where $F$ is the LTS with a single
state $s$ and transitions $s\tran{a} s$ for every $a\in A$. (Note
that, if $A$ is empty, then $(F,s)$ is just the LTS $0$.)  It is
immediate to check that $(F,s)$ is a weakly final object of
$\Mod_\cc(A,B,\emptyset)$.

Now, assume that $(F',s')$ is weakly final in $\Mod_\mts(A)$ and consider 
the following two MTSs:
\begin{itemize}
\item $(M, m)$, with $m$ the only state in $M$ and transitions
 $m\tran{a}_\must m$ (and $m\tran{a}_\may m$) for every $a\in A$.
\item $(N, n)$, with $n$ the only state in $N$ and no transitions.
\end{itemize}
The existence of a morphism, that is a refinement, from $(M,m)$ to
$(F',s')$ implies that, for every $a\in A$, there must be transitions
of the form $s'\tran{a}_\must s'_a$ in $F'$ for some $s'_a$;
therefore, there are also transitions $s\tran{a}_\may s'_a$.  But
then, the morphism from $(N,n)$ to $(F',s')$ requires the existence of
transitions $n\tran{a}_\may n$ in $N$, which do not exist by the
definition of $N$.  Hence, there is no weakly final object in
$\Mod_\mts(A)$.  \qed
\end{proof}
In other words, in the absence of bivariant actions, there is a
universal implementation in the setting of LTSs modulo the
covariant-contravariant simulation preorder. Within that framework,
there is also a universal specification, namely the LTS $(I,s)$ where
$I$ is the LTS with a single state $s$ and transitions $s\tran{b} s$
for every $b\in B$. On the other hand, there is a universal
specification with respect to modal refinements, namely the MTS $U$
from Example~\ref{Ex:U}, but no universal implementation.

\begin{proposition}
There cannot exist an embedding $(\Phi,\alpha, \beta)$ from $\calI_\mts$ into 
$\calI_\cc$ such that $\Phi(A)$ does not have bivariant actions for some $A$.
\end{proposition}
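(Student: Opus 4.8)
The plan is to argue by contradiction, exploiting the mismatch between the two model categories that was isolated in Proposition~\ref{final-object}. Recall from the discussion preceding that proposition that the minimal requirement for a triple $(\Phi,\alpha,\beta)$ to qualify as an embedding is that the model-translation $\beta$ be an equivalence of categories; concretely, that for each signature $A$ the component functor $\beta_A : \Mod_\mts(A) \lra \Mod_\cc(\Phi(A))$ be an equivalence of categories. So I would begin by assuming, towards a contradiction, that such an embedding $(\Phi,\alpha,\beta)$ exists and that there is some set of actions $A$ for which $\Phi(A) = (A',B',\emptyset)$ carries no bivariant actions.

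The heart of the argument is the elementary categorical observation that an equivalence of categories preserves, and reflects, the existence of weakly final objects. I would record this as a short lemma: if $F:\mathcal{D}\lra\mathcal{E}$ is an equivalence with quasi-inverse $G$ and $t$ is weakly final in $\mathcal{E}$, then $G(t)$ is weakly final in $\mathcal{D}$. The verification is routine: given any object $X$ of $\mathcal{D}$, weak finality of $t$ supplies an arrow $F(X)\lra t$ in $\mathcal{E}$; applying $G$ yields $G(F(X))\lra G(t)$, and precomposing with the component at $X$ of the natural isomorphism $\mathrm{id}_{\mathcal{D}}\cong G\comp F$ produces an arrow $X\lra G(t)$, as required.

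Applying this to $\beta_A$ finishes the proof. By the first half of Proposition~\ref{final-object}, the category $\Mod_\cc(A',B',\emptyset)=\Mod_\cc(\Phi(A))$ has a weakly final object, say $(F,s)$. Since $\beta_A$ is assumed to be an equivalence, the lemma transports $(F,s)$ back to a weakly final object of $\Mod_\mts(A)$. This contradicts the second half of Proposition~\ref{final-object}, which asserts that $\Mod_\mts(A)$ has no weakly final object; hence no such embedding can exist.

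I expect the only genuinely delicate point to be the faithful unpacking of the definition of `embedding', namely pinning down that it indeed forces each $\beta_A$ to be an equivalence of categories rather than merely a fully faithful or an essentially surjective functor; once that is settled, the categorical transfer of weak finality is standard, and the substantive content has already been discharged by Proposition~\ref{final-object}.
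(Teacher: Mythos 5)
Your proposal is correct and follows exactly the paper's own argument: assume the embedding exists, note that $\beta_A$ must then be an equivalence between $\Mod_\mts(A)$ and $\Mod_\cc(\Phi(A))$, and derive a contradiction from Proposition~\ref{final-object} because equivalences of categories preserve weakly final objects. The only difference is that you spell out the routine verification that equivalences transport weakly final objects along a quasi-inverse, which the paper takes for granted.
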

\begin{proof}
If such an embedding existed then $\beta_A$, which is the natural
transformation translating MTSs into LTSs and refinement relations
into covariant-contravariant simulations, would be an equivalence
between $\Mod_\mts(A)$ and $\Mod_\cc(\Phi(A))$.  Since equivalences of
categories preserve weakly final objects, the result follows from
Proposition~\ref{final-object}.  \qed
\end{proof}
We will now argue that $\calI_\mts$ cannot be embedded into 
$\calI_\cc$ even in the presence of bivariant actions.
Recall that an object in a category is {\em weakly initial} if there
is at least one arrow from it into any other object.

\begin{proposition}\label{initial-object}
$\Mod_\mts(A)$ has weakly initial objects but $\Mod_\cc(A,B,C)$ does not if
$C\neq\emptyset$.  
\end{proposition}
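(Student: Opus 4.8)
The plan is to handle the two halves of the statement separately: to exhibit a weakly initial object of $\Mod_\mts(A)$ explicitly, and then to rule out the existence of one in $\Mod_\cc(A,B,C)$ by testing a hypothetical candidate against two carefully chosen targets that exploit the symmetric matching obligation imposed by bivariant actions.

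For the positive part I would simply invoke the loosest specification $(U,u)$ from Example~\ref{Ex:U}. Since $u \sqsubseteq p$ holds for every state $p$ of any MTS over $A$, for each object $(N,n)$ of $\Mod_\mts(A)$ there is a refinement $R$ with $u \mathrel{R} n$; by the definition of $\Mod_\mts$ this refinement \emph{is} a morphism from $(U,u)$ to $(N,n)$. Hence $(U,u)$ is weakly initial and nothing further is required.

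For the negative part, fix a bivariant action $c\in C$, which exists because $C\neq\emptyset$, and argue by contradiction: suppose $(P_0,p_0)$ were weakly initial in $\Mod_\cc(A,B,C)$. I would then confront it with two targets. Let $(Q_0,q_0)$ be the one-state LTS with no transitions, and let $(Q_1,q_1)$ be the LTS with the single transition $q_1\tran{c}q_1'$. Weak initiality supplies morphisms, that is, covariant-contravariant simulations, so $(P_0,p_0)\ccsim(Q_0,q_0)$ and $(P_0,p_0)\ccsim(Q_1,q_1)$. Since $c$ is bivariant, both clauses of Definition~\ref{Def:CCsim} apply to it: the covariant clause applied to the first simulation forces $p_0$ to have \emph{no} outgoing $c$-transition, because $q_0$ has none to match it, whereas the contravariant clause applied to the second simulation forces $p_0$ to have \emph{some} outgoing $c$-transition, in order to match $q_1\tran{c}q_1'$. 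These two obligations are incompatible, giving the desired contradiction.

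The argument is short, and the only genuinely delicate point — the main obstacle, such as it is — is to keep the two directions of the bivariant matching condition straight and to verify that the \emph{same} action $c$ triggers the conflicting requirements against the two targets. Everything else is an immediate unfolding of the definitions; in particular, no image-finiteness hypothesis and none of the logical-characterization results are needed here.
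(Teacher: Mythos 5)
Your proof is correct and follows essentially the same route as the paper: the loosest specification $(U,u)$ serves as the weakly initial object of $\Mod_\mts(A)$, and the nonexistence in $\Mod_\cc(A,B,C)$ is obtained by playing a hypothetical weakly initial object against one target with a $c$-transition and one without, so that the two clauses of the bivariant matching condition clash. The only (immaterial) difference is the order in which the two targets are confronted and the shape of the $c$-transition in the second target.
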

\begin{proof}
Consider the MTS $(I,s)$ defined by $s\tran{a}_\may s$ for all $a\in
A$. We have already seen that it is weakly initial.

Now, assume that $(I',s')$ is weakly initial in $\Mod_\cc(A,B,C)$ and let
$c\in C$.
We define the following LTSs:
\begin{itemize}
\item $(P,p)$ with $p\tran{c} p$, and 
\item $(Q,q)$ with a single state $q$ and no transitions.
\end{itemize}
A morphism from $(I',s')$ to $(P,p)$ requires a transition $s'\tran{c}
s''$ in $I'$ for some $s''$. But then, a morphism from $(I',s')$ to
$(Q,q)$ requires a transition $q\tran{c} q$, which does not exist by
definition.  Therefore, $(I',s')$ cannot exist.  \qed
\end{proof}

\begin{proposition}
There cannot exist an embedding $(\Phi,\alpha, \beta)$ from $\calI_\mts$ into 
$\calI_\cc$ such that $\Phi(A)$ has bivariant actions for some $A$.
\end{proposition}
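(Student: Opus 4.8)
The plan is to reuse verbatim the strategy of the immediately preceding impossibility proposition, replacing the appeal to weakly \emph{final} objects (Proposition~\ref{final-object}) by the dual appeal to weakly \emph{initial} objects (Proposition~\ref{initial-object}). The key structural fact, already exploited in the text, is that any embedding must make the model-translating natural transformation $\beta_A \colon \Mod_\mts(A)\lra\Mod_\cc(\Phi(A))$ an equivalence of categories; this is the minimal requirement on which all the competing notions of subinstitution agree.

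First I would record, once and for all, that an equivalence of categories preserves the existence of weakly initial objects. Indeed, if $\beta_A$ is an equivalence with quasi-inverse $G$ and $(I,s)$ is weakly initial in $\Mod_\mts(A)$, then for every object $Y$ of $\Mod_\cc(\Phi(A))$ there is an arrow $(I,s)\lra G(Y)$ by weak initiality; applying $\beta_A$ and composing with the natural isomorphism $\beta_A\comp G\cong \mathrm{Id}$ yields an arrow $\beta_A(I,s)\lra Y$, so that $\beta_A(I,s)$ is weakly initial. This is the exact dual of the preservation of weakly final objects used in the previous proof.

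Next I would assume, towards a contradiction, that an embedding $(\Phi,\alpha,\beta)$ exists with $\Phi(A)=(A',B',C')$ and $C'\neq\emptyset$ for some $A$. By Proposition~\ref{initial-object}, $\Mod_\mts(A)$ possesses a weakly initial object, namely the MTS $(I,s)$ with $s\tran{a}_\may s$ for all $a\in A$. By the first step, its image under the equivalence $\beta_A$ would be a weakly initial object of $\Mod_\cc(\Phi(A))$. But since $C'\neq\emptyset$, the same Proposition~\ref{initial-object} asserts that $\Mod_\cc(\Phi(A))$ has no weakly initial object, a contradiction. Hence no such embedding can exist.

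I expect no serious obstacle here: the whole difficulty has been front-loaded into Proposition~\ref{initial-object}, whose proof supplies the explicit pair $(P,p)$ (with $p\tran{c}p$ for a bivariant $c\in C'$) and $(Q,q)$ (with no transitions) witnessing the failure of weak initiality. The only point deserving a line of care is the justification that ``embedding'' forces $\beta_A$ to be an equivalence, which is a matter of unwinding the definition of subinstitution adopted in the paragraph preceding Proposition~\ref{final-object}, together with the elementary preservation lemma recorded in the first step.
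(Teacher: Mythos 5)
Your proposal is correct and follows essentially the same route as the paper: the paper's proof likewise observes that an embedding forces $\beta_A$ to be an equivalence of categories, that equivalences preserve weakly initial objects, and that this contradicts Proposition~\ref{initial-object}. The only difference is that you spell out the preservation lemma explicitly, which the paper leaves implicit.
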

\begin{proof}
Such an embedding $\beta_A$, if it existed, would be an equivalence of categories between $\Mod_\mts(A)$ and 
$\Mod_\cc(\Phi(A))$. This cannot hold by Proposition~\ref{initial-object}
because equivalences of categories preserve weakly initial objects.
\qed
\end{proof}
By the way, next we also prove that there is no embedding in the reverse
direction, that is, from $\calI_\cc$ into $\calI_\mts$. 

\begin{proposition}\label{prop:no-embedding}
There exists no embedding from $\calI_\cc$ into $\calI_\mts$.
\end{proposition}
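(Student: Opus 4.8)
The plan is to use the same ``weakly final / weakly initial object'' strategy that settled the non-existence of embeddings in the other direction, exploiting the distinguishing features that separate the two institutions. An embedding $(\Phi,\alpha,\beta)$ from $\calI_\cc$ into $\calI_\mts$ would require, in particular, that for each signature $(A,B,C)$ the model-translating component $\beta_{(A,B,C)}$ be an equivalence of categories between $\Mod_\cc(A,B,C)$ and $\Mod_\mts(\Phi(A,B,C))$. Since equivalences of categories preserve all categorical properties that can be phrased purely in terms of objects and arrows---in particular the existence of weakly final and weakly initial objects---it suffices to exhibit a single signature $(A,B,C)$ such that the category $\Mod_\cc(A,B,C)$ has a categorical property (presence or absence of a weakly initial/final object) that \emph{no} category of the form $\Mod_\mts(A')$ can have.

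First I would recall from Proposition~\ref{final-object} and its surrounding discussion that, in the absence of bivariant actions, $\Mod_\cc(A,B,\emptyset)$ has \emph{both} a weakly final object (the universal implementation $(F,s)$) and a weakly initial object (the universal specification $(I,s)$ with $s\tran{b}s$ for $b\in B$). By contrast, Proposition~\ref{final-object} establishes that $\Mod_\mts(A')$ \emph{never} has a weakly final object, for any $A'$. So I would pick any covariant-contravariant signature with $C=\emptyset$, say $(A,B,\emptyset)$ with $A$ or $B$ nonempty, and argue: if an embedding existed, $\beta_{(A,B,\emptyset)}$ would be an equivalence carrying the weakly final object of $\Mod_\cc(A,B,\emptyset)$ to a weakly final object of $\Mod_\mts(\Phi(A,B,\emptyset))$; but the latter category has none, a contradiction. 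I would double-check the one edge case, namely the signature in which all three action sets are empty: there $\Mod_\cc(\emptyset,\emptyset,\emptyset)$ and $\Mod_\mts(\emptyset)$ might coincide (both reduce to plain state-with-trivial-transition objects), so the argument must be run at a signature with at least one action, which is why choosing $A$ (or $B$) nonempty matters.

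The main obstacle I anticipate is a subtle one about what ``embedding'' is required to give us at \emph{which} signatures. An embedding comes with a \emph{fixed} signature functor $\Phi:\Sign_\cc\lra\Sign_\mts$, and the equivalence requirement bites at every object $(A,B,C)$ of $\Sign_\cc$ simultaneously. The clean move is therefore to locate a single source signature whose model category has a robust categorical feature, and the weakly-final-object feature of $\Mod_\cc(A,B,\emptyset)$ is ideal precisely because Proposition~\ref{final-object} already proved its total absence on the MTS side, uniformly in $A'$. Thus I need not know anything about the value $\Phi(A,B,\emptyset)$ beyond the fact that it is \emph{some} object of $\Sign_\mts$, i.e.\ some set of actions, and the impossibility follows regardless of which set it is. Writing this out, the proof is short:

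\begin{proof}
Suppose, towards a contradiction, that $(\Phi,\alpha,\beta)$ is an
embedding from $\calI_\cc$ into $\calI_\mts$, so that for each
signature $(A,B,C)$ the component $\beta_{(A,B,C)}$ is an equivalence
of categories between $\Mod_\cc(A,B,C)$ and
$\Mod_\mts(\Phi(A,B,C))$. Choose a signature of the form
$(A,B,\emptyset)$ with $A\cup B\neq\emptyset$. As shown in the proof
of Proposition~\ref{final-object}, the category
$\Mod_\cc(A,B,\emptyset)$ possesses a weakly final object, namely the
LTS $(F,s)$ with a single state $s$ and transitions $s\tran{a}s$ for
each $a\in A$. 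Since equivalences of categories preserve weakly final
objects, $\Mod_\mts(\Phi(A,B,\emptyset))$ would then also possess a
weakly final object. However, Proposition~\ref{final-object} shows
that $\Mod_\mts(A')$ has no weakly final object for any set of actions
$A'$, and in particular none for $A'=\Phi(A,B,\emptyset)$. This
contradiction shows that no such embedding can exist.  \qed
\end{proof}
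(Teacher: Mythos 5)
Your proof is correct and follows essentially the same route as the paper's: the paper likewise observes that an embedding would force $\beta_{(A,B,\emptyset)}$ to be an equivalence between $\Mod_\cc(A,B,\emptyset)$ and $\Mod_\mts(\Phi(A,B,\emptyset))$, which is impossible by Proposition~\ref{final-object} because equivalences of categories preserve weakly final objects. Your added attention to choosing a signature with $A\cup B\neq\emptyset$ only makes the argument slightly more careful than the published one.
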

\begin{proof}
If such an embedding $(\Phi,\alpha, \beta)$ existed, 
$\beta_{(A,B,\emptyset)}$ would be an equivalence between
$\Mod_\cc(A,B,\emptyset)$ and $\Mod_\mts(\Phi(A,B,\emptyset))$, which
is not possible by Proposition~\ref{final-object} because equivalences
preserve weakly final objects.  \qed
\end{proof}
\label{instconj-start}
In \cite{AcetoEtAl10} we conjectured that there is not even an institution morphism
from $\calI_\cc$ to $\calI_\mts$; we now make this claim precise.

If we are not concerned about how contrived this morphism can be, then a 
trivial one can indeed be defined.
Let $\Phi$ map any signature to the singleton set $\{1\}$,
$\beta$ map any LTS to a MTS with a single state $s$ and transitions
$s\tran{1}_\may s$ and $s\tran{1}_\must s$, and $\alpha$ be recursively 
defined by $\alpha([1]\varphi)=\alpha(\varphi)$, $\alpha(\langle 1\rangle
\varphi) = \alpha(\varphi)$, and as expected in the remaining cases.
It is then a simple exercise to check that $(\Phi,\alpha,\beta)$
satisfies the conditions to be an institution morphism, however trivial and
artificial it may be.

Taking Proposition~\ref{inst-morphism:prop} as a model, and recalling the good 
properties of the function $\calM$ studied in Section~\ref{cc-to-modal:sec}, 
a ``natural'' morphism from $\calI_\cc$ to $\calI_\mts$ 
would be expected to satisfy $\beta(M,s) = (\calM(M),s)$.
We now argue that such a morphism cannot exist.

Assume that $(A,B,C)\in \Sign_\cc$, let $a\in A$ be any covariant action, and
$[a]\bot$ a Boudol-Larsen modal formula: how 
should $\alpha([a]\bot)$ be defined?
By the requirements of institution morphisms, the following equivalence
must hold for all LTS $M$:
\[
(M,s)\models_\cc \alpha([a]\bot) \iff \beta(M,s)\models_\mts [a]\bot.
\]
The right-hand side is true iff $(\calM(M),s')\models_\mts \bot$ for
all $s\tran{a}_\may s'$ in $\calM(M)$ which, by construction,
only holds if there is no $s'$ in $M$ with $s\tran{a} s'$ in $M$.
Therefore, $\alpha([a]\bot)$ has to be such that:
\begin{itemize}
\item $(M,s)\models_\cc \alpha([a]\bot)$ if there is no $s\tran{a} s'$ in $M$, 
 but
\item $(M,s)\not\models_\cc \alpha([a]\bot)$ if there is $s\tran{a} s'$ in $M$.
\end{itemize}
The immediate candidate would be $[a]\bot$ itself, now considered as a 
covariant-contravariant modal formula, but this is not possible since
in this framework the modality $[\_]$ requires a contravariant action.
Actually, no such formula can be defined which means that no institution
morphism with $\beta(M,s) = (\calM(M),s)$ can exist.
\label{instconj-end}

Certainly, the given definitions of the institutions $\calI_\cc$ and
$\calI_\mts$ are not the only possible ones. We have also studied more general
institutions $\calI'_\cc$ and $\calI'_\mts$ where the functions $f$ in the
signature are replaced by relations but, unfortunately, the negative results
above remain valid.

\begin{itemize}
\item $\Sign'_\cc$ has as objects triples $(A,B,C)$ of pairwise disjoint sets
  and morphisms are relations $R\subseteq (A\times A')\cup (B\times B')\cup
  (C\times C')$. 

\item $\sen'_\cc(A,B,C)$ is the set of formulae in the logic characterizing the
  covariant-contravariant simulation preorder, with $A$ the set of covariant
  actions, $B$ the set of contravariant actions, and $C$ the set of bivariant
  actions.  
  For each morphism $R$ and formula $\varphi$, the formula
  $\sen'_\cc(R)(\varphi)$ is obtained from $\varphi$ by ``replacing'' 
  each action $a$ with every $a'$ such that $a R a'$. 
  More precisely, $\sen'_\cc(R)(\varphi)$ is defined recursively so that 
  $\langle a\rangle\varphi'$ becomes $\bigvee_{a R a'}\langle a'\rangle
  \sen'_\cc(R)(\varphi')$ and $[b]\varphi'$ becomes $\bigwedge_{b R b'}[b']
  \sen'_\cc(R)(\varphi')$.

\item $\Mod'_\cc(A,B,C)$ is the category of LTSs over the set of actions 
 $A\cup B\cup C$, with a distinguished state; a morphism from $(P,p)$ to
 $(Q,q)$ is a covariant-contravariant simulation $S$ such that $(p,q)\in S$.

 Now, if $R: (A,B,C) \lra (A',B',C')$ is a $\Sign'_\cc$-signature morphism, then
\[
\Mod'_\cc(R): \Mod'_\cc(A',B',C')\lra \Mod'_\cc(A,B,C)
\]
 maps $P$ to $R(P)$ and a simulation $S:P\lra Q$ to $R(S): R(P)\lra R(Q)$,
 where: 
 \begin{itemize}
 \item The set of states of $R(P)$ is the same as that of $P$, and the
   distinguished state remains the same.
 \item $p\tran{a} p'$ in $R(P)$ if $a R a'$ and $p\tran{a'} p'$ in $P$.
 \item $R(S)$ coincides with $S$.
 \end{itemize}

\item $(P,s)\models'_\cc\varphi$ if $(P,s)\models \varphi$ using the notion
 of satisfaction associated with the logic for the covariant-contravariant 
 simulation preorder given in Definition~\ref{Def:formulaeCC}.
\end{itemize}
That is, signature morphisms become arbitrary relations that `preserve' the
modality of the actions. 

Obviously, the institution $\calI_\mts$ could be subjected to an analogous
generalization; then, it would be a simple exercise to translate to this
new setting the results proved in
Propositions~\ref{prop:I_cc}--\ref{prop:no-embedding}.

\section{Conclusions and future work}\label{Sect:future}

In this paper we have studied the relationships between three notions
 of behavioural preorders that have been proposed in the literature:
 refinement over modal transition systems, and the
 covariant-contravariant simulation and the partial bisimulation
 preorders over labelled transition systems. We have provided mutual
 translations between modal transition systems and labelled transition
 systems that preserve, and reflect, refinement and the
 covariant-contravariant simulation preorder, as well as the the modal
 properties that can be expressed in the logics that characterize
 those preorders. We have also offered a translation from labelled
 transition systems modulo the partial bisimulation preorder into the
 same model modulo the covariant-contravariant simulation preorder,
 together with some evidence that the former model is less expressive
 than the latter. Finally, in order to gain more insight into the
 relationships between modal transition systems modulo refinement and
 labelled transition systems modulo the covariant-contravariant
 simulation preorder, we have also phrased and studied their
 connections in the very general abstract framework of institutions.

The work presented in the study opens several interesting avenues for
future research. Here we limit ourselves to mentioning a few research
directions that we plan to pursue in future work.

First of all, it would be interesting to study the relationships
between the LTS-based models we have considered in this article and
variations on the MTS model surveyed in, for
instance,~\cite{MTSBEATCS}. In particular, the third author recently
contributed in~\cite{FecherFLS09} to the comparison of several
refinement settings, including modal and mixed transition systems. The
developments in that paper offer a different approach to the
comparison and application of the formalisms studied in this article.

In~\cite{FabregasEtAl10-logics}, three of the authors gave a
ground-complete axiomatization of the covariant-contravariant
simulation preorder over the language BCCS~\cite{Mi89}.  It would be
interesting to see whether the translations between MTSs and LTSs we
have provided in this paper can be used to lift that axiomatization
result, as well as results on the nonexistence of finite
(in)equational axiomatizations, to the setting of modal transition
systems modulo refinement, using the BCCS-like syntax for MTSs given
in~\cite{BoudolL1992} and used in Section~\ref{Sect:charforms} of this
paper. We also intend to study whether our translations can be used
to obtain characteristic-formula
constructions~\cite{BoudolL1992,GrafS86a,SteffenI94} for one model
from extant results on the existence of characteristic formulae for
the other. In the setting of the finite LTSs that are the image of MTS
terms via $\calC$, this has been achieved in
Section~\ref{Sect:charforms} of this study.

The existence of characteristic formulae allows one to reduce checking
the existence of a behavioural relation between two processes to a
model checking question. Conversely, the main result
from~\cite{BoudolL1992} offers a complete characterization of the
model checking questions of the form $(M,m)\models \varphi$, where $M$
is an MTS and $\varphi$ is a formula in the logic for MTSs considered
in this paper, that can be reduced to checking for the existence of a
refinement between $(M_\varphi,m_\varphi)$ and $(M,m)$, where
$(M_\varphi,m_\varphi)$ is an MTS with a distinguished state that
`graphically represents' the formula
$\varphi$. In~\cite{AcetoEtAl11}, we offered a characterization of
the logical specifications that can be `graphically represented' by
LTSs modulo the covariant-contravariant simulation preorder and
partial bisimilarity. This result applies directly to LTSs whose
signature contains no bivariant actions. Such a characterization may
shed further light on the relative expressive power of the two
formalisms and may give further evidence of the fact that LTSs modulo
the covariant-contravariant simulation preorder are, in some suitable
formal sense, more expressive than LTSs modulo partial bisimilarity.

Last, but not least, the development of the notion of partial
bisimulation in~\cite{Baetenetal,Baetenetal1} has been motivated by
the desire to develop a process-algebraic model within which one can
study topics in the field of {\em supervisory
control}~\cite{RW87}. Recently, MTSs have been used as a suitable
model for the specification of service-oriented applications, and
results on the supervisory control of systems whose specification is
given in that formalism have been presented in,
e.g.,~\cite{DaroundeauDM2010,FP2007}. It is a very interesting area
for future research to study whether the mutual translations between
MTSs modulo refinement and LTSs modulo the covariant-contravariant
simulation preorder can be used to transfer results on supervisory
control from MTSs to LTSs. One may also wish to investigate directly
the adaptation of the supervisory control theory of Ramadge and Wonham
to the enforcement of specifications given in terms of LTSs modulo the
covariant-contravariant simulation preorder.

\paragraph{Acknowledgments} 
We thank the three anonymous reviewers and the guest editors for their
insightful comments that led to improvements in the paper.

\def\polhk#1{\setbox0=\hbox{#1}{\ooalign{\hidewidth
  \lower1.5ex\hbox{`}\hidewidth\crcr\unhbox0}}}

\end{document}